\newcommand{\lv}[1]{{
#1}}
\newcommand{\lvbis}[1]{{
#1}}
\title{Certificates in P and Subquadratic-Time Computation of Radius, Diameter, and all Eccentricities in Graphs%
\thanks{This work was supported by the French ANR projects ANR-22-CE48-0001 (TEMPOGRAL), ANR-24-CE48-4377 (GODASse) and ANR-23-PEIA-005 (REDEEM).}
\thanks{This paper was first published in SODA (see \url{https://dx.doi.org/10.1137/1.9781611978322.70}) and then in Algorithmica with few additional results (see \url{https://doi.org/10.1007/s00453-025-01344-6}). This version additionally corrects values that were incorrectly reported in the experiments (see Table~\ref{tab:all}).}
}
\author{Feodor Dragan\thanks{Kent State University, Department of Computer Science, Kent, Ohio,  USA, \texttt{dragan@cs.kent.edu}.}
  \and Guillaume Ducoffe\thanks{University of Bucharest, Faculty of Mathematics and Computer Science, and National Institute for Research and Development in Informatics, Romania, \texttt{guillaume.ducoffe@ici.ro}.}
  \and Michel Habib\thanks{IRIF, Université Paris Cité \& CNRS, Paris,  France, \texttt{habib@irif.fr}.}
  \and Laurent Viennot\thanks{Inria, DI ENS, Paris, France, \texttt{laurent.viennot@inria.fr}.}
  }
\newcommand{\commentout}[1]{}
\def\guillaume#1{{#1}}
\long\def\jump#1\finjump{}
\long\def\beglongversion#1\endlongversion{#1}
\newtheorem{theorem}{Theorem}
\newtheorem{lemma}{Lemma}
\newtheorem{claim}{Claim}
\newtheorem{question}{Question}
\newtheorem{corollary}{Corollary}
\newtheorem{proposition}{Proposition}
\def\Box{\hbox{\hskip 1pt \vrule width 4pt height 8pt depth 1.5pt \hskip 1pt}}
\newenvironment{proof}{\medskip\noindent\textbf{Proof.}}{{}\hfill$\Box$\\}
\newcommand{\card}[1]{\left|{#1}\right|}
\newcommand{\ceil}[1]{\left\lceil{#1}\right\rceil}
\newcommand{\set}[1]{\{{#1}\}}
\let\eps=\varepsilon
\newcommand{\C}{{\cal C}}
\newcommand{\D}{{\cal D}}
\newcommand{\R}{{\cal R}}
\renewcommand{\R}{{\cal R}}
\renewcommand{\S}{{\cal S}}
\newcommand{\N}{\mathbb{N}}
\DeclareMathOperator{\argmin}{argmin}
\DeclareMathOperator{\argmax}{argmax}
\DeclareMathOperator{\rad}{rad}
\DeclareMathOperator{\diam}{diam}
\DeclareMathOperator{\ecc}{ecc}
\DeclareMathOperator{\eccuntight}{ecc-untight}
\DeclareMathOperator{\eccslack}{ecc-slack-far}
\DeclareMathOperator{\distfrom}{DistFrom}
\DeclareMathOperator{\minecc}{minES}
\DeclareMathOperator{\argminecc}{argminES}
\DeclareMathOperator{\antipode}{Antipode}
\newcommand{\ovB}{\overline{B}}
\let\subsec=\subsection
\let\sec=\section
\begin{document}

\maketitle

\begin{abstract}
In the context of fine-grained complexity, we investigate the notion of certificate enabling faster polynomial-time algorithms. We specifically target radius (minimum eccentricity), diameter (maximum eccentricity), and all-eccentricity computations for which quadratic-time lower bounds are known under plausible conjectures. In each case, we introduce a notion of certificate as a specific set of nodes from which appropriate bounds on all eccentricities can be derived in subquadratic time when this set has sublinear size. The existence of small certificates for radius, diameter and all eccentricities is a barrier against SETH-based lower bounds for these problems. We indeed prove that for graph classes with certificates of bounded size, there exist randomized subquadratic-time algorithms for computing the radius, the diameter, and all eccentricities respectively.

Moreover, these notions of certificates are tightly related to algorithms probing the graph through one-to-all distance queries and allow to explain the efficiency of practical radius and diameter algorithms from the literature.
In particular, our formalization enables a novel primal-dual analysis of a classical approach for diameter computation. Based on our novel insights for these problems, we introduce several new algorithmic techniques related to eccentricity computation and propose algorithms for radius, diameter and all eccentricities with theoretical guarantees with respect to certain graph parameters. This is complemented by experimental results on various types of real-world graphs showing that these parameters appear to be low in practice. Finally, we obtain refined results in the case where the input graph is a power-law random graph, has low doubling dimension, has low hyperbolicity, is chordal, satisfies some Helly-type property, or has bounded asteroidal number.
\end{abstract}

\bigskip
\textbf{Keywords:} certificate, fine-grained complexity, diameter, radius, all eccentricities, algorithm.

\pagebreak
\tableofcontents

\pagebreak
\sec{Introduction}

We investigate the notion of certificate in P, or more precisely how the existence of some small specific  certificate for a given problem can enable a polynomial-time algorithm with smaller exponent compared to the situation where no specific certificate is known. This question appears particularly interesting about problems in P where the best exponent of a polynomial-time algorithm is not completely settled. In particular, we target the problems of radius and diameter computation in graphs, as there is a large gap between known quadratic-time lower-bounds~\cite{RV13,AWV16} based on the Strong Exponential Time Hypothesis (SETH) or other plausible conjectures, and the efficiency of exact practical algorithms that appear to succeed in computing the radius~\cite{BCHKMT15} and diameter~\cite{TK11,CGHLM13,AIK15} of various types of real-world graphs with few Breadth First Search (BFS) traversals~\cite{BackstromBRUV2012}. The efficiency and the correctness of these algorithms suggest that such certificates exist in real-world graphs. \lv{Diameter computation is typically used when analyzing large real world graphs. For example, the iFub algorithm \cite{CGHLM13} was used to compute the diameter of Facebook graph~\cite{BBRUV12}. This algorithm and others are also part of the undirected graphs library of SageMath~\cite{sagemathGD2025}.}

The main approach of these algorithms dates back to \cite{TK11} and consists in maintaining for each vertex $v$ of the input graph $G$ a lower bound $\underline{e}(v)$ and/or an upper bound $\overline{e}(v)$ of its eccentricity $e(v)=\max_{w\in V(G)}d_G(v,w)$ which is the maximum distance of any node from $v$. Recall that the radius is the minimum eccentricity, while the diameter is the maximum eccentricity. Each time a BFS traversal is performed from a node $x$, the eccentricity $e(x)$ is obtained, and for each vertex $v$, its lower bound can be updated to $\max\set{\underline{e}(v),d_G(v,x)}$ while its upper bound can be replaced by $\min\set{\overline{e}(v),d_G(v,x)+e(x)}$ by triangle inequality. These practical algorithms~\cite{AIK15,BCHKMT15,CGHLM13,TK11,TK13} perform BFS traversals from specifically chosen vertices $x$ until some stopping condition is met. If some vertex $x$ with eccentricity $R$ was found and if all lower bounds are greater or equal to $R$, we can then conclude that the input graph has radius $R$. Similarly, if some vertex with eccentricity $D$ was encountered while at some point all eccentricity upper-bounds appear to be $D$ or less, we can stop and conclude that the diameter is $D$. A typical choice of appropriate sources $x$ for BFS traversals is to alternate between a vertex with minimum lower bound and a vertex with maximum upper bound~\cite{TK13,BCHKMT15}. 
In a seminal work on understanding the efficiency of practical algorithms on real-world graphs~\cite{BCT17}, an analysis of a randomized variant of this algorithm within power-law random graphs bounds the number of sources used to $o(n)$.

We see the set $X$ of sources $x$ from which such an algorithm performs BFS traversals before returning values $R$ and $D$ as a certificate that the input graph has radius $R$ and diameter $D$. Indeed, if $X$ is given, then we can perform a BFS traversal from each vertex $x\in X$, update lower bounds accordingly and check that the radius is at least $R$ in $O(|X|m)$ time where $m=|E(G)|$ denotes the number of edges in the input graph $G$. If $X$ also contains a vertex of eccentricity $R$, we can then certify that the radius of the input graph is indeed $R$. As an example, for odd $k$, we will see that a $k\times k$ square grid has a five-node certificate (its center and its four corners).
Similarly, we can check that $X$ contains a vertex of eccentricity $D$ and that all upper bounds are lower or equal to $D$ for certifying that the diameter is indeed $D$ in $O(|X|m)$ time. In the worst case, the algorithm may use $X=V(G)$ and require quadratic time, but it appears that various type of real-world graph do have such a set $X$ with few dozens of vertices. Coming back to a $k\times k$ square grid with odd $k$, it has a two-node certificate (its center and a corner).
It should be noted that this notion of certificate is independent of any algorithm: it is a graph property to have small or large certificates. 

Note that the existence of a certificate of size $o(n)$ enables a non-deterministic subquadratic-time algorithm for computing the diameter (by first guessing the certificate and then computing the appropriate bounds on all eccentricities). Following the results of~\cite{CGIMPS16}, this implies that for any graph class with truly sublinear certificates, the existence of meaningful lower bounds for diameter computation based on SETH is unlikely.
Furthermore, this opens the possibility of breaking the quadratic barrier for (deterministic or probabilistic) diameter computation (or similarly radius computation) if one can efficiently find a small certificate when there exists one.
We thus ask the following.
\begin{quote}
\textbf{Main question:} For a given class of graphs, does the existence of small specific certificates coincides with the existence of a truly subquadratic-time algorithm for computing either the radius, or the diameter, or even all eccentricities?
\end{quote}
We believe that the same question could be investigated in other classical problems studied in fine-grained complexity such as for example the All-Pairs Shortest Path Problem (APSP) in the light of practical algorithms for fast shortest-path computation such as Contraction Hierarchies~\cite{GeisbergerSSD2008} or Hub Labeling~\cite{AbrahamDGW2012}. 

\medskip

The paper is presented in the context of unweighted undirected graphs but all the notions and algorithms extend to the weighted and/or directed cases as shown in a short experimental part.
In the directed setting, we restrict ourselves to the strongly directed case where diameter is again well defined as the maximum (outward or inward) eccentricity while two notions of radius arise as minimum outward (resp. inward) eccentricity. A more general setting for diameter in weakly-connected directed graphs is explored in~\cite{AIK15,BCHKMT15}. We believe that our certificate notions could naturally be extended to this setting.

\subsec{Our contributions}

Our general idea of certificate consists in a set of nodes such that the distances from these nodes to all nodes (rather than all-to-all pairs) allow to deduce the value of either the radius, or the diameter, or even all eccentricities, with certainty. It is thus tightly related to \emph{one-to-all distance based algorithms}, that is algorithms probing the graph through one-to-all distances queries where a query for a vertex $x$ returns the vector of all distances from $x$. Several concepts of certificates for radius, diameter and all eccentricities arise from this general idea. We now detail them and our related results. When considering a graph $G=(V,E)$, we assume that it is connected, and we generally let $n=|V|$ and $m=|E|$ denote its number of vertices and edges respectively. We use the terms vertex and node interchangeably.

\paragraph*{Radius.}
Given a graph $G=(V,E)$ with radius $\rad(G)=r$, we define a \emph{radius certificate} as a set $L$ of nodes such that any node of $G$ is at distance at least $r$ from some node of $L$. 
Equivalently, it can be defined as a covering of the node set $V$ with complementary of open balls of radius $\rad(G)$ (excluding nodes at distance $\rad(G)$). As an example, for odd $k$, a $k\times k$ square grid has radius $k$ and its four corners form a radius certificate. 
(See also the set $R$ in the example of Figure~\ref{fig:bowtie}.)
In the case of radius, we obtain the following equivalence as an answer to our main question (as a consequence of Theorems~\ref{th:lb} and~\ref{thm:rad-cert-random}).

\begin{theorem}\label{th:main-radius}
    Given a class of graphs $\cal G$ and a sublinear function $\ell(n)$, we have:
    \begin{itemize}
        \item The existence of a one-to-all distance based randomized (Monte Carlo) algorithm for computing radius within $\cal G$ and running with $\ell(n)$ queries implies that every graph of $\cal G$ with $n$ vertices has a radius certificate of size $O(\ell(n))$.
        \item If every graph with $n$ vertices in $\cal G$ has a radius certificate of size $\ell(n)$ at most, then there exists a one-to-all distance based randomized (Monte Carlo) algorithm computing within $\cal G$ the radius and a radius certificate of size $O(\ell(n)\log n)$ through $O(\ell(n)\log^3n)$ queries (in $O(\ell(n)m\log^3n)$ time) with high probability.
    \end{itemize}
\end{theorem}

The problem of finding a minimum radius certificate is indeed shown to be equivalent to minimum set cover. It is thus NP-hard while $O(\log n)$-approximation (only) is doable in polynomial time. Compared to set cover, it has an additional difficulty: the sets are not directly available and computing all of them would require quadratic time at least. This setting for set cover has been considered in the literature~\cite{SM10}, and we show how these prior results can be exploited in the design of fast randomized approximation algorithms for minimum-size radius certificate.

\paragraph*{Diameter.}
Given a graph $G=(V,E)$ with diameter $\diam(G)=D$, we define a \emph{diameter certificate} as a set $U$ of nodes such that any node $v$ is at distance at most $D-e(x)$ from a node $x\in U$. In other words, it corresponds to a covering of $V$ with balls $B[x,D - e(x)]$ of radius $D - e(x)$ where $e(x)$ is the eccentricity of the center $x$ of the ball.
As an example, for odd $k$, a $k\times k$ square grid has diameter $2k$ and its center forms a diameter certificate. (See also the set $D$ in the example of Figure~\ref{fig:bowtie}.) Again, the problem of finding a minimum diameter certificate is equivalent to minimum set cover. However, $O(\log n)$-approximation in subquadratic time appears much more difficult as the radii of the balls considered depend from the (unknown) eccentricities of their centers.
Nevertheless, it is still possible to show that the existence of truly sublinear diameter certificates enables a truly subquadratic-time algorithm for diameter computation (see Theorem~\ref{thm:diam-cert-random}).

\begin{figure}[t]
  \begin{center}
    \includegraphics[width=\textwidth]{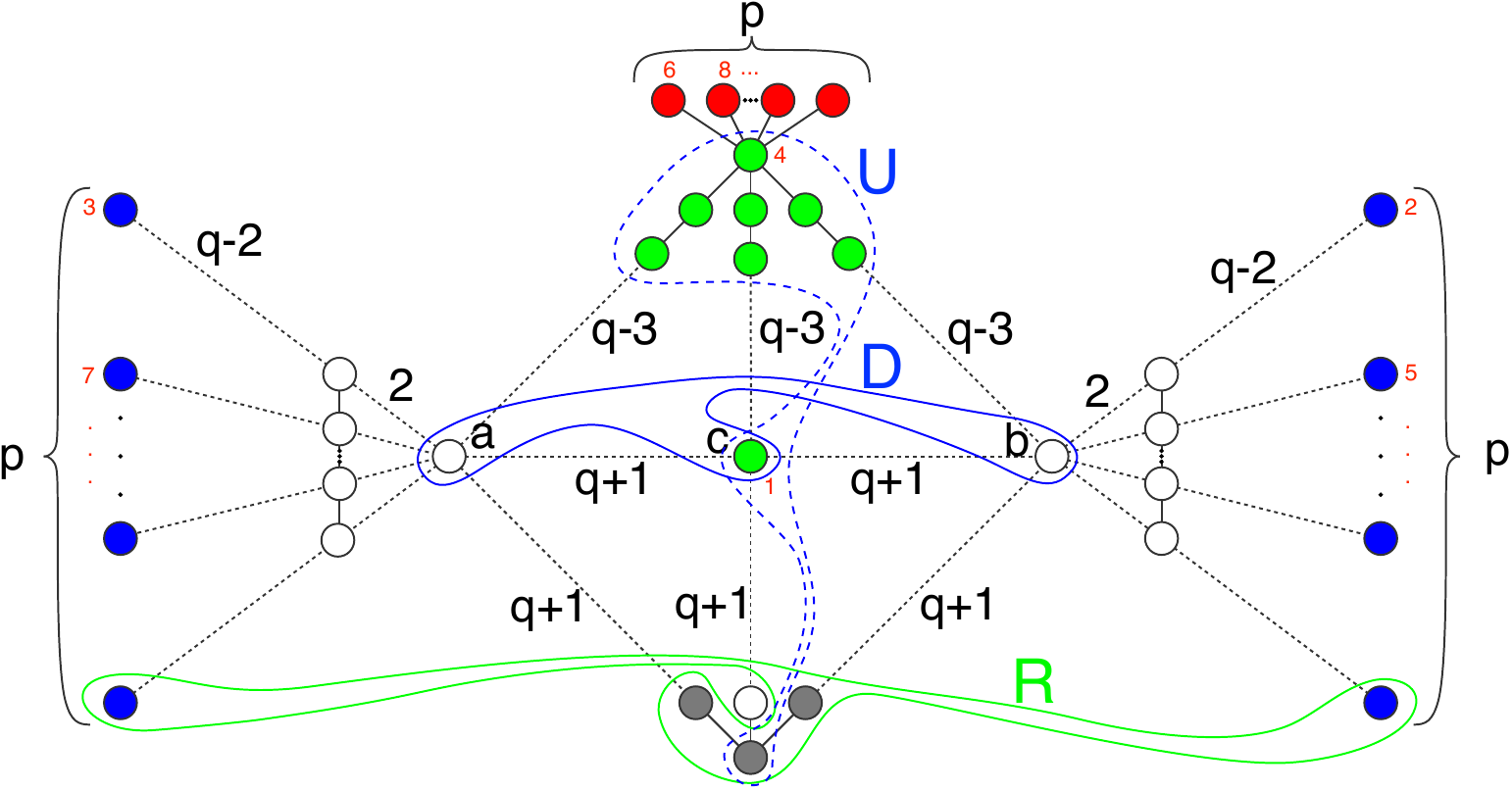}
  \end{center}
  \caption{An example of bow-tie shaped graph $BT_{p,q}$ (for $p\ge 2$ and $q\ge 6$) with radius certificate $R$, diameter certificate $D$ and all eccentricity certificate ($R,U$). Its diameter is $4q-2$ (eccentricity of blue nodes). Its radius is $2q+1$ (eccentricity of green nodes). Plain lines correspond to edges while a dashed line with label $\ell$ corresponds to a path of length $\ell$. All unlabeled dashed lines from $a$ and $b$ have length 2, while all unlabeled dashed lines from blue nodes have length $q-2$ (similarly to given labels on upper lines).}
  \label{fig:bowtie}
\end{figure}

\begin{theorem}
    For every class of graphs ${\cal G}$ such that every graph with $n$ vertices has a diameter certificate of size $\ell(n)$ at most, there exists a {\em randomized} (Monte Carlo) algorithm for computing the diameter within ${\cal G}$ in $O(m\sqrt{\ell(n)n}\log^{3/2}{n})$ time with high probability.
\end{theorem}

\lv{Furthermore, we show that the existence of a subquadratic-time one-to-all distance based algorithm implies the existence of a weaker type of certificate. More precisely, we define an \emph{extended diameter certificate} as a set $X$ of nodes such that for any pair $u,v$ of nodes, there exists a node $x\in X$ satisfying $d(u,x)+d(x,v)\le D$. As an example, in a cycle of length $4k$ for an integer $k$, four evenly spaced nodes form an extended diameter certificate. In contrast, the entire set of nodes constitutes the only diameter certificate. 
We can then state the following (see \Cref{th:lbdiam} and \Cref{prop:extended-diameter}).

\begin{theorem}
    Given a class of graphs $\cal G$ and a sublinear function $\ell(n)$, we have:
    \begin{itemize}
        \item The existence of a one-to-all distance based randomized (Monte Carlo) algorithm for computing diameter within $\cal G$ and running with $\ell(n)$ queries implies that every graph of $\cal G$ with $n$ vertices has an extended diameter certificate of size $\ell(n)$ at most.
        \item Given a graph $G$ with $n$ vertices and an extended diameter certificate of size $o(\log n)$, its diameter can be computed in subquadratic time.
    \end{itemize}
\end{theorem}

As another example, in a split graph, that is a graph that can be decomposed as the union of a clique and an independent set with possibly edges between them, the clique is an extended diameter certificate when the graph is connected. Indeed, such a graph has diameter two or three, and it has diameter two only if, for any two nodes in the independent set, they have a common neighbor in the clique. Distinguishing between diameter two and three in such graphs is proved to require quadratic time at least under SETH even if the clique has size $O(\log n)$~\cite{RV13}. We thus suspect that the complexity of computing the diameter of a graph has at least an exponential dependency in the minimum size of an extended diameter certificate. The possibility of a truly subquadratic-time algorithm for graphs having an extended diameter certificate of size $o(\log n)$ remains an open question.
}

\paragraph*{All eccentricities}
Given a graph $G=(V,E)$, we define a \emph{tight lower certificate} as a set $L$ of nodes such that any node $v$ is at distance $e(v)$ from some node $x$ of $L$. 
We define a \emph{tight upper certificate} as a set $U$ of nodes such that any node $v$ is at distance at most $e(v)-e(x)$ from some node $x\in U$. We call all-eccentricity certificate a couple $(L,U)$ of such tight-lower and tight-upper certificates. 
In a $k\times k$ square grid with odd $k$, the four corners and the center form such a couple.
(See also $(R,U)$ in the example of Figure~\ref{fig:bowtie}.)
The problem of finding a minimum tight lower certificate again appears to be equivalent to set cover. However, the notion of tight upper certificate is related to a partial order structure that makes the minimum tight upper certificate unique and enables its computation in polynomial time although subquadratic-time computation still remains open. We can nevertheless obtain subquadratic-time computation of all eccentricities when sufficiently small certificates exist (see Theorem~\ref{th:approx-all-cert}):

\begin{theorem}    
    Given an arbitrary $n$-node graph $G$, all eccentricities, a tight lower certificate $L$ of size $O(\ell^* \log n)$ and the minimum tight upper certificate $U^\preceq$ can be computed in $O(m \ell^* \card{U^\preceq} \log^2 n)$ time with high probability where $\ell^*$ denotes the size of a minimum tight lower certificate.
\end{theorem}

\lv{
\paragraph*{Approximating diameter.}
We also consider the problem of approximating the diameter of a graph and introduce a related notion of certificate. More precisely, given a graph $G=(V,E)$ with diameter $\diam(G)=D$ and a value $c\ge 1$, we define a \emph{$c$-approximate diameter certificate} as a set $U$ of nodes such that any node $v$ is at distance at most $c\cdot D-e(x)$ from a node $x\in U$. This notion is related to that of distance $k$ domination. Indeed, consider a set $U$ of nodes that dominates the graph at distance $(c-1)D$, that is such that any node $v$ is at distance at most $(c-1)D$ from some node $x$ of $U$. As the excentricity of such a node $x$ is $D$ at most, we then have $d(v,x)\le c\cdot D - e(x)$, proving that $U$ is a $c$-approximate diameter certificate. As an example, three evenly spaced nodes in a cycle $G$ of length $6k$ for an integer $k$ form a $4/3$-approximate diameter certificate as they dominate the graph at distance $\diam(G)/3$. We prove that any graph has a $3/2$-approximate diameter certificate of sublinear size (see Theorems~\ref{th:approx-cert-even} and~\ref{th:approx-cert}).

\begin{theorem}
    Any graph $G=(V,E)$ with even diameter has a $\frac{3}{2}$-approximate diameter certificate $X$ of size $O(\sqrt{n}\log n)$. Moreover, such a set $X$ and a $3/2$-approximation of the diameter can be computed in randomized $O(m\sqrt{n}\log n)$ time by using a Monte Carlo algorithm. 
\end{theorem}

The existence of a $(2-1/k)$-approximate diameter certificate of size $\widetilde O(n^{1/k})$ in any graph with diameter divisible by $k$ remains an open question for $k>2$.
}

\paragraph*{Practical algorithms.}
We propose a primal-dual analysis of a basic variant of the diameter algorithm of~\cite{TK11} which was not known before as far as we know. 
Indeed, a basic primal-dual argument implies that the maximum size $\pi_1$ of a packing for (closed) balls $B[u, \alpha(\diam(G)-e(u))]$ for $\alpha =1$ is a lower bound on the minimum size of a diameter certificate, as the latter corresponds to a covering with such balls. We prove that the nodes selected by this basic variant for performing BFS traversals form a packing $P$ for balls with radii reduced by a factor $\alpha =1/3$ (each such ball contains at most one node of $P$) implying that it indeed computes a $\pi_{1/3}/\pi_1$ approximation of the minimum diameter certificate.

We also design new algorithms for radius and all eccentricities. They both rely on a specific 
tight lower certificate, namely the set $A$ of all antipodes, which are defined as follows.
It is a subset of furthest nodes, i.e. the nodes that are at furthest distance from some node. 
More precisely, given a ranking of the nodes
(e.g., their ID order), we define the antipode of a node $u$ as the node at furthest distance from $u$ having highest rank (the ranking is used for breaking ties among nodes at the same distance). Our all-eccentricity algorithm relies on a characterization of the (unique) minimum tight upper certificate of a graph which allows the algorithm to compute it. We obtain the following guarantees
(as a consequence of Theorems~\ref{th:rad},~\ref{th:diam} and~\ref{th:all-ecc}).

\begin{theorem}\label{th:main}
  Given a graph $G$ having $\ell$ antipodes overall (according to a given ranking), it is possible to  compute:
  \begin{itemize}
  \item its radius, a center and a radius certificate of size $\ell$ at most
    in $O(\ell m)$ time,
  \item its diameter, a diametral node and a diameter certificate of size $\pi_{1/3}$ at most
    in $O(\pi_{1/3} m)$ time
    where $\pi_{1/3}$ is the maximum packing size for open balls $B(u,\frac{1}{3}(\diam(G)-e(u)))$,
  \item all eccentricities, a lower certificate of size $\ell$ at most and the minimum upper certificate $U^\preceq$
    in $O((\ell+\card{U^\preceq}) m)$ time.
  \end{itemize}
\end{theorem}

We provide small experiments on various types of real-world graphs that confirm that these graphs have extremely small radius and diameter certificates (less than 30 nodes for all graphs when they have from tens of thousands to more than ten million nodes). Surprisingly they also have a very small tight lower certificate: the set of all antipodes.
Although we do not know how to compute this set in subquadratic time, the fact that its size is very small in practical graphs guarantees fast termination of our radius algorithm. Although the idea of using  an antipode as a good candidate for a diametral node dates back to the TwoSweep heuristic~\cite{MLH09}, the idea of repeteadly using antipodes in radius and all-eccentricity computations is new. The observation of their sparsity in real-world graphs seems new also. Note that the set of furthest nodes (which contains antipodes) is also an obvious tight lower certificate, but it appears to be significantly larger in several networks.
We also observe that these graphs have relatively small coverings with balls of reduced radii compared to the balls required for a diameter certificate, allowing to provide support for the efficiency of diameter algorithms based on the approach of~\cite{TK11}. 
The size of the minimum tight upper certificate appears to be quite variable in our experiments, from few percents to a large fraction of the nodes.
Note that for graphs where it is larger than the number of antipodes (all our real-world graphs), our all-eccentricity algorithm is somehow optimal as its complexity is then the same as the algorithm checking the certificate.


\paragraph*{Classical graph classes with specific certificates.}
We additionally refine our results in specific graph classes:
\begin{enumerate}
    \item The analysis of~\cite{BCT17} allows to show that, for any $\eps>0$, sufficiently large \textbf{power-law random graphs} have $n^\eps$ furthest nodes at most asymptotically almost surely, and their set is thus a tight lower certificate of size $n^{\eps}$ at most.
    \item A refinement of our primal-dual algorithm for diameter leads to subquadratic $(1+\eps)$-approximation of both radius and diameter in \textbf{graphs with constant doubling dimension}.
    \item Our primal-dual radius algorithm runs in linear time in \textbf{$\delta$-hyperbolic graphs} with exact computation when the degree is bounded and constant additive approximation in general. 
    \item The centers of any \textbf{chordal graph} form a diameter certificate, and all its eccentricities can be computed in linear time if it has bounded degree. 
    \item \textbf{Helly graphs} and \textbf{bipartite Helly graphs} have radius certificates of constant size and their radius can be computed in near-linear time.
    \item Finally, the radius of \textbf{graphs with asteroidal number at most $k$} can be computed in $O(km^{3/2})$ time.
\end{enumerate}
Our results on power-law random graphs, negatively curved graphs, and chordal graphs revisit some previous results to obtain bounds on certificates. We are not aware of any specific diameter algorithm for graphs with low doubling dimension prior to this work. Our framework allows to improve the state of the art of radius computation in both Helly graphs and graphs with bounded asteroidal number.

\medskip

Overall, we believe that our notion of certificate allows to better understand when the quadratic barrier can be overcome for radius, diameter, and all eccentricity computations. Moreover, it could also be fruitful for investigating other known barriers in P.
It also provides new insights on the efficiency of practical algorithms for radius and diameter, and enables more robust practical algorithms with complexity guarantees.
We argue that it significantly enhances the state of the art for all-eccentricity computation. The new techniques proposed also enable new types of radius and diameter algorithms with parameterized complexity with respect to parameters related to the size of such certificates. We give one such example for graphs of bounded asteroidal number, for which we present the first subquadratic-time algorithm for computing the radius. Incidentally, the latter answers an open question from~\cite{Duc22}.

\lv{A large part of our study specifically focuses on one-to-all distance based algorithms. This paradigm is arguably very natural. Despite this, its theoretical analysis is not well understood (\cite{BCT17} is one of the few works in this direction). It is practical and it is agnostic: it can be applied to any graph while this is not the case for algorithms dedicated to specific graph classes.}

\subsec{Related work}


Our notion of certificate is related to the broader notion of certifying algorithm~\cite{McConnellMNS2011,AlkassarBMR2011} which is an algorithm that produces, in addition to its output, a certificate or witness, that is an easy-to-verify proof that the particular output is valid. Such an algorithm is said to be efficient if the complexity of computing the output and then verifying its validity using the certificate is similar to that of the best (non-certifying) algorithm. We are mainly interested in the case where verification can be done faster than the best algorithm. The practical algorithms we propose for diameter, radius and all eccentricities are indeed certifying algorithms. We are not aware of any non-trivial certifying algorithms for these prior to this work. 

As mentioned earlier, our notion of certificate is also closely related to nondeterministic algorithms\cite{CGIMPS16,Kunnemann2018}, since guessing a certificate and verifying it turns out to be a nondeterministic algorithm for solving the problem with the same complexity as the verification algorithm. However, we express this latter complexity as a function of both the size of the input and the size of the certificate. More specifically, we are interested in relating the size of a smallest certificate for a given input to the difficulty of solving the problem for that particular input. 

\lv{In the context of radius computation, our notion of radius certificate is related to the combinatorial dimension $d$ introduced in \cite{FunkePS2025}. In our terms, this parameter $d$,  can be defined as the maximum size of an inclusion-wise minimal radius certificate. Under the assumption that $d$ is subcubic, an adaptation of Clarkson's algorithm for solving LP problems~\cite{Clarkson1995} leads to subquadratic-time radius computation.}

In the context of approximating diameter~\cite{CGR16}, a $h$-dominating set $X$ is used for upper-bounding the diameter: given $X$, the minimum $h$ such that $X$ is $h$-dominating and the eccentricity of every node in $X$ can be computed with $|X|+1$ BFS traversals. The diameter can then be bounded by $h+\max_{x\in X}e(x)$. Our diameter certificate can be seen as a refinement of this approach where we bound independently the eccentricity of each node $u$ with respect to the best bound $d(u,x)+e(x)$ for $x\in X$. In general, this provides a better diameter bound.

\lv{The fine-grained complexity of approximating diameter is studied in a series of papers (see \cite{AingworthCIM1999,RV13,ChechikLRSTW2014,Li2020,Bonnet2021,DaLiWi}). For any integer $k$ and $\delta>0$, computing a $2-1/k-\delta$ approximation of the diameter requires $m^{1+1/(k-1)-o(1)}$ time under SETH~\cite{DaLiWi}. An algorithm achieving $\widetilde O(m^{1+1/k})$ time is known for $k=2$~\cite{ChechikLRSTW2014}. For $k>2$, an almost $2-1/2^k$ approximation (i.e. up to some additive term) can be computed in $\widetilde O(m^{1+1/(k+1)})$ time~\cite{CGR16}.}

Our work follows a long line of research around practical radius and diameter computations that dates back to the computation of a center in a tree~\cite{H73}. It consists in a two-sweeps approach where the last visited node in a first BFS traversal is used as the source of a second BFS traversal. It can also be seen as a heuristic~\cite{MLH09} providing a diameter estimate that appears to be often tight in practice~\cite{AADr}. It thus introduces the idea of using what we call antipodes as tentative diametral nodes. The two-sweep heuristic was also shown to provide good approximation (up to a small constant) for chordal graphs, $\delta$-hyperbolic graphs and various other graph classes~\cite{ChDrEsHaVa,ChDrHaVaAlR,CDK03}. Note that, if one insists in approximating the diameter of an arbitrary graph in near-linear time, it turns out~\cite{DaLiWi} that the simple linear-time algorithm, that outputs a 2-approximation to the diameter by performing one BFS traversal from an arbitrary vertex,  is optimal.
A four-sweeps heuristic is proposed in \cite{CGHLM13} and complemented with an exact diameter algorithm called iFub. The four-sweep heuristic performs twice the two-sweep method, using a mid-point of the longest path found in the first round as the starting point of the second one. The idea is that mid-points of longest paths make good candidates for central nodes or at least nodes with small eccentricity. The iFub method additionally inspects furthest nodes from the best candidate center found with the four-sweep heuristic until exact value of the diameter can be inferred.

The concept of certificate is somehow implicit in the method introduced in~\cite{TK11,TK13} that consists in maintaining lower and upper bounds on the eccentricity of each node. After each BFS traversal these bounds are improved based on distances from the source of the traversal. The sources used for the BFS traversals performed by the algorithm form what we call a certificate. Contrarily to this approach, we distinguish nodes used for improving lower bounds (the lower certificate) from those used for improving upper bounds (the upper certificate). 
This approach with bounds has been extended for diameter computation in weakly connected directed graphs  in parallel works~\cite{BCHKMT15,AIK15}.
The Exact SumSweep method~\cite{BCHKMT15} additionally computes the radius in addition to the diameter.
It integrates many techniques proposed in previous practical algorithms plus a heuristic based on sum of distances for discovering nodes with large eccentricity in an initial phase.

An impressive analysis of these algorithms (several heuristics, IFub and Exact SumSweep) within power-law random graphs is performed in~\cite{BCT17}. The analysis proposed for Exact SumSweep indeed implies the existence of certificates of size $n^{O(\eps)}$ for both radius and diameter for any $\eps>0$ and sufficiently large power-law random graphs (see Section~\ref{sec:power-law}).
We revisit their analysis to show a similar bound for a tight lower certificate.

Our primal-dual approach for practical algorithms has some similarities with the study of
packings and coverings of hyperbolic graphs with balls proposed in~\cite{CE07}, although slightly different problems are considered. It would be interesting to derive similar results in hyperbolic graphs for the collections of balls (or complementary of balls) we consider here.

\subsec{Structure of the paper}

\Cref{part:certificates} introduces basic graph and set-cover terminology (Section~\ref{sec:prelim}),
the notions of certificate for radius, diameter, and all eccentricities (Section~\ref{sec:cert}), and then provides a characterization of minimum tight upper certificates (Section~\ref{sec:min-tight-upper-cert}) as well as the link between radius certificates and one-to-all distance-based radius algorithms (Section~\ref{sec:lb}), \lv{and the link between extended diameter certificate and one-to-all distance-based diameter algorithms (\Cref{sec:extended-diameter})}.
%
\Cref{part:smallcert} provides partial answers to our main question about subquadratic-time computations of radius (Section~\ref{sec:small-rad-cert}), diameter (Section~\ref{sec:small-diam-cert}), and all eccentricities (Section~\ref{sec:approx-all-cert}). Results in terms of a stricter notion of radius certificate are given in Section~\ref{sec:approx-rad-cert}.
\lv{The notion of $c$-approximated diameter is introduced in \Cref{sec:approximate-diameter}.}
%
\Cref{part:antipodes} presents our practical algorithms for radius (Section~\ref{sec:rad}), diameter (Section~\ref{sec:diam}) and all eccentricities (Section~\ref{sec:all-ecc}). A new technique at the core of our radius and all-eccentricity algorithms is described in Section~\ref{sec:selection}. Some experimental results detail for some real-world graphs various parameters involved in our analyses (Section~\ref{sec:exp}).
%
\Cref{part:graphclasses} is dedicated to our results on power-law random graphs (Section~\ref{sec:power-law}), graphs with constant doubling dimension (Section~\ref{sec:doubling}), hyperbolic graphs
(Section~\ref{sec:hyperbolic}), chordal graphs (Section~\ref{sec:chordal}), Helly graphs (Section~\ref{Helly}) and graphs with bounded asteroidal number (Section~\ref{Asteroid}).


\part{Certificates for radius, diameter, all eccentricities}
\label{part:certificates}

\section{Preliminaries and definitions}
\label{sec:prelim}

In the sequel, we suppose that we are given a finite undirected unweighted connected graph $G$. We denote by $V$ its set of nodes and by $E$ its set of edges. We let $n=|V|$ and $m=|E|$ denote its number of nodes and edges respectively. Let $d(u,v)$ denote the distance between two nodes $u$ and $v$ in $G$, that is the length of a shortest path from $u$ to $v$. The \emph{eccentricity} $e(u)$ of a node $u$ is the maximum length of a shortest path from $u$, that is $e(u)=\max_{v\in V} d(u,v)$. The \emph{furthest nodes} of $u$ are the nodes $v$ at furthest distance from $u$, i.e. satisfying $d(u,v)=e(u)$. We let $F(u)=\set{v \in V:d(u,v)=e(u)}$ denote their set. Given a ranking $r$ of the nodes, the \emph{antipode} $\antipode_r(u)$ of a node $u$ for $r$ is its furthest node with highest rank. Formally, $\antipode_r(u) = \argmax_{v\in V} (d(u, v), r(v))$ where pairs are ordered lexicographically.
A node is called a furthest node (resp. an antipode) if it is a furthest node (resp. an antipode) of some other node. Given a set $W\subseteq V$, we let $\antipode_r(W)=\set{\antipode_r(u) : u\in W}$ denote the set of antipodes from nodes in $W$.
The \emph{diameter} $\diam(G)=\max_{u\in V}e(u)$ of $G$ is the maximum eccentricity in $G$ and the \emph{radius} $\rad(G)=\min_{u\in V}e(u)$ is the minimum eccentricity in $G$. A \emph{diametral node} $b$ is a node with maximum eccentricity ($e(b)=\diam(G)$). A \emph{central node} $c$ (or simply \emph{center}) 
is a node with minimum eccentricity ($e(c)=\rad(G)$). Denote by $C(G)=\{v\in V: e(v)=\rad(G)\}$ the set of all central nodes of $G$. 
We let $B[u,r]=\set{v\in V\mid d(u,v)\le r}$ (resp. $B(u,r)=\set{v\in V\mid d(u,v) < r}$)  denote the (closed) ball (resp. open ball) with radius $r$ centered at a node $u$. Similarly, we define its \emph{coball} of radius $r$ as $\ovB(u,r)=\set{v\in V\mid d(u,v)\ge r}$, that is the complementary of $B(u,r)$.

We restrict ourselves to algorithms based on one-to-all distance queries:
we suppose that an algorithm $\distfrom$ for one-to-all distances is given (typically BFS or Dijkstra in the weighted case). It takes a graph $G$ and a node $u$ as input and returns distances from $u$. More precisely, $\distfrom(G,u)$ returns a vector $D$ such that $D(v)=d(u,v)$ for all $v\in V$. In particular, $e(u)$ can be defined as the maximum value in the vector and the antipode of $u$ as the index with highest rank where this value appears in $D$. We may measure the complexity of an algorithm by the number of one-to-all distance queries it performs when its cost mainly comes from these operations. A one-to-all distance-based algorithm accesses the graph only through one-to-all distance queries and solely relies on distances known from queries, triangle inequality, and non-negativeness of distances for bounding unknown distances.

Given a collection $\S$ of subsets of $V$ such that $\cup_{S\in \S}S=V$, a \emph{covering} with $\S$ is a sub-collection $\C\subseteq \S$ of sets such that their union covers all of $V$: $V\subseteq \cup_{S\in \C}S$. (A set $S\in \S$ is said to cover elements in $S$). Recall that the set-cover problem consists in finding a covering of minimum size. We define a \emph{packing} for $\S$ as a subset $P\subseteq V$ such that any set of $\S$ contains at most one element in $P$. The denomination comes from the fact that elements of $P$ correspond to pairwise disjoint subsets of the dual collection $\S^*=\set{\set{S\in\S \mid u\in S} : u\in V}$. A \emph{hitting set} for $\S$ is a set $P$ that intersects all sets of $\S$. (Equivalently, a hitting set can be defined as a covering for $\S^*$ but it may be more convenient to consider a collection rather than its dual.)
We let $\pi(\S)$ denote the maximum size of a packing for $\S$, and $\kappa(\S)$ denote the minimum size of a covering for $\S$.
As a covering must cover each element of a packing with distinct sets, we obviously have $\pi(\S) \le \kappa(\S)$ (weak duality). We say that a collection $\R$ is \emph{restricted} compared to $\S$ if there exists a one-to-one mapping $f$ from $\R$ to $\S$ such that $S\subseteq f(S)$ for all sets $S\in \R$. Note that this mapping then turns any covering with $\R$ into a covering with $\S$ and we thus have $\kappa(\S)\le \kappa(\R)$.
Similarly, a packing for $\S$ is also a packing for $\R$ and we have $\pi(\S)\le \pi(\R)$. In other words, restricting the sets of a collection to smaller subsets increases maximum packing size and minimum covering size.

\section{Lower and upper certificates for eccentricities}
\label{sec:cert}


Our notion of certificate is based on the fact that knowing all distances from a given node $x$ provides some bounds on the eccentricities of other nodes:
\begin{equation}\label{eq:certif}
  \forall u\in V,\ d(u,x) \le e(u) \le d(u,x) + e(x).
\end{equation}
The first inequality derives directly from the eccentricity definition while the second one is a consequence of the triangle inequality.
A possibly tighter lower-bound of $\max\set{d(u,x), e(u) - d(u,x)}$ could be used as in \cite{TK11} but this optimization would complicate our definitions without reducing significantly certificate size  (see Theorem~\ref{th:lb} in Section~\ref{sec:lb}).

\medskip

We say that a set $L$ (resp. $U$) of nodes is a \emph{lower certificate} (resp. an \emph{upper certificate}) of $G$ when it is used to obtain lower bounds (resp. upper bounds) of eccentricities in $G$. Given the distances from a node $u$ to all nodes in $L\cup U$ and the eccentricities of nodes in $U$, we have the following lower and upper bounds for the eccentricity of any node $u$ (as a direct consequence of Inequation~\ref{eq:certif}):
\begin{equation*}
e_L(u) \le e(u)\le e^U(u),\ 
  \mbox{ where }\left\{\begin{array}{l}
                e^U(u) = \min_{x\in U}d(u,x) + e(x)\\
                e_L(u) = \max_{x\in L}d(u,x)
    \end{array}\right.
\end{equation*}
Note that the bounds would work also by taking the $\min$ and $\max$ over all nodes $x\in U \cup L$. However, we prefer to distinguish nodes that provide lower bounds from those that provide upper bounds. If needed, a node could be both in $L$ and $U$.

A lower (resp. upper) certificate $L$ (resp. $U$) is said to be \emph{tight} when $e_L(u)=e(u)$ (resp. $e^U(u)=e(u)$) for all $u\in V$. An \emph{all-eccentricty certificate} is defined as a couple $(L,U)$ of a tight lower certificate $L$ and a tight upper certificate $U$.


\medskip

Given a bound $D$ and a node $x$, we have $d(u,x)+e(x)\le D$ if and only if $u\in B[x,D-e(x)]$. Given an upper certificate $U$ we thus have
$e^U(u)\le D$ if and only if $u\in \cup_{x\in U} B[x,D-e(x)]$. A \emph{diameter certificate} is a set $U$ such that $e^U(u)\le \diam(G)$ for all $u\in V$. Equivalently it can be defined as a covering with $\set{B[x,\diam(G)-e(x)] : x\in U}$ using balls whose radius equals $\diam(G)$ minus eccentricity of the center (and identifying a ball with its center).

Similarly,
given a lower certificate $L$ and a bound $R$, we obviously have $e_L(u) \ge R$ for all nodes $u$ whose coball $\ovB(u,R)$ intersects $L$ (i.e., there exists a node in $L$ at distance $R$ at least from $u$). We thus define a \emph{radius certificate} as a set $L$ such that $e_L(u)\ge\rad(G)$ for all $u\in V$, or equivalently as a hitting set $L$ for the collection $\set{\ovB(u,\rad(G)) : u\in V}$ of coballs of radius $\rad(G)$.
As $x\in \ovB(u,\rad(G))$ if and only if $u\in \ovB(x,\rad(G))$, the collection of coballs of radius $\rad(G)$ is its own dual, and a radius certificate $L$ can equivalently be defined as a covering for this collection.

Note that a tight lower certificate can equivalently be defined as a hitting set for the collection $\set{\ovB(u,e(u)) : u\in V}$.
Similarly, a tight upper certificate can equivalently be defined as a covering for the collection $\set{\set{u\in V \mid d(u,x) \le e(u) - e(x)} : x \in V}$.

\paragraph{Examples.}
A path with $2k+1$ nodes has a radius certificate with two nodes (the two extremities) and a diameter certificate with one node (its mid-point). More generally, \guillaume{a graph $G$ has a one-node diameter certificate if and only if $\diam(G)=2\rad(G)$. Indeed, in one direction, if $\diam(G) = 2\rad(G)$ then any central node forms a diameter certificate (see also Proposition~\ref{prop:d=2r-any}). In the other direction, if some diameter certificate is reduced to some node $c$, then $\diam(G) \le 2(\diam(G)-e(c))$, which implies $e(c) \le \diam(G)/2$. Since $\diam(G) \le 2\rad(G)$ for any graph $G$, and moreover $e(c) \ge \rad(G)$, we obtain as desired $e(c) = \rad(G) = \diam(G)/2$.}

It can be shown that any tree has a radius certificate of two nodes (two well chosen leaves) while its centers (at most two nodes) form a diameter certificate. \guillaume{More generally, a graph $G$ has a radius certificate of size two (that is best possible) if and only if $\diam(G) \ge 2\rad(G)-1$. Indeed, in one direction, if $\diam(G) \ge 2\rad(G)-1$ then any diametral pair forms a radius certificate (see also Proposition~\ref{prop:d>2r-2}). In the other direction, if some radius certificate is reduced to two nodes $x,y$ then let $c$ be on a shortest path from $x$ to $y$ such that $d(c,x) = \left\lceil d(x,y)/2\right\rceil$. Then $d(c,x) \ge \rad(G)$, and so, $2\rad(G) \le 2d(c,x) = 2\left\lceil d(x,y)/2\right\rceil \le d(x,y) +1 \le \diam(G) +1$.} A square grid has a radius certificate with four nodes (the corners) while its centers (at most four nodes) form a diameter certificate. 

As an extreme example of graph with large certificates, consider a cycle $C$. Its only radius and diameter certificates are both the whole set of its nodes. More generally, \guillaume{a graph is \emph{self-centered} (its radius equals its diameter) \textit{if and only if} the whole set of nodes is its unique diameter certificate. 
Indeed, since any diametral node $u$ in a graph $G$ satisfies $\diam(G)-e(u) = 0$, it follows that if $U$ is a diameter certificate of $G$ such that $u \in U$ then necessarily $V \setminus \{u\} \subseteq \set{B[x,\diam(G)-e(x)] : x\in U \setminus \{u\}}$.
In particular if $G$ is self-centered, such a certificate $U$ must coincide with the whole set of its nodes. Conversely, if $G$ contains a node $v$ such that $e(v) < diam(G)$ then $V \setminus N(v)$ is a diameter certificate, and it excludes at least one node.
The \emph{antipodal graphs} are a subclass of self-centered graphs, which contain cycles and hypercubes, and it can be proved that their unique radius certificate is also the whole set of nodes.
More precisely, a graph $G$ is called antipodal in~\cite{goodman2000tight} if for every node $x$ there exists a node $y$ such that all other nodes must be on some shortest path between $x$ and $y$ (equivalently, $d(x,y) = d(x,z) + d(z,y)$ for every other node $z$).
From the definition it follows that for every node $x$, the associated node $y$ must be unique, and in particular $y$ is the unique furthest node from $x$.
Conversely, since $G$ must be self-centered~\cite{goodman2000tight}, we also have that $x$ is the unique furthest node from $y$.
Hence, every $x$ is the unique node at distance $\rad(G) = \diam(G)$ from some other node $y$, which implies, as claimed earlier, the whole set of nodes is the unique radius certificate of $G$.
The characterization of all graphs such that their only radius certificate equals the whole set of nodes remains open.}

\paragraph{Hardness of approximation.}
Similarly to~\cite{CE07}, we note that set cover can easily be encoded with ball cover: given a collection $\S$ of subsets of $V$ of an instance of the set-cover problem, consider the split graph where the sets $S$ of $\S$ form a clique and the elements $x\in V$ form a stable set so that the nodes $x$ and $S$ are adjacent if and only if $x\in S$. Without loss of generality, we may assume that no subset in $\S$ equals $V$ (otherwise, the problem is trivial), no set is empty (otherwise, we can remove it) and that there exist two elements such that no set contains both of them (if needed, we add a singleton $\set{z}$ to $\S$ where $z$ is a new dummy element added to $V$). In this graph, sets and elements have eccentricity 2 and 3 respectively. Any minimum diameter certificate is a covering with balls of radius 1 or 0 (if centered on a set or an element). One can easily transform it into a covering with balls of radius 1 centered at nodes corresponding to sets (only). It then corresponds to an optimal solution of the original set-cover problem.
Now consider the complementary graph which is  also a split graph where elements form a clique while sets form a stable set and where $x$ and $S$ are adjacent if and only if $x\notin S$. Similarly, a minimum radius certificate for this complementary graph corresponds to a covering with coballs of radius 2 centered at sets and is also an optimal solution to the original set-cover problem. 
The hardness of set-cover approximation~\cite{DS14} thus implies that computing a minimum diameter (resp. radius) certificate is NP-hard and no polynomial-time algorithm can approximate it with a factor $(1-o(1))\log n$ unless $P=NP$. Assuming that some set in $\S$ does not intersect any other set (it suffices again to add a singleton $\set{z}$ to $\S$ where $z$ is a new dummy element added to $V$), a tight lower certificate can easily be obtained from a radius certificate and we similarly obtain that computing a tight lower certificate is NP-hard and cannot be approximated within a factor $(1-o(1))\log n$ unless $P=NP$.
Surprisingly, the situation is different for the problem of finding a minimum tight upper certificate, which can be solved in polynomial time as shown now.

\section{Characterization of the unique minimum tight upper certificate}
\label{sec:min-tight-upper-cert}

Recall that a set $U$ is a tight upper certificate if for any  vertex $u$, there is a node $x\in U$ such that $d(u,x)+e(x) \le e(u)$, or equivalently $d(u,x)+e(x)=e(u)$ as $d(u,x)+e(x)\ge e(u)$ is always satisfied by triangle inequality. We call such a node $x$ a \emph{tight upper vertex-certificate} for $u$ as we have $e^{\set{x}}(u)=e(u)$. This notion yields to the following characterization.

\begin{proposition}\label{prop:upperopt}
  Given a graph $G$, being a tight upper vertex-certificate defines a binary relation $\preceq$ which is a partial order ($u\preceq x$ stands for  $e(u)=d(u,x)+e(x)$). Moreover, the set $U^{\preceq}$ of all maximum elements of this partial order is the unique tight upper certificate of $G$ with minimum size.
\end{proposition}



\begin{proof}
  We first prove that the relation $\preceq$ is a partial order.
  It is obviously reflexive as the distance from a node to itself is zero, implying $e(u)=d(u,u)+ e(u)$.
  It is antisymmetric: if $x$ and $y$ are both tight upper vertex-certificates one for the other, we then have $e(x)=d(x,y)+ e(y)$ and $e(y)=d(y,x)+ e(x)$, and thus $d(x,y)=0$.
  We finally show transitivity. Suppose that $y$ is a tight upper vertex-certificate for $x$ and that $z$ is a  tight upper vertex-certificate for $y$, {\it i.e.} $e(x) = d(x, y) +  e(y)$ and $e(y) = d(y, z) +  e(z)$. We thus have $e(x) = d(x,y) + d(y,z) + e(z)$. As triangle inequality implies $e(x) \le d(x,z) +  e(z)$,  we obtain $d(x,y) + d(y,z)\le d(x,z)$, and thus  $d(x,y)+d(y,z)=d(x,z)$ by triangle inequality again. We finally get $e(x) = d(x,z)+ e(z)$ and $z$ is a tight upper vertex-certificate for $x$.

We now show that the set $U^{\preceq}$ of maximal elements for $\preceq$ is the unique optimal tight upper certificate of $G$.
For any non-maximal element $u$, we can build a chain $u\preceq x_1\preceq x_2\preceq\cdots$ where $u$ has a tight upper vertex-certificate $x_1$, if $x_1$ is not in $U^{\preceq}$, it has a tight upper vertex-certificate $x_2$, and so on. As the partial order is finite, the chain must be finite and $x_k$ must be in $U^{\preceq}$ for some $k$. The transitivity of $\preceq$ implies that $x_k$ is a tight upper vertex-certificate for $u$ implying $e^{U^{\preceq}}(u)\le d(u,x_k)+e(x_k)=e(u)$. This shows that $U^{\preceq}$ is a tight upper certificate. 
As each element $x$ of $U^{\preceq}$ is the only tight upper vertex-certificate for itself (as a maximal element), $U^{\preceq}$ is included in any tight upper certificate of $G$. In particular, any minimum tight upper certificate must indeed equal $U^{\preceq}$.
\end{proof}

Note that $U^{\preceq}$ includes in particular all centers of the graph: a center $c$ cannot have a tight upper vertex-certificate $x\not= c$ (otherwise we have $e(x)=e(c)-d(c,x)<e(c)$ in contradiction with the minimality of $e(c)$).

The tight upper vertex-certificate relation and then $U^{\preceq}$ can easily be computed in $O(nm)$ time by computing all distances.
A more practical algorithm is proposed in Section~\ref{sec:all-ecc}.

\section{Lower bound for radius computation}
\label{sec:lb}

We now show that the notion of radius certificate is related to the minimum number of queries a one-to-all distance-based algorithm must perform.

\begin{theorem}\label{th:lb}
Given a graph $G$, if an execution of a one-to-all distance-based randomized Monte Carlo algorithm successfully computes its radius by  querying a set $L$ of nodes, then $L\cup\antipode_r(L)$ is a radius certificate for any ranking $r$. Such a radius algorithm thus requires at least $\frac{1}{2}|L_{OPT}|$ one-to-all distance queries in a successful execution where $|L_{OPT}|=\kappa(\set{\ovB(u,\rad(G)) : u \in V})$ is the minimum size of a radius certificate.
\end{theorem}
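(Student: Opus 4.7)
The approach is to prove both claims by determining the strongest eccentricity lower bound that any one-to-all distance based algorithm can certify after querying a set $L$ of nodes. First I would observe that once $L$ has been queried, the algorithm knows the complete distance vector from each $x \in L$; in particular it knows $e(x)$ and the antipode $\antipode_r(x)$ for any ranking $r$. Bounds on distances between nodes outside $L$ may rely only on triangle inequality and non-negativity via nodes of $L$, so the tightest certifiable lower bound on $d(u,v)$ is $\max_{x \in L} |d(u,x) - d(v,x)|$.

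The core computation is that maximising this lower bound over $v \in V$ at fixed $u$ yields the strongest certifiable lower bound on $e(u)$, namely $\max_{x \in L} \max\bigl(d(u,x),\, e(x) - d(u,x)\bigr)$. The inner maximisation over $v$ at a fixed $x$ gives $d(u,x)$ (attained at $v = x$, using $\min_v d(v,x) = 0$) and $e(x) - d(u,x)$ (attained at any furthest node of $x$, using $\max_v d(v,x) = e(x)$).

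To conclude that $L \cup \antipode_r(L)$ is a radius certificate I would use an adversary argument. Suppose for contradiction that some $u$ has $\max_{x \in L} \max(d(u,x), e(x)-d(u,x)) < r = \rad(G)$. Build a graph $G'$ on $V$ by setting $d_{G'}(u,v) := \max_{x \in L}|d(u,x)-d(v,x)|$ for each $v$ and leaving all other distances as in $G$; a short check verifies that the resulting distance function is a consistent metric agreeing with $G$ on every pair involving $L$, so the deterministic algorithm returns the same answer on $G'$ as on $G$, namely $r$. But $e_{G'}(u) < r$ by construction, hence $\rad(G') < r$, contradicting correctness on $G'$. Therefore for each $u$ some $x \in L$ satisfies $d(u,x) \geq r$ (witness $x \in L$) or $e(x) - d(u,x) \geq r$, in which case $y = \antipode_r(x) \in \antipode_r(L)$ satisfies $d(u,y) \geq d(x,y) - d(u,x) = e(x) - d(u,x) \geq r$. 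Thus $L \cup \antipode_r(L)$ meets $\ovB(u,r)$ for every $u$, and the query lower bound $|L| \geq \tfrac{1}{2}|L_{OPT}|$ follows at once from $|L_{OPT}| \leq |L \cup \antipode_r(L)| \leq 2|L|$.

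The delicate point I foresee is realising the adversary's modified distance function as an actual (unweighted) graph on $V$, since the abstract metric constructed above need not be the shortest-path metric of any subgraph. I would address this either by phrasing the theorem in terms of an abstract one-to-all distance oracle (where only distance vectors are exposed and metric realisability is immaterial) or by an explicit graph construction, for instance adding shortcut edges incident to $u$ with integer lengths matching the prescribed new distances and relying on the fact that those already satisfy the triangle inequality with all preserved distances.
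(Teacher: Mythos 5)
Your final combinatorial step is exactly right and matches the paper's conclusion: once you know that for every $u$ some $x\in L$ satisfies $d(u,x)\ge\rad(G)$ or $e(x)-d(u,x)\ge\rad(G)$, the witness in $L\cup\antipode_r(L)$ and the factor-$2$ bound follow immediately. The gap is in how you justify that intermediate claim. You assert that the tightest certifiable lower bound on $d(u,v)$ is the single-hop quantity $\max_{x\in L}\abs{d(u,x)-d(v,x)}$, but under the paper's definition an algorithm may chain triangle inequalities through intermediate nodes \emph{not} in $L$ (lower-bounding the unknown distances recursively), so this is precisely what has to be proved, not assumed. The paper proves it by taking a shortest derivation and showing the chain telescopes: it bottoms out at some $x_p\in L$ with $e(x_p)-d(u,x_p)\ge\rad(G)$, which is your second case. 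Your adversary argument, meanwhile, does not go through as written. The construction that modifies only the distances from $u$ and ``leaves all other distances as in $G$'' is not a metric: take $L=\set{x}$ with $d(u,x)=d(v,x)=d(w,x)$ and $d(v,w)$ large; then $d_{G'}(u,v)=d_{G'}(u,w)=0$ while $d_{G'}(v,w)$ is large, violating the triangle inequality. The correct adversary object is the all-pairs $L^\infty$ pseudometric $d'(v,w)=\max_{x\in L}\abs{d(v,x)-d(w,x)}$, which does agree with $G$ on all pairs meeting $L$ and gives $e_{G'}(u)<\rad(G)$.

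Even so, the realizability problem you flag at the end is not a side issue but the crux, and neither of your repairs closes it. That pseudometric can identify distinct nodes and need not be the shortest-path metric of any unweighted graph on $V$; adding shortcut edges incident to $u$ shortens paths through $u$ and therefore perturbs the distance vectors seen from nodes of $L$, destroying the indistinguishability your argument needs; and retreating to an abstract distance oracle proves a different theorem than the one stated. The paper's definition of a one-to-all distance based algorithm --- one that ``relies solely on distances known from queries, triangle inequality, and non-negativeness of distances for bounding unknown distances'' --- is exactly what makes an adversary construction unnecessary: the theorem is about what such derivations can certify, and the paper's proof is a direct structural induction on a minimal derivation. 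To salvage your route you would either have to adopt that syntactic framing (at which point you still owe the telescoping argument) or prove graph-realizability of the adversary metric, which fails in general.
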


\begin{proof}
  Consider any Monte Carlo algorithm for computing the radius with positive probability. There must exist an execution with input $G$ that succeeds. Let $L$ denote the set of nodes queried for one-to-all distances in such an execution. A proof of correctness of the algorithm allows to conclude that all nodes have eccentricity $\rad(G)$ at least based on triangle inequality and the distances known to the algorithm. That is for each node $u$, there is a node $v$ such that we can prove $d(u,v)\ge \rad(G)$ based on triangle inequality and distances from nodes in $L$. Consider a node $v$ such that the proof uses a minimum number of triangle inequalities. If neither $u$ nor $v$ is in $L$, the proof must use a triangle inequality $d(u,v)\ge |d(u,x_1) - d(v,x_1)|$ for some node $x_1$ and a proof of $|d(u,x_1) - d(v,x_1)|\ge \rad(G)$. In the case $|d(u,x_1) - d(v,x_1)|=d(u,x_1)-d(v,x_1)$, we would have a shorter proof $d(u,x_1)\ge \rad(G)$ in contradiction with the choice of $v$. We thus continue with the case $|d(u,x_1) - d(v,x_1)|=d(v,x_1) - d(u,x_1)$. We then have a proof of $d(v,x_1)\ge \rad(G) + d(u,x_1)$. Either $x_1\in L$ or the proof uses a node $x_2$ such that $|d(v,x_2) - d(x_2,x_1)|\ge \rad(G) + d(u,x_1)$. The choice of $v$ again implies $|d(v,x_2) - d(x_2,x_1)|=d(v,x_2) - d(x_2,x_1)$ (otherwise $x_2$ would provide a shorter proof). By repeating this argument, we deduce that a shortest proof of $d(u,v)\ge \rad(G)$ uses a sequence $x_1,\ldots,x_p$ of nodes such that $d(v,x_i)\ge \rad(G) + d(u,x_1) + d(x_1,x_2) + \cdots + d(x_{i-1},x_i)$ for $i=1,\ldots,p$ and $x_p\in L$. Consider the antipode $a=\antipode_r(x_p)$. We then have $d(a,x_p)\ge d(v,x_p)\ge \rad(G) + d(u,x_1) + \cdots + d(x_{p-1},x_p)$. By triangle inequality, we have $d(u,a)\ge d(a,x_p) - d(u,x_p)$ and $d(u,x_p)\le d(u,x_1)+\cdots+d(x_{p-1},x_p)$. We thus have $d(u,a)\ge \rad(G)$. In all cases, $L\cup \antipode_r(L)$ must contain a node at distance $\rad(G)$ or more from $u$, it is thus a radius certificate.
\end{proof}

\lvbis{Note that a similar result does not hold for our diameter certificate definition. A one-to-all distance based algorithm for diameter could query a set $U$ of nodes such that for any pair $u,v$ there is $x\in U$ satisfying $d(u,x)+d(x,v)\le \diam(G)$ which implies  $d(u,v)\le\diam(G)$ by triangle inequality.
This is the subject of the next section in which  we study an interesting variation of diameter certificate.}


\section{Extended diameter certificates}
\label{sec:extended-diameter}

An \emph{extended diameter certificate} is defined as a set $X$ of nodes satisfying:
$$
\forall u\in V, \forall v\in V, \exists x\in X, d(u,x)+d(x,v)\le \diam(G).
$$
In contrast, $X$ is a diameter certificate if and only if $\forall u\in V, \exists x\in X, \forall v\in V, d(u,x)+d(x,v)\le \diam(G)$. Note that a diameter certificate is indeed an extended diameter certificate but the converse might not be true.

\lv{
We first show that the notion of extended diameter certificate is related to the minimum number of queries a one-to-all distance-based algorithm must perform.

\begin{theorem}\label{th:lbdiam}
    Given a graph $G$, if an execution of a one-to-all distance-based randomized Monte Carlo algorithm successfully computes its diameter by querying a set $X$ of nodes, then $X$ is an extended diameter certificate. Such a diameter algorithm thus requires at least $|X_{OPT}|$ one-to-all distance queries where $|X_{OPT}|$ is the minimum size of an extended diameter certificate.
\end{theorem}

\begin{proof}
    Consider any Monte Carlo algorithm for computing the diameter with positive probability. There must exist an execution with input $G$ that succeeds. Let $X$ denote the set of nodes queried for one-to-all distances in such an execution. A proof of correctness of the algorithm allows to conclude that, for any pair $u,v$ of nodes, we have $d(u,v)\le \diam(G)$ based on triangle inequality and the distances known to the algorithm, that is distances from nodes in $X$. We now show that $X$ is an extended diameter certificate. Given a pair $u,v$ of nodes, consider a proof $d(u,v)\le \diam(G)$ according to the correctness of the algorithm. If neither $u$ nor $v$ is in $X$, the proof must use a triangle inequality $d(u,v)\le d(u,x_1)+d(x_1,v)$ for some node $x_1$ and a proof of $d(u,x_1)+d(x_1,v)\le \diam(G)$. Either $x_1\in X$ or the proof obtains bounds of $d(u,x_1)$ and $d(x_1,v)$ based on other distances: using the existence of some nodes $x_2$ and $x_3$ and bounds of distances from $x_2$ and $x_3$, the triangle inequalities $d(u,x_1)\le d(u,x_2)+d(x_2,x_1)$ and $d(x_1,v)\le d(x_1,x_3)+d(x_3,v)$ lead to $d(u,x_1)+d(x_1,v)\le \diam(G)$. That is, it uses a proof of $d(u,x_2)+d(x_2,x_1) + d(x_1,x_3)+d(x_3,v)\le \diam(G)$. If neither $x_2\in X$ nor $x_3\in X$, we can repeat this argument until we have a sequence $u=y_0,y_1,\ldots y_k=v$ such that we have a proof of $d(y_0,y_1)+\cdots+d(y_{k-1},y_k)\le \diam(G)$ where some node $y_i$ with $i\in \{1,\ldots,k-1\}$ is in $X$. As triangle inequality implies $d(y_0,y_i)\le d(y_0,y_1)+\cdots+d(y_{i-1},y_i)$ and $d(y_i,y_k)\le d(y_i,y_{i+1})+\cdots+d(y_{k-1},y_k)$, the proof of correctness implies $d(y_0,y_i)+d(y_i,y_k)\le \diam(G)$, that is $d(u,y_i)+d(y_i,v)\le \diam(G)$ with $y_i\in X$.
\end{proof}
}

\begin{proposition}\label{prop:extended-diameter}
    Given a graph $G=(V,E)$ and an extended diameter certificate $X$ of $G$, we can compute its diameter in $O\left(|X|m +  2^{|X|}\binom{|X|+\ceil{\log n}}{|X|} n\log n\right)$ time.
\end{proposition}

Note that the above bound is subquadratic when $|X|=o(\log n)$ since $\binom{|X|+\ceil{\log n}}{|X|}=n^\eps 2^{O(|X|)}$ for any $\eps>0$ (see \cite{BringmannHM2020}).
\lvbis{Note also that checking that a set $X$ is an extended diameter certificate requires quadratic time in general (under SETH) when $X$ has size $\Omega(\log n)$ (see the reduction from SAT to diameter computation in \cite{RV13}). In contrast, the stronger requirement of our diameter certificate definition 
enables subquadratic-time verification that a set is indeed a diameter certificate as soon as its size is $o(n)$.}

\begin{proof}
    Perform a BFS traversal from each vertex $x\in X$ to obtain all distances $d(x,v)$ for $(x,v)\in X\times V$ in $O(|X|m)$ time. Let $x_1,\ldots,x_k$ denote the nodes of $X$ in an arbitrary order. We can then associate a point $p_v=(d(x_1,v),\ldots,d(x_k,v))$ in $\N^k$ to each vertex $v\in V$. Compute a data-structure enabling efficient orthogonal range queries on these points.
    Given a value $D$, we can test if $D$ is smaller or equal to the diameter of $G$ as follows. For each node $u$, we test whether the orthogonal range between points $(D-d(x_1,u),\ldots,D-d(x_k,u))$ and $(\infty,\ldots,\infty)$ is empty. If it is not, then there is a node $v$ such that $d(x,v)\ge D - d(x,u)$, or equivalently $d(u,x)+d(x,v)\ge D$, for all $x\in X$.
    However, since $X$ is an extended diameter certificate, there exists $x\in X$ such that $d(u,x)+d(x,v)\le diam(G)$, which thus implies $D\le\diam(G)$. Conversely, if $D\le \diam(G)$, then we can show that at least one such range is nonempty. Indeed, consider a diametral pair $a,b$ (i.e., satisfying $d(a,b)=\diam(G)$). By triangle inequality, we have $d(a,x)+d(x,b)\ge d(a,b)=\diam(G)\ge D$ for all $x\in X$. This implies $d(x,b)\ge D-d(x,a)$ for all $x\in X$, and $p_b$ is thus in the orthogonal range between $(D-d(x_1,a),\ldots,D-d(x_k,a))$ and $(\infty,\ldots,\infty)$. Thus, we can find the diameter of G using a binary search with $O(n\log n)$ range queries in total. A range tree on $n$ points in dimension $k$ can be computed in $O(n k B(n,k))$ time  where $B(n,k)=\binom{k+\ceil{\log n}}{k}$. It allows to answer if an orthogonal range query is empty in $O(2^k B(n,k))$ time. We use here the careful analysis of \cite{BringmannHM2020,Monier1980}, which holds even when $k$ is not constant. 
\end{proof}

\guillaume{Unfortunately, the ratio between the respective sizes of a smallest diameter certificate and a smallest extended diameter certificate can be arbitrarily large:
\begin{proposition}
    For a graph with $n$ nodes, where $n$ is an even number greater than two, the worst ratio between the respective sizes of optimal diameter and extended diameter certificates equals $n/2$.
\end{proposition}
\begin{proof}
    We first prove this ratio cannot exceed $n/2$.
    If an extended diameter certificate is reduced to one node, then it must be also a diameter certificate.
    Therefore, we may assume there are at least two nodes in a smallest extended diameter certificate.
    As in the worst case, a diameter certificate contains the whole set of nodes, its size is at most $n$, thus proving the ratio upper bound $n/2$. 
    In order to show this is sharp, let $G$ be a hyperoctahedron of size $n$ (complete graph minus a perfect matching).
    Since $G$ is self-centered (all its nodes have eccentricity equal to $2$), its only diameter certificate is the whole set of nodes.
    However, any two nonadjacent nodes form an extended diameter certificate.
\end{proof} 
}

\lv{As another example where both notions strikingly differ, consider a cycle $G$ of length $n=4k$ for some integer $k$. As already mentioned, the only diameter certificate of $G$ is the whole set of nodes. However, it has an extended diameter certificate of size four: if $G$ is the cycle $v_0,\ldots, v_{n-1}$, consider $X=\{v_0,v_{k},v_{2k},v_{3k}\}$. One can easily check that $X$ is an extended diameter certificate. For any pair $u,v$ of nodes, if there exists $x\in X$ on a shortest path from $u$ to $v$, we obviously have $d(u,v)= d(u,x)+d(x,v)\le \diam(G)=2k$. Otherwise, $u$ and $v$ must lie in-between two nodes of $X$, say $v_0$ and $v_k$ w.l.o.g, and we then have $d(u,v)\le d(u,v_0)+d(v_0,v)\le 2k=\diam(G)$.
}

\part{Graphs with small certificates}
\label{part:smallcert}

\section{Computing the radius in graphs with small radius certificate}
\label{sec:small-rad-cert}

The algorithmic results presented in Sections~\ref{sec:rad} and~\ref{sec:diam} are based on the existence of certificates for radius and diameter of size the number of antipodes at most. However, the size of these certificates may be far from optimum. In this section, we further discuss the consequences of having small radius or diameter certificates.
As a starter, let us observe that for any graph $G = (V,E)$, for every positive integer $r$, there exists a {\em non-deterministic} linear-time algorithm that decides whether $\rad(G) \le r$: specifically, we choose non-deterministically a vertex $c$ of the graph, then we perform a BFS traversal with starting vertex $c$ in order to compute $e(c)$, and finally, we accept if and only if $e(c) \le r$. 
If $\rad(G) \le r$, then there always exists a choice of vertex $c$ such that the algorithm accepts ({\it e.g.}, for a central vertex $c$); conversely, if $\rad(G) > r$, then the algorithm must reject for any choice of vertex $c$.
Conversely, if $G$ has a radius certificate with $\ell(n)$ vertices at most, then we can non-deterministically decide in $O(\ell(n)m)$ time whether $\rad(G) > r$: specifically, we select non-deterministically a subset $L$ of $\ell(n)$ vertices at most, then we perform a BFS traversal for every vertex of $L$, and finally we accept if and only if $e_L(v) > r$ for every $v \in V$.
If $\rad(G) > r$, then as before there always exists a choice of subset $L$ such that the algorithm accepts ({\it e.g.}, if $L$ is a radius certificate of minimum size); conversely, if $\rad(G) \le r$, then for any subset $L$, for any central vertex $v$, we get $e_L(v) \le r$, and so the algorithm always rejects.
Note that similar results can be derived for diameter computation on graphs with a diameter certificate of size $\ell(n)$ at most.
Now, according to~\cite{CGIMPS16}, if a problem can be solved both non-deterministically and co-nondeterministically in $O(T(N))$ time, where $N$ denotes the input size, then a conditional lower bound in $\Omega(T(N)^{1+\gamma})$, for any $\gamma > 0$, cannot be proved based on the strong exponential-time hypothesis, unless some nondeterministic version of the latter can be falsified.
As a result, for any class of graphs ${\cal G}$, {\em the existence of sublinear certificates for radius (for diameter, resp.) is a barrier against quadratic-time lower bounds for radius computation (for diameter computation, resp.) under the strong exponential time hypothesis}. This result holds even if there is no known efficient algorithm for computing such certificates. However, we note that it does not immediately imply the existence of a truly subquadratic-time algorithm for computing the radius (the diameter, resp.) when restricted to graphs in ${\cal G}$.

Our main result in this section is as follows:
\begin{theorem}\label{thm:rad-cert-random}
    For every class of graphs ${\cal G}$ such that every graph with $n$ vertices has a radius certificate of size $\ell(n)$ at most, there exists a {\em randomized} (Monte Carlo) algorithm for computing the radius (and a radius certificate) within ${\cal G}$ in $O(\ell(n)m\log^3{n})$ time with high probability.
\end{theorem}

Our approach for proving Theorem~\ref{thm:rad-cert-random} is based on the existence of an efficient approximation algorithm for computing a hitting set in any family of coballs in a graph, where coballs $\ovB(v,r)$ are implicitly represented by the vertex-value pairs $(v,r)$.
%
This approach is indeed developed in \cite{SM10} where a set-cover of a universe $U$ with $N$ elements is computed even though the collection ${\cal C}$ of $M$ sets is not explicitly given. Instead their algorithm processes through queries on elements and sets: a {\em containment query} on $u\in U$ consists in listing the identifiers of all sets of ${\cal C}$ that contain it; a {\em subset query} on $S \in {\cal C}$ consists in listing the elements in $S$. 

\begin{theorem}[\cite{SM10}]\label{thm:covert-set-cover}
    A set-cover of size at most $O(OPT \cdot \log{(N+M)})$ can be computed with high probability using at most $O(OPT \cdot \log^2{(N+M)})$ containment and subset queries through a Monte Carlo algorithm.
\end{theorem}

We believe that their technique can be turned into a Las Vegas algorithm with few efforts, but leave it for future work. We nevertheless make a step in that direction in Section~\ref{sec:approx-rad-cert}.
Our proof, in what follows, is based on the observation that queries for coballs can be simulated using BFS traversals, namely:

\begin{proof}[of Theorem~\ref{thm:rad-cert-random}]
It suffices to present an $O(\ell(n)m\log^2{n})$-time algorithm that for any graph $G=(V,E)$ that is in ${\cal G}$, for any positive integer $r$, decides whether $\rad(G) \ge r$.
Indeed, we can compute $\rad(G)$ with $O(\log{n})$ calls to this algorithm, using binary search.

We observe that $\rad(G) \ge r$ if and only if there exists a hitting set of size $\ell(n)$ at most for the family of coballs $\{\ovB(v,r) : v \in V\}$.
In one direction, if $\rad(G) \ge r$ then any radius certificate can be selected as the desired hitting set. In particular, there is one with $\ell(n)$ vertices at most.
In the other direction, if $\rad(G) < r$ then $\ovB(c,r) = \emptyset$ for any central vertex $c$, which implies the nonexistence of a hitting set.
As a result, we are left presenting an algorithm that either computes a hitting set of size $O(\ell(n)\log{n})$ for $\{\ovB(v,r) : v \in V\}$, or correctly asserts that no such hitting set with $\ell(n)$ vertices at most can exist. We next explain how such an algorithm can be derived from Theorem~\ref{thm:covert-set-cover}.

More specifically, we set $U = V$, ${\cal C} = \{\ovB(v,r) : v \in V\}$.
Note that $N = M = n$.
We apply Theorem~\ref{thm:covert-set-cover}, but we add the restriction that at most $O(\ell(n)\log^2{n})$ queries can be performed (else, we abort).
By doing so, we either compute a set-cover ${\cal S} \subseteq {\cal C}$ of size $O(\ell(n)\log{n})$ at most, or we correctly assert that no such set-cover of size $\ell(n)$ at most can exist.
A containment query for $u \in V$ can be simulated using a BFS traversal with starting vertex $u$.
Similarly, a subset query for $\ovB(v,r)$ can be also simulated using a BFS traversal with starting vertex $v$.
Hence, the above procedure requires $O(\ell(n)m\log^2{n})$ time.
Furthermore, by duality, a hitting set $L$ can be transformed in the set-cover $\{\ovB(x,r) : x \in L\}$.
Therefore, if there is no set-cover of size $\ell(n)$ at most, then there is no hitting set of size $\ell(n)$ at most either.
Let us now assume the existence of a set-cover ${\cal S}$ of size $O(\ell(n)\log{n})$ at most.
Then, again by duality, $L = \{ x : \ovB(x,r) \in {\cal S} \}$ is a hitting set.
\end{proof}

\section{Computing the diameter in graphs with small diameter certificate}
\label{sec:small-diam-cert}

The existence of a similar result for diameter certificates remains open.
However, the following weaker result can be proved:
\begin{theorem}
    \label{thm:diam-cert-random}
    For every class of graphs ${\cal G}$ such that every graph with $n$ vertices has a diameter certificate of size $\ell(n)$ at most, there exists a {\em randomized} (Monte Carlo) algorithm for computing the diameter within ${\cal G}$ in $O(m\sqrt{\ell(n)n}\log^{3/2}{n})$ time with high probability.
\end{theorem}

In order to prove this above result, we need the following variant of~\cite[Lemma 2]{DuDr21-netw}:
\begin{lemma}\label{lem:approx-ecc-sampling}
    For every graph $G=(V,E)$, for any $\varepsilon \in (0,1)$, there is an algorithm that computes, for every vertex $v$, some value $r(v) \le e(v)$ such that $\left|B[v,r(v)]\right| \ge (1-\varepsilon)n$ with high probability. The algorithm runs in $O(\varepsilon^{-1}m\log{n})$ time with high probability.
\end{lemma}
\begin{proof}
If $\varepsilon \le \frac{2\log n}{n}$ we can perform a BFS traversal from each vertex in time $O(nm)=O(\varepsilon^{-1}m\log n)$ and get $e(v)$ for all $v\in V$, allowing to set $r(v)=e(v)$ with the desired complexity. Now assume $p=\frac{2\log n}{\varepsilon n} < 1$.
    Every vertex $v$ is added independently at random with probability $p$ in a vertex subset $L$.
    By Chernoff bounds, the size of $L$ is  $O(\varepsilon^{-1}\log{n})$ with high probability.
    Then, for every vertex $v$ of $G$, we set $r(v) = e_L(v)$.
    For that, it suffices to perform a BFS traversal for every vertex of $L$, which can be done in $O(m|L|) = O(\varepsilon^{-1}m\log{n})$ time with high probability.

    Let $v \in V$ be arbitrary. By construction we have $r(v) \le e(v)$.
    Conversely, for every $v \in V$, let $\rho(v)$ be the largest integer $k$ such that $\left|B[v,k]\right| < (1-\varepsilon)n$ (this value is independent from the random subset $L$).
    The probability of the event $\left\{ L \setminus B[v,\rho[v]] = \emptyset \right\}$ is at most $(1-p)^{\varepsilon n} = (1-p)^{\frac 1 p \cdot 2 \log{n}} \leq n^{-2}$.
    Therefore, by a union bound over all vertices, the property that $\left|B[v,r(v)]\right| \ge (1-\varepsilon)n$ for all vertices $v$ simultaneously must hold with probability at least $1 - n^{-1}$.
\end{proof}

We are now ready to prove Theorem~\ref{thm:diam-cert-random}.
Roughly, the proof follows from a combination of Theorem~\ref{thm:covert-set-cover} and Lemma~\ref{lem:approx-ecc-sampling}.

\begin{proof}[of Theorem~\ref{thm:diam-cert-random}]
Let $\varepsilon \in (0,1)$ to be fixed later in the proof.
First we apply Lemma~\ref{lem:approx-ecc-sampling} for $G,\varepsilon$. 
This can be done in $O(\varepsilon^{-1}m\log{n})$ time with high probability.
Furthermore, with high probability it results in the computation, for every vertex $v$ of $G$, of some value $r(v) \le e(v)$ such that $\left|B[v,r(v)]\right| \ge (1-\varepsilon)n$. 
Then in what follows, we present an $O(\varepsilon n \ell(n) m \log{n})$-time algorithm that for any graph $G=(V,E)$ that is in ${\cal G}$, for any positive integer $D$, decides whether $\diam(G) \le D$.
We can compute $\diam(G)$ with $O(\log{n})$ calls to this algorithm, using binary search.

Our algorithm works as follows.
We set  ${\cal C} = \{B[v,D-r(v)] : v \in V\}$.
Then, we apply Theorem~\ref{thm:covert-set-cover} to $V,{\cal C}$, but we add the restriction that at most $O(\ell(n)\log^2{n})$ queries can be performed (else, we abort).
By doing so, we either compute a set-cover ${\cal S} \subseteq {\cal C}$ of size $O(\ell(n)\log{n})$ at most, or we correctly assert that no such set-cover of size $\ell(n)$ at most can exist.
As it was argued before in the proof of Theorem~\ref{thm:rad-cert-random}, both containment queries and subset queries can be simulated using BFS traversals.
Hence, the above procedure requires $O(\ell(n)m\log^2{n})$ time.

Assume first there is no set-cover of size $\ell(n)$ at most.
We claim that $\diam(G) > D$.
Indeed, suppose by contradiction $\diam(G) \le D$, and let us consider a diameter certificate $U = \{u_1,u_2,\ldots,u_q\}$ of size $q \le \ell(n)$.
We have $V = \bigcup_{i=1}^q B[u_i,\diam(G)-e(u_i)]$ and for every $1 \le i \le q$ we get $D - r(u_i) \ge \diam(G)-e(u_i)$.
But then, $\{B[u_i,D-r(u_i)] : 1 \le i \le q\}$ should be a set-cover of size $\ell(n)$ at most, a contradiction. 

From now on, we assume there is a set-cover ${\cal S} = \{B[v_1,D-r(v_1)],B[v_2,D-r(v_2)],\ldots,B[v_q,D-r(v_q)]\}$ of size $q = O(\ell(n)\log{n})$.
For every $1 \le i \le q$, we compute $L_i = V \setminus B[v_i,r(v_i)]$.
This can be done in $O(mq) = O(m\ell(n)\log{n})$ time using BFS traversals.
Let $L = \bigcup_{i=1}^q L_i$. We compute $e_L(v)$, for every vertex $v$ of $G$, that can be done in $O(m|L|)$ time.
Recall that with high probability, for every $1 \le i \le q$ we have $|L_i| < \varepsilon n$, and so we have $|L| = O(\varepsilon \ell(n) n \log{n})$ with high probability.
Furthermore, we claim that $\diam(G) \le D$ if and only if we have $\max\{e_L(v) : v \in V\} \le D$.
In one direction, if $D \ge \diam(G)$, then $D \ge \diam(G) = \max\{e(v) : v \in V\} \ge \max\{e_L(v) : v \in V\}$.
In the other direction, let us assume that $\max\{e_L(v) : v \in V\} \le D$, and suppose, for the sake of contradiction, $\diam(G) > D$. 
Let $x,y \in V$ be such that $d(x,y) > D$.
Since ${\cal S}$ is a set-cover, we may assume without loss of generality that $x \in B[v_1,D-r(v_1)]$.
We must have $y \notin L_1$ because $e_{L_1}(x) \le e_L(x) \le D < d(x,y)$.
However, this implies that $y \in B[v_1,r(v_1)]$, and so,  by the triangle inequality, $d(x,y) \le d(x,v_1)+d(v_1,y) \le D - r(v_1) + r(v_1) = D < d(x,y)$, giving a contradiction.

The total running time of the algorithm is in $O(\varepsilon^{-1}m\log{n} + \log{n} \times \varepsilon n \ell(n) m \log{n})$ with high probability.
This is minimized for $\varepsilon = \Theta\left(\frac 1 {\sqrt{n\ell(n)\log{n}}}\right)$.
\end{proof}

Before concluding this section, the last two results are simple conditions ensuring the existence of constant-size certificates for radius or diameter.

\begin{proposition}\label{prop:d>2r-2}   For an arbitrary graph $G$ with $\diam(G)\ge 2\rad(G)-1$, any diametral pair of vertices $x,y$ forms a minimum radius certificate of $G$. Futhermore, $\rad(G)=\lfloor\frac{d(x,y)+1}{2}\rfloor$. 
\end{proposition}

\begin{proof} Let $x,y$ be an arbitrary diametral pair of $G$, i.e., $d(x,y)=\diam(G)$. If there is a vertex $z\in V$ with $\max\{d(z,x),d(z,y)\}\leq \rad(G)-1$ then $\diam(G)=d(x,y)\le d(z,x)+d(z,y)\leq 2\rad(G)-2$, and a contradiction arises. The equality $\rad(G)=\lfloor\frac{d(x,y)+1}{2}\rfloor$ comes from the hypothesis $\frac{\diam(G)+1}{2}\ge \rad(G)$ and from $\diam(G)\le 2\rad(G)$ which implies $\frac{\diam(G)+1}{2}\le \rad(G)+\frac{1}{2}$.
\end{proof}

By Theorem~\ref{thm:rad-cert-random}, the above condition implies the existence of an $O(m\log^3{n})$-time algorithm for computing the radius with high probability in that case.
In particular, let us briefly introduce a class of graphs such that the condition of Proposition~\ref{prop:d>2r-2} always holds. Namely, an {\em eccentricity-preserving spanning tree} of a graph $G=(V,E)$ is a spanning tree $T$ such that $e_T(v) = e_G(v)$ for every vertex $v$. Note that being given an efficient algorithm for computing an eccentricity-preserving spanning tree of $G$ when it exists, we could compute all eccentricities of $G$ in additional $O(n)$ time. But the existence of such an algorithm is open. However, it was observed in~\cite{NaP90} that graphs $G$ that admit an eccentricity-preserving spanning tree must satisfy $\diam(G) \ge 2\rad(G)-1$. Therefore, to our best knowledge, Theorem~\ref{thm:rad-cert-random}, in combination with Proposition~\ref{prop:d>2r-2}, implies the first almost-linear time algorithm for radius computation in this class of graphs. Other such examples are discussed in Section~\ref{Helly}.

\begin{proposition}\label{prop:d=2r-any}   For an arbitrary graph $G$ with $\diam(G)= 2\rad(G)$, any central vertex $c$ forms a smallest  diameter certificate of $G$. 
\end{proposition}

\begin{proof} In every graph $G$ with $\diam(G)=2\rad(G)$, for an arbitrary central vertex $c\in C(G)$ and every vertex $v\in V$,  we have $d(v,c)+e(c)\le e(c)+e(c)= 2\rad(G)=\diam(G)$.
\end{proof}

Note that on graphs $G$ such that $\diam(G) = 2\rad(G)$, their radius, and so their diameter as well, can be computed in $O(m\log^3{n})$ time with high probability.

\section{Approximating radius certificate}
\label{sec:approx-rad-cert}

We define a \emph{strict radius certificate} of $G=(V,E)$ as a set $L\subseteq V$ such that $e_L(v)\ge \rad(G)$ for all $v\in V$, and $e_L(v) > \rad(G)$ for all $v\in V\setminus C(G)$. In other words, a strict radius certificate is a radius certificate such that the centers are the only nodes whose lower bound matches the radius. The following result shows that it is possible to compute a radius certificate having size within a logarithmic factor from a minimum strict radius certificate.

\begin{theorem}
\label{th:strict-rad-cert}
Given an arbitrary graph $G$, its radius, a center, and a radius certificate of size $O(\ell^*\log n)$ can be computed using $O(\ell^*\log^2 n)$ one-to-all distance queries in $O(\ell^* m\log^2 n)$ time with high probability where $\ell^*$ is the minimum size of a \emph{strict} radius certificate.
\end{theorem}

The main idea  is to find a constant approximation of the size of a minimum strict radius certificate by binary search without relying on a guess of the radius $R$ of the graph. 
For that, we need to dive into the technique of \cite{SM10} and use a key property of the respective set-cover instances we consider for all possible values of $R$ to obtain a Las Vegas algorithm. 

\begin{proof}[of Theorem~\ref{th:strict-rad-cert}]
    Consider a graph $G=(V,E)$ and a strict radius certificate $L^*$ of $G$ with minimum size. Starting from an estimate $\ell$ of $|L^*|$, we try to construct a radius certificate $L$ as described below. If the construction fails, then we double our estimate $\ell$ and start again until the construction succeeds. Initially, we set $\ell=2$.

    Given the current estimate $\ell$, we grow a lower certificate $L$ which is initially empty. We use random sampling to find an appropriate vertex $x$ to add to $L$. First, choose a constant $c$ according to multiplicative Chernoff bounds so that the sum $S$ of $c\ell'\log n$ independent Bernoulli random variables taking value 1 with probability $\frac{1}{\ell'}$ satisfies $\frac{c\log n}{2} < S < 2c\log n$ with probability greater than $1-1/n^2$ for any integer $\ell'>1$. We randomly, uniformly and independently sample a set $Y$ of $2c\ell\log n$ nodes in $V$.
    We can then perform a BFS traversal from each sampled node $y\in Y$ and obtain distances $d(x,y)$ for all $(x,y)\in V\times Y$. In particular, we obtain the eccentricity $e(y)$ of all vertices $y\in Y$.

    As $L^*$ is a radius certificate, there must exist a node $x\in L^*$ hitting at least $2c\frac{\ell}{|L^*|}\log n$ coballs $\ovB(y,\rad(G))$ centered at vertices $y\in Y$ (we say that $x$ hits $\ovB(y,r)$ when $d(x,y)\ge r$). Moreover, for $\ell\ge |L^*|$, the choice of $c$ implies that $x$ hits $c\log n$ such coballs with probability greater than $1-1/n^2$. For a given set $U\subseteq V$ of vertices, a node $v\in V$ and a distance $r$, we define the number $h_U(v,r)$ of coballs of radius $r$ with center in $U$ that are hit by $v$. A key observation is that $h_U(v,r)$ is non-increasing in $r$. Using the selection algorithm of \cite{BlumFPRT1973} for finding the element with rank $c\log n$ in the vector of distances $d(v,y)$ for fixed $v$ and $y\in Y$, we can thus find the largest value $r_v$ such that $h_Y(v,r_v)\ge c\log n$. We then select a node $x$ such that $r_x=\max_{v\in V}r_v$, and add $x$ to $L$. Note that the choice of $c$ also implies that $x$ hits at least a fraction $\frac{1}{2\ell}$ of coballs of radius $r_x$ with probability greater than $1-1/n^2$. We now have an estimate $R=r_x$ of the radius of $G$. If $\ell\ge |L^*|$, recall that it satisfies $R\ge \rad(G)$ with probability greater than $1-1/n^2$.

    We iterate this operation with the set $W=\set{v\in V : e_L(v) < R}$ of vertices with lower bound less than our radius estimate $R$ by sampling nodes in $W$ and proceeding similarly as above. Each time we add a node $x$ to $L$, we update $L$ and $R=\min_{x\in L}r_x$. As a fraction at least $\frac{1}{2\ell}$ of coballs of radius $R$ and center in $W$ are hit by $x$ with probability greater than $1-1/n^2$, the size of $W$ shrinks by a factor at most $1-\frac{1}{2\ell}$ in each iteration. Note also that the nodes $v$ removed from $W$ by the update of $e_L(v)$ or the update of $R$ satisfy $e(v)\ge e_L(v)\ge r_x$.  After $2\ell\log n$ iterations, the size of $W$ is less than $\ell\log n$ with high probability. If it is not the case, we start again from $L=\emptyset$. (As it is a rare event, the complexity is not impacted.) Otherwise, we then perform a BFS from each remaining node  and add a furthest node $f$ of $v$ to $L$ for each $v\in W$. For such nodes $f$, we define $r_f=e(v)$ which ensures $r_f\ge \rad(G)$ and $e(v)=e_L(v)\ge r_f$. At this point, we check that we have $R= \min_{v\in V}e_L(v)$. If our estimate $\ell$ is at least $|L^*|$, we have $r_x\ge \rad(G)$ for all $x\in L$ with probability greater than $1-\frac{2\ell\log n}{n^2}$ by union bound, and we can then deduce $R\ge \rad(G)$. As each node $v$ gets removed from $W$ at some point, it satisfies $e(v)\ge e_L(v)\ge r_x$ for some $x\in L$, which implies $R\le \rad(G)$. If ever we observe $R\not=\min_{v\in V}e_L(v)$, we conclude that our estimate $\ell$ was too small, we double it, and start again from $L=\emptyset$.

    To confirm that the construction has succeeded, we try to find a center $c$ with a similar procedure using the fact that $L^*$ is a strict radius certificate. Consider the set $D=\set{v\in V : e_L(v)=R}$ of vertices with minimum lower-bound. We randomly, uniformly and independently sample a set $Y$ of $2c\ell\log n$ nodes in $D$, perform a BFS traversal from each $y\in Y$ to obtain distances $d(x,y)$ for all $(x,y)\in V\times Y$. If some vertex $y\in Y$ has eccentricity $e(y)=R$, we have succeeded: $y$ is a center, $R=\rad(G)$ and $L$ is a radius certificate. Otherwise, we have $e(y)>R$ for all $y\in Y$, and $Y$ is a uniform random sample of $D\setminus C(G)$. We can thus proceed similarly as above and find a node $x$ such that $c\log n$ vertices $y\in Y$ at least satisfy $d(x,y)>R$. If we do not find such a node, we again conclude that $\ell$ was too small and start from scratch with new estimate $2\ell$. Otherwise, the choice of $x$ implies that a fraction at least $\frac{1}{2\ell}$ of vertices $v\in D\setminus C(G)$ satisfy $d(x,v)>R$ with probability greater than $1-1/n^2$. These vertices are then removed from $D$ and we iterate until we either find a center, conclude that $\ell$ is too small or $D$ has size less than $\ell \log n$ after at most $2\ell \log n$ iterations with high probability. (If none of the three cases occurs, we start again from $L=\emptyset$.)
    In the latter case we can perform a BFS traversal for each remaining node and either find a center or conclude that all nodes have eccentricity larger than $R$, and conclude again that $\ell$ was too small.

    Overall, we stop with high probability as soon as $\ell\ge |L^*|$. We may (with very low probability) double $\ell$ several times, but the algorithms stops in any case if ever $\ell\ge n$ as we then sample all nodes. The algorithm thus always computes a radius certificate $L$ and a center $c$. With high probability, we have $|L|=O(|L^*|\log n)$ and $O(|L^*|\log^2 n)$ BFS traversals are performed in total. The rest of the computation is dominated by selection algorithm of \cite{BlumFPRT1973} which costs $O(n\ell\log n)=O(m\ell\log n)$ per sampling. The overall computation time is thus $O(m|L^*|\log^2 n)$ with high probability.
\end{proof}

\guillaume{
Roughly, in order to check whether the output $L$ of our construction is correct, we derive a test that either asserts that $L$ is indeed a radius certificate, \textit{or} that any \emph{strict} radius certificate must have a size larger than $\ell$ (though, in the latter case, it could still be the case that $L$ is a radius certificate). 
More generally, we can prove that deciding whether $L$ is a radius certificate cannot be done in truly subquadratic time, assuming the so-called Hitting Set Conjecture of~\cite{AWV16}.
Recall that the Hitting Set Conjecture posits that ''there is no $\varepsilon > 0$ such that for all $c \ge 1$, there is an algorithm that given two lists $A,B$ of $n$ subsets of a universe $U$ of size at most $c \log n$, can decide
in $O(n^{2-\varepsilon})$ time if there is a set in the first list that intersects every set in the second list''.
For a triple $A,B,U$ as above, define the graph $H\langle A,B,U \rangle$ with node set $A \cup B \cup U$, such that there is an edge between every set of $A \cup B$ and every element of $U$ that is contained in it.
The authors in~\cite{AWV16} embed $H\langle A,B,U \rangle$ in a graph $G$ with $3$ additional nodes, such that deciding in $O(n^{2-\varepsilon})$ time whether $\rad(G)$ is either $2$ or $3$ would falsify the Hitting Set Conjecture.
Now, let $L$ be constructed as follows: we add in $L$ an arbitrary node of $A$ and an arbitrary node of $B$; for the three additional nodes of $V(G) \setminus (A \cup U \cup B)$, and for every node of $U$, we compute a furthest node, which we also add in $L$.
By construction, $L$ has size at most $|U|+5 = O(\log{n})$, and it satisfies $e_L(v) \ge 2$ for every node $v$.
Hence, deciding whether $L$ is a radius certificate boils down to deciding whether the radius of $G$ is either $2$ or $3$.
}

\medskip

\guillaume{In light of this above hardness result, it is natural to ask what can be the worst ratio between the size of a smallest strict radius certificate and that of an optimal radius certificate.
Unfortunately, we prove this ratio to be unbounded, namely:
\begin{proposition}
    For a graph with $n$ nodes, where $n$ is an odd number greater than three, the ratio between the size of a smallest strict radius certificate and that of a smallest radius certificate can be as large as $(n-1)/2$.
\end{proposition}
\begin{proof}
    Let $G'$ be a hyperoctahedron (complete graph minus a perfect matching) with $n-1$ nodes.
    The graph $G$ is obtained from $G'$ by adding a universal node $u$.
    By construction, $\rad(G) = 1$.
    In particular, the union of $u$ with any other node forms a radius certificate of size two.
    However, $u$ is the unique center of $G$. This implies that in any strict radius certificate $L$, for every node $x$ of $G'$, there must be a furthest node from $x$ in $L$.
    Since conversely, every node of $G'$ is the unique furthest node of some node in $G'$ (namely, the one vertex to which it is nonadjacent), we obtain that every node of $G'$ must be in $L$.
\end{proof}
}

\section{Approximating all-eccentricity certificate}
\label{sec:approx-all-cert}

\begin{theorem}    
    \label{th:approx-all-cert}
    Given an arbitrary $n$-node graph $G$, all eccentricities, a tight lower certificate $L$ of size $O(\ell^* \log n)$ and the minimum tight upper certificate $U^\preceq$ can be computed in $O(m \ell^* \card{U^\preceq} \log^2 n)$ time with high probability where $\ell^*$ denotes the size of a minimum tight lower certificate.
\end{theorem}

We provide a Las Vegas algorithm by combining the greedy set-cover approach of Theorem~\ref{thm:rad-cert-random} with properties of the minimum tight upper certificate developed in Section~\ref{sec:min-tight-upper-cert}. More precisely, we rely on the following lemmas.

\begin{lemma}
    \label{lem:vertex-cert-prefix}
    Given a graph $G$, a vertex $u$ and a furthest node $f$ of $u$ (i.e. $e(u)=d(u,f)$), and a shortest path $u=v_1,v_2,\ldots,v_k=f$, all tight upper vertex-certificates for $u$ on the path form a prefix $v_1,\ldots,v_i$ for some $i\in [k]$.
\end{lemma}

\begin{proof}
    Suppose that for some $j\in [k]$, $v_j$ is a tight upper vertex-certificate for $u$, that is $e(u)=d(u,f)=d(u,v_j)+e(v_j)$. We thus have $e(v_j)= d(u,f) - d(u,v_j)=d(v_j,f)$ as $v_1,v_2,\ldots,v_k=f$ is a shortest path. Now consider $v_i$ with $i\le j$. By triangle inequality, we have $e(v_i)\le d(v_i,v_j)+e(v_j) = d(v_i,v_j) + d(v_j,f)=d(v_i,f)$. As $v_i,\ldots,v_k=f$ is a shortest path, we also have $e(v_i)\ge d(v_i,f)$ and conclude $e(v_i)=d(v_i,f)$. Finally, as $v_i$ is on a shortest path from $u$ to $f$, we obtain $d(u,f)=d(u,v_i)+d(v_i,f)=d(u,v_i)+e(v_i)$. This allows to conclude.
\end{proof}

\begin{lemma}
    \label{lem:vertex-cert}
    Given a graph $G$, a vertex $u$ and a furthest node $f$ of $u$ (i.e. $e(u)=d(u,f)$), we have $u\in U^\preceq$ if and only if $e(z)>d(z,f)$
    for all vertices $z\in I(u,f)\setminus\{u\}$ where $I(u,f)=\set{v:d(u,v)+d(v,f)=d(u,f)}$ is the interval between $u$ and $f$.
\end{lemma}

\begin{proof}
    If $u\notin U^\preceq$, then there exists a tight upper vertex-certificate $z\in U^\preceq$ for $u$ by Proposition~\ref{prop:upperopt}. It thus satisfies $e(u)=d(u,z)+e(z)=d(u,f)$ which implies $d(u,f)\ge d(u,z)+d(z,f)$ and thus $d(u,f) = d(u,z)+d(z,f)$ and $d(z,f)=e(z)$ by triangle inequality. We therefore get 
    $z\in I(u,f)$.
    Moreover, it satisfies $e(z)\le d(z,f)$ and $z\not=u$ since $z\in U^\preceq$ and $u\notin U^\preceq$.

    Conversely, assume $u\in U^\preceq$ and consider a vertex $z\in I(u,f)\setminus\{u\}$.
    We thus have $d(u,f)=d(u,z)+d(z,f)$. By definition of $U^\preceq$, $u$ is maximal for $\preceq$, and we cannot have $u\preceq z$, that is $e(z)\not= e(u)-d(u,z)$. As $e(u)-d(u,z)=d(z,f)\le e(z)$, we deduce $e(z)>d(z,f)$.
\end{proof}

\begin{lemma}
  \label{lem:vertex-cert-comput}
  Given a graph $G=(V,E)$ with minimum tight lower certificate $L^*$, and a vertex $u\in V$, a tight upper vertex-certificate $z$ for $u$ can be computed in $O(m|L^*|\log^2n)$ time 
  with probability greater than $1-O(1/n^2)$.
\end{lemma}

\begin{proof}
  Consider a furthest node $f$ of $u$ (i.e. $e(u)=d(u,f)$) and the interval $I(u,f)=\set{v:d(u,v)+d(v,f)=d(u,f)}$.
  This interval can be identified by performing two BFS traversals, one from $u$, and one from $f$.
  We search for a tight upper vertex-certificate $z\in I(u,f)$ for $u$ such that $r=d(z,f)$ is minimal. Note that a vertex $z\in I(u,f)$ satisfies $d(u,f)=d(u,z)+d(z,f)$ and it is therefore a tight upper vertex-certificate for $u$ if and only if it has eccentricity $e(z)=d(z,f)$. We thus search for the minimum distance $r$ such that some vertex $z\in I(u,f)$ at distance $r$ from $f$ satisfies $e(z)=d(z,f)=r$.  We use binary search to find $r$ by building a tight lower certificate $L'$ as follows. Starting with an estimate $\ell=2$ of $|L^*|$, $L'=\emptyset$ and a distance $r_1\in [0,d(u,f)]$ we consider the set $W\subseteq I(u,f)$ of vertices $v$ that are at distance $r_1$ from $f$ and such that $e_{L'}(v)\le r_1$ (we have initially $W=\set{v\in I(u,f) : d(v,f)=r_1}$). We randomly, uniformly and independently sample a set $Y$ of $2c\ell\log n$ nodes in $W$ 
  where $c$ is chosen according to multiplicative Chernoff bounds so that the sum $S$ of $c\ell'\log n$ independent Bernoulli random variables taking value 1 with probability $\frac{1}{\ell'}$ satisfies $\frac{c\log n}{2} < S < 2c\log n$ with probability greater than $1-1/n^4$ for any integer $\ell'>1$. If $|W|<2c\ell\log n$, we use $Y=W$. We perform a BFS traversal from each sampled node $y\in Y$ and obtain distances $d(x,y)$ for all $(x,y)\in V\times Y$. In particular, we obtain the eccentricity $e(y)$ of all vertices $y\in Y$.

  If some vertex $y\in Y$ has eccentricity $e(y)=d(y,f)=r_1$, then $y$ is a tight upper vertex-certificate for $u$ and we know $r\le r_1$. Otherwise, we have $e(y) > r_1$ for all $y\in Y$. We then count for each node $x\in V$ the number of vertices $y\in Y$ for which it is a furthest node, that is when $e(y)=d(x,y)$. The definition of $L^*$ implies that some node $x\in L^*$ is a furthest node of at least $2c\frac{\ell}{|L^*|}\log n$ vertices of $Y$. If $\ell\geq |L^*|$ it is the case for at least $c\log n$ vertices  with probability greater than $1-1/n^4$ by the choice of $c$. If we cannot find such a vertex $x$, we conclude that our estimate $\ell$ is smaller than $|L^*|$ and restart from scratch with estimate $2\ell$. If we do find a vertex $x$ that is a furthest node of at least $c\log n$ vertices of $Y$, we add it to $L'$. Note that the choice of $c$ also implies that $x$ is a furthest node of a fraction at least $\frac{1}{2\ell}$ of $W$ with probability greater than $1-1/n^4$. Note that those vertices $v$ may be removed from $W$ after adding $x$ to $L'$ and updating lower-bounds $e_{L'}(v)$ for $v\in I(u,f)$ if $e_{L'}(v)>r_1$. In particular, all vertices from $Y$ are removed. We iterate this sampling process until we either find $y$ with eccentricity $e(y)=d(y,f)=r_1$, or restart with estimate $2\ell$, or $W$ gets size less than $2c\ell\log n$. This latter case occurs after $2\ell \log n$ iterations at most and we then compute the eccentricities of all remaining vertices in $W$. Finally, if no vertex $y$ with eccentricity $e(y)=d(y,f)=r_1$ is found, we have $e(y)>d(y,f)$ for all vertices of $I(u,f)$ at distance at most $r_1$ from $f$ by Lemma~\ref{lem:vertex-cert-prefix}, and we know $r>r_1$.

  We then repeat this process for appropriate distances $r_2,\ldots,r_k$.
  For $\ell\geq |L^*|$, we find the value of $r$, and a vertex $y\in I(u,f)$ satisfying $e(y)=r=d(y,f)$, after $k=O(\log \diam(G))$ probes for finding $r$ using $O(c\ell\log^2n)$ BFS traversals with probability greater than $1-O(\frac{\log n}{n^3})$. Moreover, the node $y$ found is in $U^\preceq$ by Lemma~\ref{lem:vertex-cert}. The overall number of traversals for all trials with estimates $\ell=2,4,\ldots$ is $O(|L^*|\log^2n)$ with probability greater than $1-O(1/n^2)$. 
\end{proof}

\begin{proof}[of Theorem~\ref{th:approx-all-cert}]
    Let $L^*$ be a minimum tight lower certificate, and let $\ell^*=|L^*|$ be its size. We use binary search to find a value $\ell\ge \ell^*$. Starting with $\ell=2$, we grow a lower certificate $L$ and an upper certificate $U$ as follows (both are initially empty). 
    Let $W$ denote the set of nodes $v$ with non-matching lower and upper bounds, i.e. satisfying $e_L(v)<e^U(v)$ ($W=V$ initially as we consider that $e^U(v)=\infty$ when $U=\emptyset$). Randomly, uniformly and independently sample a set $Y$ of $2c\ell\log n$ nodes in $W$ where $c$ is chosen similarly as in the proof of Lemma~\ref{lem:vertex-cert-comput}.
    We perform a BFS traversal from each sampled node $y\in Y$ and obtain distances $d(x,y)$ for all $(x,y)\in V\times Y$. In particular, we obtain the eccentricity $e(y)$ of the $2c\ell\log n$ vertices $y\in Y$.

    As long as we find $2c\ell\log n$ vertices $y$ such that $e_L(y)<e(y)$, we proceed as in Lemma~\ref{lem:vertex-cert-comput} by adding to $L$ a vertex $x$ that is a tight lower vertex-certificate for at least $c\log n$ vertices in $Y$. With probability greater than $1-1/n^2$, it is also a tight lower vertex-certificate for a fraction at least $\frac{1}{2\ell}$ of vertices in $W$ with untight lower-bound. Again, if no such vertex is found, we double our estimate $\ell$ of $\ell^*$ and restart with empty certificates.

    As soon as we sample a node $y$ such that $e(y)=e_L(y)$, we apply Lemma~\ref{lem:vertex-cert-comput} with $u=y$ to find a tight upper vertex-certificate $z$ for $y$ using at most $O(\ell^*\log^2 n)$ BFS traversals with probability greater than $1-O(1/n^2)$, and add $z$ to $U$. We then update $W$ by removing vertices satisfying $e_L(v)=e^U(v)$. Note that we remove $y$ in particular. We then continue sampling until we find $2c\ell\log n$ vertices with untight lower bound. If ever the size of $W$ is less than $2c\ell\log n$, we finish by performing a BFS traversal from each remaining vertex $v$ and adding a furthest vertex of $v$ to $L$ for each $v\in W$.

    Similarly as in the proof of Lemma~\ref{lem:vertex-cert-comput}, we overall perform $O(\ell^*\log^2n)$ BFS traversals with high probability for the computation of $L$. However, each time we sample a vertex $u$ with tight lower-bound we perform $O(\ell^*\log^2n)$ BFS traversals with  probability greater than $1-O(1/n^2)$ to find a tight upper vertex-certificate $z\in U^\preceq$ for $u$. Note that $z$ is added only once to $U$. The reason is that if there exists $u'\in W$  such that $z$ is also a tight upper vertex-certificate for $u'$, we have $e^U(u')=e(u')$ after adding $z$ to $U$. Although $u'$ may still be in $W$ when $e_L(u')<e(u')$, it will be removed as soon as $L$ is modified so that $e_L(u')=e(u')$. The computation of $U$ thus costs $O(\ell^*\card{U^\preceq}\log^2n)$ BFS traversals overall with high probability.
\end{proof}






\section{Certificates for diameter approximation}
\label{sec:approximate-diameter}

Given a value $c\ge 1$, a \emph{$c$-approximate diameter certificate} of a graph $G=(V,E)$ is defined as a set $X$ of nodes satisfying:
$$
\forall u\in V, e^X(v)\le c\cdot\diam(G).
$$
In particular, our concept of diameter certificate corresponds to that of $1$-approximate diameter certificate.
Note that any singleton $\set{x}$ with $x\in X$ is a 2-approximate diameter certificate since $e^{\set{x}}(v)= d(x,v)+e(x)\le 2\diam(G)$ for all $v\in V$.
Let us first link this notion of approximate certificate with diameter approximation.

\begin{proposition}\label{prop:diam-approx}
    Given a $c$-approximate diameter certificate $X$ of a graph $G=(V,E)$, we can compute in deterministic $O(m|X|)$ time a $c$-approximation of its diameter, that is a value $D$ such that $\diam(G)\le D\le c\cdot\diam(G)$.
\end{proposition}

\begin{proof}
    Perform BFS traversals from nodes of $X$ to obtain $e^X(v)$ for all $v\in V$. We then set $D=\max\set{e^X(v) : v\in V}$. By definition, we have $e^X(v)\le c\cdot \diam(G)$ for all $v\in V$, implying $D\le c\cdot \diam(G)$. Moreover, if $a$ is any diametral node, we have $\diam(G)=e(a)\le e^X(a)\le D$.
\end{proof}

We show that any graph of even diameter has a sublinear $\frac{3}{2}$-approximate diameter certificate.

\begin{theorem}\label{th:approx-cert-even}
    Any graph $G=(V,E)$ with even diameter has a $\frac{3}{2}$-approximate diameter certificate $X$ of size $O(\sqrt{n}\log n)$. Moreover, such a set $X$ and a $3/2$-approximation of the diameter can be computed in randomized $O(m\sqrt{n}\log n)$ time by using a Monte Carlo algorithm. 
\end{theorem}

The following proof is a variation of the diameter approximation approach in \cite{AingworthCIM1999,RV13,ChechikLRSTW2014}.

\begin{proof}
    For $p\le n$, define the \emph{$p$-nearest set} $N_p(v)$ as the $p$ closest vertices to $v$, breaking ties arbitrarily. More precisely, $|N_p (v)| = p$ and for all pairs of vertices $x \in N_p (v)$ and $y \notin N_p (v)$, we have $d(v, x) \le d(v,y)$. In the sequel, we use $p=\ceil{\sqrt{n}}$.

    Pick a random sample $S$ of the vertices of size $\Theta(n/p \log n)=\Theta(\sqrt{n}\log n)$. With high probability, $S$ hits all the $p$-nearest sets $N_p(v)$, that is $S\cap N_p(v)\not=\emptyset$ for all $v\in V$. Let $w$ be a vertex maximizing $d(w,S)=\min_{x\in S} d(w,x)$. The proof follows from the two following claims combined with \Cref{prop:diam-approx}.

    \begin{claim}
        If $d(w,S)\le \diam(G)/2$, then $S$ is a $\frac{3}{2}$-approximate diameter certificate.
    \end{claim}

    As $w$ maximizes the distance from $S$, for any node $v$, there exists $x\in S$ such that $d(v,x)\le \diam(G)/2$. We thus have $e^S(v)\le d(v,x) + e(x)\le \diam(G)/2 + \diam(G) = \frac{3}{2} \diam(G)$. 

    \begin{claim}\label{claim:large-dist}
        If $d(w,S) > \diam(G)/2$, then $N_p(w)$ is a $\frac{3}{2}$-approximate diameter certificate.
    \end{claim}

    Since $S$ must hit all the $p$-nearest sets, there exists $x\in S\cap N_p(w)$. We thus have $d(w,x)\ge d(w,S)>\diam(G)/2$, and so $N_p(w)$ contains $B(w,\diam(G)/2)$. Now consider a vertex $v$ and a shortest path $P$ from $w$ to $v$. Let $\ell=d(w,v)$ denote the length of $P$. If $\ell\le \diam(G)/2$, we have $e^{N_p(w)}(v)\le d(v,w)+e(w)\le \diam(G)/2+\diam(G)=\frac{3}{2}\diam(G)$. Otherwise, let $x$ be the vertex of $P$ at distance $\diam(G)/2$ from $w$ so that $d(v,x)=\ell-\diam(G)/2\le \diam(G)/2$. We then have $e^{N_p(w)}(v)\le d(v,x)+e(x)\le \diam(G)/2+\diam(G)=\frac{3}{2}\diam(G)$.

\end{proof}

Similarly to \cite{ChechikLRSTW2014}, we can extend the above result to show that any graph has a $\frac{3}{2}$-approximate diameter certificate of size $O(\sqrt{m}\log n)$ that can be computed in $O(m^{3/2})$ time. If the graph has constant degree, it suffices to include neighbors of $N_p(w)$ in the certificate of Claim~\ref{claim:large-dist}. Otherwise, the idea is to work in the weighted setting and lower the maximum degree of the graph $G$ to at most 3 by processing each vertex $u$ of degree $k>3$ one after another as follows. Replace $u$ by a cycle of length $k$. Each cycle edge has weight 0. Associate each neighbor $v$ of $u$ to a distinct vertex $c_v$ of the cycle and replace each edge $uv$ with $c_vv$.
Distances in the resulting graph $G'$ are preserved in the following sense: if $c$ and $c'$ are cycle nodes resulting from the replacement of $v$ and $v'$ respectively, we have $d_{G'}(c,c')=d_G(v,v')$. A $\frac{3}{2}$-approximate diameter certificate for $G'$ thus yields one for $G$ by replacing each cycle node by the original vertex it was created from. One can easily check that $G'$ has $\Theta(m)$ vertices and $\Theta(m)$ edges, inflating the certificate size from $O(\sqrt{n}\log n)$ to $O(\sqrt{m}\log n)$ and the running time from $O(m\sqrt{n}\log n)$ to $O(m^{3/2}\log n)$. This yields the following result.

\begin{theorem}\label{th:approx-cert}
    Any graph has a $\frac{3}{2}$-approximate diameter certificate $X$ of size $O(\sqrt{m}\log n)$. Moreover, such a set $X$ and a $3/2$-approximation of the diameter can be computed in randomized $O(m^{3/2}\log n)$ time by using a Monte Carlo algorithm. 
\end{theorem}

We ask whether this can be generalized to higher approximation ratios. 

\begin{question}
    For integral $k>0$, does any graph has a $(2-\frac{1}{k})$-approximate diameter certificate of size $\widetilde O(m^{1/k})$?
\end{question} 

Note that the lower bound of \cite{DaLiWi} rules out the existence of certificates of size $n^{1/k-\eps}$ for any $\eps>0$ under SETH and NSETH as \cref{prop:diam-approx} would then yield a non-deterministic algorithm running in $O(mn^{1/k-\eps})$ time.

\guillaume{Related to the concept of $c$-approximate diameter certificate is that of distance $d$ domination. Indeed, for any graph $G$, any distance $(1 - \frac 1 k)\diam(G)$ dominating set is also a $(2-\frac{1}{k})$-approximate diameter certificate. The existence of a distance $\diam(G)/2$ dominating set of size $O(\sqrt{n\log{n}})$ has long been known~\cite{gavoille2001small}. It would be interesting to prove sublinear bounds on the size of $(1 - \frac 1 k)\diam(G)$ dominating sets for larger values of $k \ge 3$.}
\part{Practical algorithms}
\label{part:antipodes}

\section{Radius computation and certification}
\label{sec:rad}

We now propose a radius algorithm with complexity parameterized by the number of antipodes in the input graph. Similarly to previous algorithms~\cite{TK11,BCHKMT15}, it maintains lower bounds on eccentricities of all nodes and performs one-to-all distance queries from nodes with minimal lower bound as a first ingredient. Similarly to the two-sweeps and four-sweeps heuristics~\cite{H73,MLH09,CGHLM13}, it performs one-to-all distance queries from antipodes of previous query source as a second ingredient. However, contrarily to these heuristics, it iterates until an exact solution is obtained (together with a radius certificate).

\smallskip

The idea of the algorithm is to maintain both a set $K$ of nodes with distinct antipodes and a lower certificate $L$ (initally empty). We iteratively select a node $u$ with minimal lower-bound $e_L(u)$ and perform a one-to-all distance query from $u$. As long as this bound is not tight (i.e., $e_L(u) < e(u)$), we add $\antipode_r(u)$ to $L$ and $u$ to $K$ while eccentricity lower-bounds are improved accordingly. (The fact that the bound is not tight implies that no antipode of $u$ could previously be in $L$.) As soon as the bound is tight (i.e., $e_L(u)=e(u)$), we then claim that $u$ is a center (i.e., its eccentricity is minimal) and return $e(u)$ as the radius and $L$ as radius certificate. Algorithm~\ref{alg:rad} formally describes the whole method.

Note the primal-dual flavor of this algorithm as the set $K$ (which has same size as $L$) is a packing for $\set{\antipode_r^{-1}(u) : u\in V}$ which is a restricted collection of $\set{\ovB(u,\rad(G)) : u\in V}$ for which the computed certificate $L$ is a covering.

\medskip
\begin{algorithm2e}[H]
  \Input{A graph $G$ and a ranking $r$ of its node set $V$.}
  \Output{The radius $\rad(G)$ of $G$, a center $c$ and a radius certificate $L$.}
  $L := \emptyset$\ \Comment{Lower certificate (tentative covering with $\set{\ovB(u,\rad(G)) : u\in V}$).\hspace{-1mm}}
  Maintain $e_L(v)=\max_{x\in L}d(v,x)$ (initially 0) for all $v\in V$.\\
  $K := \emptyset$\ \Comment{Packing for $\set{\antipode_r^{-1}(u) : u\in V}$.}
  
  \Do{$\min_{u\in V}e_L(u) < \min_{u\in K}e(u)$.}{
    Select $u\in V$ such that $e_L(u)$ is minimal.\\
    $D_u := \distfrom(G, u)$ \quad\Comment{Distances from $u$.}
    $e(u):=\max_{v\in V}D_u(v)$ \quad\Comment{Eccentricity of $u$.}
    \eIf{$e(u) = e_L(u)$}{
      \return{$e(u)$, $u$, and $L$}
    }{
      $a := \argmax_{v\in V}(D_u(v), r(v))$\quad\Comment{Antipode of $u$ for $r$.}
      $D_a:=\distfrom(G,a)$ \quad\Comment{Distances from $a$.}
      $K := K\cup\set{u}$\\
      $L := L\cup\set{a}$\\
      \lFor{$v\in V$}{$e_L(v):=\max(e_L(v), D_a(v))$}\\
    }
  }
  
  \Return{$e(c)$, $c$ and $L$ where $c=\argmin_{u\in K}e(u)$.}
  \caption{Computing the radius, a center and a radius certificate.}
  \label{alg:rad}
\end{algorithm2e}

\begin{theorem}
  \label{th:rad}
  Given a graph $G$ and a ranking $r$ on its node set $V$, Algorithm~\ref{alg:rad} computes its radius $\rad(G)$, a center $c$ and a radius certificate $L\subseteq \antipode_r(V)$ with $2\card{L}+1 = O(\card{\antipode_r(V)})$ one-to-all distance queries in $O(m\card{\antipode_r(V)})$ time. 
\end{theorem}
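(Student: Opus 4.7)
}

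The plan is to separately establish (i) correctness of the two possible exits from the loop, (ii) the invariant $L\subseteq \antipode_r(V)$ with distinct elements added per iteration, and (iii) the query count.

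First I would treat correctness. Throughout the execution the inequality $e_L(v)\le e(v)$ holds for every $v$ by Inequation~\ref{eq:certif}, since $L$ only ever gets filled with nodes whose distances are queried. For the early exit (line returning from inside the body), the selected node $u$ minimizes $e_L$ and satisfies $e_L(u)=e(u)$; for any $v\in V$ we then obtain
\[
  e(v)\ge e_L(v)\ge e_L(u)=e(u),
\]
so $u$ is a center and $e_L(v)\ge \rad(G)$, which makes $L$ a radius certificate. For the do-while exit, let $c=\argmin_{u\in K}e(u)$. The failing loop condition gives $\min_{v\in V}e_L(v)\ge e(c)$, hence for every $v$ we have $e(v)\ge e_L(v)\ge e(c)$, so $c$ is a center; the same inequality $e_L(v)\ge e(c)=\rad(G)$ shows that $L$ is a radius certificate.

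Next I would handle the size bound. Trivially $L\subseteq \antipode_r(V)$, because each element inserted into $L$ is of the form $\antipode_r(u)$ for the current selected $u$. The key observation is that each non-returning iteration adds a \emph{new} element to $L$: if $\antipode_r(u)$ were already in $L$, then the invariant $e_L(u)\ge d(u,\antipode_r(u))=e(u)$ together with $e_L(u)\le e(u)$ would force $e_L(u)=e(u)$, triggering the early return rather than the insertion. Consequently the insertions are into distinct elements of $\antipode_r(V)$, which both ensures termination within $|\antipode_r(V)|$ iterations and yields $|L|\le |\antipode_r(V)|$.

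Finally, for the query count, every non-returning iteration performs exactly two one-to-all distance queries (from $u$ and from its antipode $a$), and a returning iteration performs exactly one (from $u$ only). If the algorithm exits through the early return the total is $2|L|+1$, and if it exits through the do-while condition the total is $2|L|$; in both cases the bound is at most $2|L|+1=O(|\antipode_r(V)|)$, as claimed. The main subtlety I expect is the do-while exit case, where the center to be returned is not the currently selected node but rather the best node from the packing $K$; this forces the argument to go through $c=\argmin_{u\in K}e(u)$ instead of the last selected $u$, and is the one point where the somewhat unusual loop structure really matters.
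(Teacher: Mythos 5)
Your proposal is correct and follows essentially the same route as the paper's proof: the same case analysis of the two exits (early return giving a center directly, do-while exit going through $c=\argmin_{u\in K}e(u)$), the same observation that a repeated antipode would force $e_L(u)=e(u)$ and hence the early return, and the same count of two queries per non-returning iteration plus one for the final one. The only point stated slightly more explicitly in the paper is that once $L=\antipode_r(V)$ every lower bound is tight so the next iteration must return, but your distinct-insertions argument carries the same content.
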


\begin{proof} 
  We first prove the termination of Algorithm~\ref{alg:rad}. Consider an iteration where we add the antipode $a$ of the selected node $u$ to $L$. We cannot have $a\in L$ as we would then have $e_L(u)=e(u)$ which is the termination case. In other words, nodes added to $K$ have distinct antipodes and $K$ is a packing for $\set{\antipode_r^{-1}(u) : u\in V}$. As long as the do-while loop runs, each iteration adds a new node to $L$. If ever we reach the point where $L=\antipode_r(V)$, then the lower-bound of each node $u\in V$ is tight: $e_L(u)=e(u)$. The next iteration must then terminate.
The complexity is straightforward: at most $2\card{L}+1$ one-to-all distance queries are performed and $\card{L}\le \card{\antipode_r(V)}$ as $L\subseteq \antipode_r(V)$.

  We now prove the correctness of Algorithm~\ref{alg:rad}. Consider an iteration of the do-while loop. By the choice of $u$, we then have $e_L(v)\ge e_L(u)$ for all $v\in V$. If the termination case $e_L(u)=e(u)$ occurs, we have $e(v)\ge e_L(v)\ge e_L(u) = e(u)$ for all $v\in V$. This ensures that $u$ has minimum eccentricity (it is a center). We thus have $\rad(G)=e(u)$ and $L$ is a radius certificate as $e_L(v)\ge \rad(G)$ for all $v\in V$.
  Finally, if ever the condition for continuing the do-while loop is false, we have $\min_{u\in V}e_L(u) \ge \min_{u\in K}e(u)$. For $c=\argmin_{u\in K}e(u)$, we thus have $\rad(G)=\min_{u\in V}e(u)\ge \min_{u\in V}e_L(u) \ge e(c)\ge \rad(G)$. That is, $L$ is a radius certificate and $c$ is a center.
\end{proof}

In practice, we observe very fast convergence compared to $\card{\antipode_r(V)}$ (see Section~\ref{sec:exp}).  
We can give the following argument for that. The node $u$ selected at each iteration satisfies $e_L(u)=\min_{v\in V} e_L(v)\le \min_{v\in V} e(v)\le \rad(G)$. We thus have $\max_{x\in L} d(u,x)\le \rad(G)$, that is $u\in \cap_{x\in L} B[x,\rad(G)]$. It appears that the eccentricity of antipodes is generally large compared to radius in practical graphs, and the set $\cap_{x\in L} B[x,\rad(G)]$ tends to quickly shrink toward the set of centers as we add antipodes to $L$.

\section{Minimum eccentricity selection}
\label{sec:selection}

The core of the above radius algorithm is a general technique depending on a user-defined function $f$ that we call \emph{minimum eccentricity selection} (minES) for $f$. It is a procedure that returns a node with minimum eccentricity with respect to $f$. Its amortized complexity is low in graphs with few antipodes. More precisely, for a given graph $G$ and a function $f$ that maps a node $v$ and an estimation $\ell$ of $e(v)$ to a value, it provides a function $\argminecc$ returning a node $u$ such that $f(u,e(u))$ is minimum as long as $f$ is non-decreasing, i.e., $f(v,\ell)\le f(v,\ell')$ for $\ell\le \ell'$ for all $v$.  A similar function $\minecc$ returns the value of $f(u,e(u))$ for such node $u$. The challenge here is to avoid the computation of all eccentricities. 

\medskip
We implement such a selection by maintaining lower bounds of all eccentricities as in Algorithm~\ref{alg:rad} and by using these lower bounds as estimates for true eccentricities. When the selection procedure is called, a node $u$ which is minimum according to lower bounds is considered. Such a node is found by evaluating $f(v,e_L(v))$ for all $v\in V$ where $e_L(v)$ denotes the lower bound stored for a node $v$. A one-to-all distance query from $u$ is then performed. If its eccentricity happens to be equal to its lower-bound $e_L(u)$ we claim that $f(u,e(u))$ is minimum and return that node. Otherwise, the antipode of $u$ is used to improve lower bounds before trying again. Algorithm~\ref{alg:selection} formally describes this.

\begin{algorithm2e}[ht]
  $L := \emptyset$\quad\Comment{Lower certificate.}
  Maintain $e_L(v)=\max_{x\in L}d(v,x)$ (initially 0) for all $v\in V$.\\
  \Function{$\argminecc(G,r,L,e_L,f)$}{
    \Repeat{}{
      $u := \argmin_{v\in V} f(v,e_L(v))$\\
      $D_u := \distfrom(G, u)$ \quad\Comment{Distances from $u$.}
      $e(u):=\max_{v\in V}D_u(v)$ \quad\Comment{Eccentricity of $u$.}
      \eIf{$e_L(u) = e(u)$}{
        \return{$u$}
      }{
        $a := \argmax_{v\in V}(D_u(v), r(v))$\quad\Comment{Antipode of $u$ for $r$.}
        $D_a:=\distfrom(G,a)$ \quad\Comment{Distances from $a$.}
        $L:=L\cup\set{a}$\\
        \lFor{$v\in V$}{$e_L(v):=\max(e_L(v), D_a(v))$}\\
      }
    }
  }
  \Function{$\minecc(G,r,L,e_L,f)$}{
    $u := \argminecc(G,r,L,e_L,f)$\\
    \Return{$f(u,e_L(u))$}
  }
  \caption{Minimum eccentricity selection with respect to function $f$.}
  \label{alg:selection}
\end{algorithm2e}

\begin{proposition}\label{prop:selection}
  Given a graph $G$ and a ranking $r$ of its node set $V$, we consider a function $f$ such that $f(v,\ell)$ can be evaluated for any $v\in V$ and $\ell\le e(v)$. If $f(v,.)$ is non-decreasing for all $v\in V$, i.e., $f(v,\ell)\le f(v,\ell')$ for $\ell\le \ell'$, function $\argminecc$ of Algorithm~\ref{alg:selection}  returns a node $u$ such that $f(u,e(u))$ is minimal and updates the lower certificate $L$ such that $e_L(u)=e(u)$ and $f(v,e_L(v))\ge f(u,e(u))$ for all $v\in V$. Moreover it can perform $k$ computations of $\argmin f(u,e(u))$ using $k + 2\card{L'}$ one-to-all distance queries and $(k+2\card{L'})n$ calls to $f$ where $L'\subseteq \antipode_r(V)$ denotes the set of nodes added to $L$. It runs in $O(m\card{\antipode_r(V)})$ time when $f(v,\ell)$ can be evaluated in $O(\deg(v))$ time.
\end{proposition}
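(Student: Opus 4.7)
The plan is to establish three things in turn: correctness of the node returned by $\argminecc$, maintenance of the postconditions on the lower certificate $L$, and the complexity bound. The key monotonicity observation is that $e_L(v) \le e(v)$ for all $v\in V$ at all times (each $x\in L$ satisfies $d(v,x)\le e(v)$), which combined with the hypothesis that $f(v,\cdot)$ is non-decreasing gives $f(v,e_L(v)) \le f(v,e(v))$ throughout the execution.

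For correctness, I would focus on the iteration where $\argminecc$ returns, i.e., where the selected $u$ satisfies $e_L(u)=e(u)$. By the selection rule $u = \argmin_{v} f(v,e_L(v))$, we have $f(u,e_L(u)) \le f(v,e_L(v))$ for every $v\in V$, and the monotonicity observation gives $f(v,e_L(v)) \le f(v,e(v))$. Substituting $e_L(u)=e(u)$ in the leftmost term yields $f(u,e(u)) \le f(v,e(v))$ for every $v$, so $u$ is a minimizer of $f(\cdot,e(\cdot))$. The same chain proves the postcondition $f(v,e_L(v)) \ge f(u,e(u))$ for all $v\in V$, and $e_L(u)=e(u)$ is exactly the return condition.

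For termination and the packing structure of $L'$, I would argue that every unsuccessful iteration adds a genuinely new node to $L$: if the antipode $a=\antipode_r(u)$ were already in $L$, then $e_L(u) \ge d(u,a) = e(u)$, contradicting $e_L(u) < e(u)$. Hence the set $L'$ of antipodes inserted across all calls lies in $\antipode_r(V)$ without repetition, so $\card{L'} \le \card{\antipode_r(V)}$. Termination of a single call then follows because once $L \supseteq \antipode_r(V)$, every $v\in V$ has its antipode in $L$, which forces $e_L(v)=e(v)$ for every $v$, and the next iteration must return.

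Finally, for the complexity I would amortize over the sequence of $k$ calls. Each returning iteration performs one BFS (from $u$); each non-returning iteration performs two BFS traversals (one from $u$, one from the newly added antipode) and contributes exactly one element to $L'$. This gives $k+\card{L'}$ iterations and $k+2\card{L'}$ one-to-all distance queries in total. Since each iteration evaluates $f$ over all of $V$ once to compute the $\argmin$, the total number of calls to $f$ is $(k+\card{L'})n \le (k+2\card{L'})n$ as stated. The only routine point to check is that the global invariant $e_L(v)=\max_{x\in L}d(v,x)$ is preserved across calls: it holds initially with $L=\emptyset$, and is re-established by the explicit $e_L(v):=\max(e_L(v),D_a(v))$ update after each insertion, so the monotonicity argument above remains valid throughout the whole sequence of calls. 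The main (mild) subtlety is precisely this amortization: the $k$ parameter counts returns, while BFS and $f$-evaluation costs are paid per iteration, and it is the bound $\card{L'} \le \card{\antipode_r(V)}$ that ties the two together.
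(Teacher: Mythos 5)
Your proof is correct and follows essentially the same route as the paper's: the same monotonicity chain $f(u,e(u))=f(u,e_L(u))\le f(v,e_L(v))\le f(v,e(v))$ for correctness, the same observation that an unsuccessful iteration forces the antipode of $u$ to be absent from $L$ (hence $\card{L'}\le\card{\antipode_r(V)}$ and termination), and the same amortized accounting of one query per successful and two per unsuccessful iteration. Your remark that the $f$-call count is really $(k+\card{L'})n$ is a slightly sharper version of the stated bound, but otherwise the arguments coincide.
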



\begin{proof}
The correctness of the selection comes from the fact that $f(v,.)$ is non-decreasing: if $e_L(u) = e(u)$, we then have
$f(u,e(u))=f(u,e_L(u))\le \min_{v\in V}f(v,e_L(v))\le \min_{v\in V}f(v,e(v))$.
The case $e_L(u) < e(u)$ can only occur if the antipode of $u$ was not in $L$
and happens at most $\card{\antipode_r(V)}$ times in total. In particular, each call to $\argminecc$ terminates.
If an algorithm makes $x$ calls to the $\argminecc$,
the number of successful iterations where $e_L(u) = e(u)$ is precisely $x$
while the number of unsuccessful iterations is at most the number of nodes added to $L$.
For each such iteration we perform 2 one-to-all distance queries instead of 1.
The total number of queries is thus $k+2\card{L'}$.
In all cases, we perform $n$ calls to $f$ per iteration: one for each node $v$. Assuming that $f(v,\ell)$ can be computed in $O(\deg(v))$ time, this costs $O(m)$ per iteration. The overall time required is thus in $O(m\card{\antipode_r(V)})$.
\end{proof}

\smallskip

As an example of usage, Algorithm~\ref{alg:rad} for radius is equivalent to the following algorithm using our minimum eccentricity selection for the basic function $v,\ell \mapsto \ell$.

\smallskip
\begin{algorithm2e}[H]
  $L:=\emptyset$; $e_L(v):=0$ for all $v\in V$.\\
  \lFunction{$\ecc(v,\ell)$:}{
    \return{$\ell$}
  }\\
  $c := \argminecc(G, r, L, e_L, \ecc)$\\
  \return{$e_L(c),c,L$}
  \caption{Radius computation through minimum eccentricity selection with respect to $f(v,\ell)=\ell$ (equivalent to Algorithm~\ref{alg:rad}).}
  \label{alg:radbis}
\end{algorithm2e}
 \medskip
 
As another example, the function $f$ can be used to select a node with minimum eccentricity in a set $W$ of nodes when $f(v,\ell)$ returns $\ell$ if $v\in W$ and $\infty$ otherwise. One can easily check that $f(v,.)$ is non-decreasing for all $v\in V$. Using this function in a variant of Algorithm~\ref{alg:radbis} then allows to find a node with minimum eccentricity in a given subset $W$ of nodes.
We use our minimum eccentricity selection as an optimization for diameter computation and as a core tool for computing all eccentricities in the next sections.

\section{Diameter computation and certification}
\label{sec:diam}

We now analyze a simple diameter algorithm. The main ingredient of the algorithm consists in maintaining upper bounds of all eccentricities and performing one-to-all distance queries from nodes with maximum upper bound. It thus follows the main line of previous practical algorithms~\cite{TK11,BCHKMT15}. However, we present the algorithm with a more general primal-dual approach which was not noticed before. Moreover, we introduce a new technique called \emph{delegate certificate}: after selecting a node $u$ with maximal upper bound, it consists in performing a one-to-all distance query from any tight upper vertex-certificate for $u$, that is a node $x$ such that $d(u,x)+e(x)=e(u)$ (see Section~\ref{sec:min-tight-upper-cert}). 
A possible choice for $x$ is $u$ itself in which case the algorithm becomes equivalent to the variant of \cite{TK11} where the selection procedure for BFS sources always selects a node with maximum upper-bound. However, we observe that choosing a node $x$ with minimal eccentricity offers much better performances in practice (see Section~\ref{sec:exp}). Our complexity analysis is independent of the choice of $x$, we thus present the algorithm in the most general manner.


The algorithm grows both a packing $K$ and an upper certificate $U$ until the upper bound $e^U(u)$ on the eccentricity 
of any node $u$ is at most the maximum eccentricity of nodes in $K$. As long as this condition is not satisfied, a node $u$ with maximal upper bound is selected and added to $K$. We then choose a tight upper vertex-certificate $x$ for $u$ and add it to $U$. Note that we now have $e^U(u)=e(u)\le \max_{v\in K}e(v)$ and $u$ cannot be selected again. This ensures that the termination condition is reached at some point when $U$ is a certificate that all nodes have eccentricity at most that of a maximum-eccentricity node in $K$, which must thus be equal to diameter.
See Algorithm~\ref{alg:diam} for a formal description.

\smallskip

We claim that the set $K$ is a packing for the collection $\D_{1/3}=\set{B(u,\frac{1}{3}(\diam(G)-e(u))) : u\in V}$  of open balls. As it has same size as the certificate $U$ returned by the algorithm in the end, this allows to state the following theorem.

\begin{theorem}\label{th:diam}
  Given a graph $G$, Algorithm~\ref{alg:diam} computes the diameter of $G$, a diametral node $b$ and a diameter certificate $U$ of size $\pi_{1/3}$ at most, with $O(\pi_{1/3})$ one-to-all distance queries in $O(m\pi_{1/3})$ time, where $\pi_{\alpha}$ is the maximum packing size for the collection of open balls $\D_\alpha=\set{B(u,\alpha(\diam(G)-e(u))) : u\in V}$  for $\alpha > 0$. It approximates minimum diameter certificate within a factor $\frac{\pi_{1/3}}{\pi_{[1]}}$ where $\pi_{[1]}$ is the maximum packing size for the collection $\D_{[1]}=\set{B[u,\diam(G)-e(u)] : u\in V}$.
\end{theorem}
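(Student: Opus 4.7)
I will proceed in three parts: first establishing correctness and termination, then proving the key structural claim that the packing $K$ maintained by Algorithm~\ref{alg:diam} is in fact a packing for $\D_{1/3}$ (this is the main obstacle), and finally extracting the approximation ratio via weak duality.

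\textbf{Correctness and termination.} The easy observation is that at each iteration a new node $u$ is added to $K$, so $K$ strictly grows and termination is guaranteed at the latest when $K=V$. When the loop exits we have $e^U(u)\le \max_{v\in K}e(v)$ for every $u\in V$; since $e(u)\le e^U(u)$ by the upper-bound property of $U$ and every $v\in K$ satisfies $e(v)\le \diam(G)$, this forces $\max_{v\in K}e(v)=\diam(G)$. Thus $b=\argmax_{v\in K}e(v)$ is a diametral node, and $U$ is a diameter certificate by definition.

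\textbf{Main claim: $K$ is a packing for $\D_{1/3}$.} Suppose for contradiction that two distinct nodes $u_1,u_2\in K$ both lie in some open ball $B(w,\tfrac{1}{3}(\diam(G)-e(w)))$. Let $u_1$ be selected first, at which moment $x_1$ with $d(u_1,x_1)+e(x_1)=e(u_1)$ is added to $U$. When $u_2$ is later selected with maximal $e^U$, we have, applied to any diametral node $b$, $e^U(u_2)\ge e^U(b)\ge e(b)=\diam(G)$. Since $x_1\in U$ at that point,
\[
  \diam(G)\ \le\ e^U(u_2)\ \le\ d(u_2,x_1)+e(x_1)\ =\ d(u_2,x_1)+e(u_1)-d(u_1,x_1),
\]
and the triangle inequality $d(u_2,x_1)\le d(u_1,u_2)+d(u_1,x_1)$ gives $\diam(G)-e(u_1)\le d(u_1,u_2)$. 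On the other hand, the assumption $u_1,u_2\in B(w,\tfrac{1}{3}(\diam(G)-e(w)))$ yields $d(u_1,u_2)<\tfrac{2}{3}(\diam(G)-e(w))$, while the standard eccentricity triangle inequality $e(u_1)\le e(w)+d(u_1,w)$ together with $d(u_1,w)<\tfrac{1}{3}(\diam(G)-e(w))$ gives $\diam(G)-e(u_1)>\tfrac{2}{3}(\diam(G)-e(w))$. Combining the two strict inequalities contradicts $\diam(G)-e(u_1)\le d(u_1,u_2)$. Hence $K$ is indeed a packing for $\D_{1/3}$, so $|U|=|K|\le \pi_{1/3}$, and the algorithm performs at most $2\pi_{1/3}=O(\pi_{1/3})$ one-to-all distance queries.

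\textbf{Approximation ratio.} A minimum diameter certificate is by definition a minimum covering for the collection $\D_{[1]}=\{B[u,\diam(G)-e(u)]:u\in V\}$. Weak duality (cf. Section~\ref{sec:prelim}) gives $\pi_{[1]}\le \kappa(\D_{[1]})$, so
\[
  |U|\ \le\ \pi_{1/3}\ =\ \frac{\pi_{1/3}}{\pi_{[1]}}\cdot \pi_{[1]}\ \le\ \frac{\pi_{1/3}}{\pi_{[1]}}\cdot \kappa(\D_{[1]}),
\]
which is the claimed ratio. I expect the main difficulty to reside in the packing argument above: the subtlety is that the algorithm only enforces that successive selections improve $e^U$ beyond eccentricities already in $K$, so the quantitative gap needed to produce disjoint $\D_{1/3}$-balls must be extracted by combining the bound $\diam(G)\le e^U(u_2)$ (from maximality) with the tight equality $d(u_1,x_1)+e(x_1)=e(u_1)$ carried by the delegate certificate $x_1$.
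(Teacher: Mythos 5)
Your proof is correct and follows essentially the same route as the paper: the same packing argument for $K$ with respect to $\D_{1/3}$, using the tightness $d(u_1,x_1)+e(x_1)=e(u_1)$ of the delegate certificate together with the triangle inequality along $x_1,u_1,w,u_2$, and the same weak-duality step for the approximation ratio. The only cosmetic difference is that you derive the contradiction as a bound on $d(u_1,u_2)$ versus the ball radius, whereas the paper phrases it as $e^U(u_2)<\diam(G)$ contradicting the maximality of the selected node's upper bound — the underlying inequalities are identical.
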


\medskip
\begin{algorithm2e}[ht]
  \Input{A graph $G$ and a ranking $r$ of its node set $V$.}
  \Output{The diameter $\diam(G)$ of $G$, a diametral node $b$ and a diameter certificate $U$.}
  $U := \emptyset$\quad\Comment{Upper certificate (tentative covering with $\set{B[u,\diam(G)-e(u)] : u\in V}$).}
  Maintain $e^U(u) = \min_{x\in U} d(u,x) + e(x)$ (initially $\infty$) for all $v\in V$.\\
  $K := \emptyset$\quad\Comment{Packing for $\set{B(u,\frac{1}{3}(\diam(G)-e(u))) : u\in V}$.}
  \Do{$\max_{u\in K}e(u) < \max_{u\in V}e^U(u)$}{
    Select $u$ such that $e^U(u)$ is maximal.\\
    $D_u := \distfrom(G, u)$ \quad\Comment{Distances from $u$.}
    $e(u):=\max_{v\in V}D_u(v)$ \quad\Comment{Eccentricity of $u$.}
    $K:=K\cup\set{u}$\\
    \nl\label{line:delegate}%
    Select $x$ such that $d(u,x)+e(x) = e(u)$.\quad\Comment{Delegate certificate for $u$.}
    \nl\label{line:x-sweep}%
    $D_x := \distfrom(G,x)$ \quad\Comment{Distances from $x$.}
    $e(x):=\max_{v\in V}D_x(v)$ \quad\Comment{Eccentricity of $x$.}
    $U:=U\cup\set{x}$\\
    \lFor{$v\in V$}{$e^U(v):=\min(e^U(v), D_x(v)+e(x))$}\\
  }

  \medskip
  \Return{$e(b)$, $b$ and $U$ where $b\in K$ satisfies $e(b)=\max_{u\in K}e(u)$.}
  
  \caption{Computing diameter and a diameter certificate. The basic version of the algorithm consists in selecting $x:=u$ in Line~\ref{line:delegate}. (The redundant one-to-all distance query in Line~\ref{line:x-sweep} can then be omitted.)}
  \label{alg:diam}
\end{algorithm2e}

\begin{proof} 
  We already argued the termination and the correctness of the algorithm above.
  We thus show the packing property of $K$. Suppose for the sake of contradiction that $K$ is not a packing for $\D_{1/3}$. Consider the first iteration where a node $v$ is added to $K$ while some open ball $B(y,\frac{1}{3}(\diam(G)- e(y)))$ in $\D_{1/3}$ contains both $v$ and some node $u\in K$ added previously.
  Let $x$ be the tight upper vertex-certificate for $u$ that was added to $U$. By triangle inequality, we have $d(x,v)\le d(x,u) + d(u,y) + d(y,v)$. The choice of $x$ implies $d(x,u) =  e(u) -  e(x) \le d(u,y) +  e(y) -  e(x)$. Combining the two inequalities, we obtain $d(x,v)\le 2 d(u,y) + d(y,v) +  e(y) -  e(x)$. As $u$ and $v$ are in $B(y,\frac{1}{3}(\diam(G)- e(y)))$, we have $2 d(u,y) + d(y,v) < \diam(G) -  e(y)$ and finally get $d(x,v) < \diam(G) -  e(x)$ which implies $e^U(v) < \diam(G)$. However, it is required that $v$ has maximal upper bound $e^U(v)=\max_{w\in V}e^U(w)$ when it is selected for being added to $K$, in contradiction with $\max_{w\in V}e^U(w) \ge \max_{w\in V} e(w)=\diam(G)$. We conclude that $K$ must be a packing for $\D_{1/3}$. Both sizes of $K$ and $U$ are thus bounded by $\pi_{1/3}$. As any diameter certificate is a covering for $\D_{[1]}$ and has size $\pi_{[1]}$ at least, this guarantees  that the size of $U$ is within a factor $\frac{\pi_{1/3}}{\pi_{[1]}}$ at most from optimum.
\end{proof}

This analysis can be complemented when we start Algorithm~\ref{alg:diam} with $K:=\set{c}$ and $U:=\set{c}$ initially where $c$ is a center of the graph computed with Algorithm~\ref{alg:rad}.
We reference this combination as Algorithm~1+4 in the sequel.
A similar proof then allows to show that $K$ is a packing for $\D_{1/3}^c(G)=\set{B(u,\beta_u(\diam(G)- e(u))) : u\in V}$ where $\beta_u=1/3$ for $u\not= c$ and $\beta_c=1$. We obtain the following corollary from Theorems~\ref{th:rad} and~\ref{th:diam}.

\begin{corollary}\label{cor:diam}
  Given a graph $G$ and a ranking $r$ of its node set $V$, Algorithm~1+4 computes the diameter of $G$, a diametral node $b$ and a diameter certificate $U$ of size $\pi_{1/3}^c$ at most with $\card{U} + 2\card{\antipode_r(V)}+1$ one-to-all distance queries at most in $O(m(\pi_{1/3}^c+\card{\antipode_r(V)}))$ time where $c$ is a center of $G$ returned by Algorithm~\ref{alg:rad} and $\pi_{1/3}^c=\pi(\D_{1/3}^c)$ is the maximum packing size for the collection $\D_{1/3}^c=\set{B(u,\beta_u(\diam(G)- e(u))) : u\in V}$ of open balls with radii factors $\beta_u=\frac{1}{3}$ for $u\not= c$ and $\beta_c=1$ (for $u=c$).
\end{corollary}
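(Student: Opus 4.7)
The plan is to inherit correctness and termination from Theorem~\ref{th:diam} and then extend its packing argument to accommodate the distinguished ball centered at the precomputed center $c$. Correctness of Algorithm~1+3 transfers verbatim: preseeding $U$ with $c$ only supplies valid initial upper bounds $e^U(v)\le d(v,c)+\rad(G)$ and does not affect any loop invariant of Algorithm~\ref{alg:diam}. For the query count, Theorem~\ref{th:rad} delivers $c$ (together with $\distfrom(G,c)$) in $2\card{\antipode_r(V)}+1$ queries; in the basic variant ($x:=u$) of Algorithm~\ref{alg:diam}, every iteration thereafter adds one new node to both $K$ and $U$ at the cost of a single additional one-to-all distance query, while the preloaded element $c\in U$ incurs none. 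Summing gives the claimed bound of $\card{U}+2\card{\antipode_r(V)}+1$.

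The heart of the proof is to show that $K$ remains a packing for $\D_{1/3}^c$ throughout Algorithm~\ref{alg:diam}. Suppose, for contradiction, that two distinct nodes $u,v\in K$ both lie in some open ball $B(y,\beta_y(\diam(G)-e(y)))\in \D_{1/3}^c$, with $v$ added strictly after $u$. If $y\neq c$, then $\beta_y=\frac{1}{3}$ and the triangle-inequality computation in the proof of Theorem~\ref{th:diam} applies unchanged to the delegate certificate $x\in U$ recorded for $u$: it yields $d(x,v)<\diam(G)-e(x)$ and hence $e^U(v)<\diam(G)$. This contradicts the rule that put $v$ into $K$, since $v$ was chosen as an argmax of $e^U$ and, before termination, $\max_{w}e^U(w)\ge \max_{w}e(w)=\diam(G)$. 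If instead $y=c$, then $\beta_c=1$ and $d(v,c)<\diam(G)-\rad(G)$; since $c\in U$ from the initialization, $e^U(v)\le d(v,c)+\rad(G)<\diam(G)$, again contradicting the selection of $v$. This establishes the packing property, so $\card{U}=\card{K}\le \pi_{1/3}^c$.

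The one subtle point, and the piece that necessitates the refined collection $\D_{1/3}^c$, is that $c$ itself lies inside $B(c,\diam(G)-\rad(G))$ and is already an element of $K$ after initialization; the argument above only rules out \emph{subsequent} additions to $K$ residing in that ball, which is exactly what the packing condition (at most one element per set) demands. Once the packing bound is secured, combining it with the query count yields the statement, with Theorem~\ref{th:rad} ensuring that the preliminary radius computation fits within the overall budget. The main obstacle to watch is precisely this asymmetry between the ball centered at $c$ and the other balls: one must confirm that, even with the enlarged radius factor $\beta_c=1$, the presence of $c$ in $U$ from the start forces the selection rule to reject any later candidate inside $B(c,\diam(G)-\rad(G))$.
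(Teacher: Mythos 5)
Your proof is correct and follows essentially the same route as the paper: it inherits the packing argument of Theorem~\ref{th:diam} for balls centered at $y\neq c$ and disposes of the enlarged ball at $c$ by observing that $c\in U$ from initialization forces $e^U(v)\le d(v,c)+e(c)<\diam(G)$ for any later candidate $v\in B(c,\diam(G)-e(c))$, contradicting the selection rule. The query accounting via Theorem~\ref{th:rad} also matches the paper's intent, so nothing further is needed.
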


\begin{proof} 
  In addition to the proof of Theorem~\ref{th:diam}, we just have to consider the case when a node $v$ would be added to $K$ while having $v\in B(c,\diam(G)- e(c))$. As $c\in U$, we then have $e^U(v)\le d(c,v)+e(c)<\diam(G)$. This would raise a contradiction as the choice of $v$ relies on $e^U(v)=\max_{w\in V}e^U(w)\ge \max_{w\in V}e(w)\ge\diam(G)$.
\end{proof}

This explains efficiency of practical algorithms as we observe that coverings of small size often exist for $\D_{1/3}^c(G)$ in practical graphs (see Section~\ref{sec:exp}).
As mentioned before, a further optimization consists in selecting a tight upper vertex-certificate $x$ for $u$ with minimal eccentricity. Using a function $f$ such that $f(v,\ell)$ returns $\ell$ when $D_u(v)+\ell \le e(u)$ and returns $\infty$ otherwise, it can be obtained through our minimum eccentricity selection procedure by replacing Line~\ref{line:delegate} with $x := \argminecc(G,r,L,e_L,f)$. The algorithm is referenced as Algorithm~1+4' in the sequel. This optimization through the delegate certificate technique provides performances similar to previous practical algorithms (see Section~\ref{sec:exp}) while providing the complexity guarantee of Corollary~\ref{cor:diam}.

\section{All eccentricities}
\label{sec:all-ecc}

We now present a novel algorithm for all eccentricities. It relies on minimum eccentricity selection and the characterization of the minimum tight upper certificates presented in Section~\ref{sec:min-tight-upper-cert}.
We propose to compute all eccentricities of a graph as follows (see Algorithm~\ref{alg:all-ecc} for a formal description). We maintain both a lower certificate $L$ and an upper certificate $U$. As long as some node has untight upper bound, we select a node $u$ with untight upper bound and minimal eccentricity using our minimum eccentricity selection procedure which then additionally ensures $e_L(u)=e(u)$. (We use for that purpose a function returning $\infty$ when the eccentricity value equals the upper bound.) We claim that $u$ is in $U^{\preceq}$ (see Lemma~\ref{lem:cmax} below). We thus add $u$ to the upper certificate $U$ and update upper bounds accordingly. When our minimum eccentricity selection procedure detects that all nodes have tight upper bounds, lower bounds must be tight also. The algorithm then terminates with the following guarantees.

\medskip
\begin{algorithm2e}[t]
  \Input{A graph $G$ and a ranking $r$ of $V$.}
  \Output{All eccentricities, a tight lower certificate $L$ of $G$ and a tight upper certificate $U$ of $G$.}
  $L := \emptyset$\quad\Comment{Lower certificate (tentative hitting set for $\set{\ovB(u,e(u)) : u\in V}$)}
  Maintain $e_L(v)=\max_{x\in L}d(v,x)$ (initially 0) for all $v\in V$.\\
  $U := \emptyset$\quad\Comment{Upper certificate (maximal nodes for $\preceq$)}
  Maintain $e^U(u) = \min_{x\in U} d(u,x) + e(x)$ (initially $\infty$) for all $v\in V$.\\

  \Function{$\eccuntight(v,\ell)$}{
    \lIf{$\ell<e^U(v)$}{\return{$\ell$}} \lElse{\return{$\infty$}}
  }
  \While{$\minecc(L,e_L,\eccuntight) < \infty$}{
    $u := \argminecc(G,r,L,e_L,\eccuntight)$\\
    $D_u := \distfrom(G, u)$ \quad\Comment{Distances from $u$.}
    $e(u):=\max_{v\in V}D_u(v)$ \quad\Comment{Eccentricity of $u$.}
    $U:=U\cup\set{u}$\\
    \lFor{$v\in V$}{$e^U(v):=\min(e^U(v), D_u(v)+e(u))$}\\
  }
  \Return{$e^U,L,U$}
  
  \caption{Computing all eccentricities and tight lower/upper certificates.}
  \label{alg:all-ecc}
\end{algorithm2e}

\begin{theorem}
  \label{th:all-ecc}
  Given a graph $G$ and a ranking $r$ of its node set $V$, Algorithm~\ref{alg:all-ecc} computes all eccentricities, a tight lower certificate $L\subseteq \antipode_r(V)$ and the optimal tight upper certificate $U^{\preceq}$ with $\card{U^{\preceq}} + 2\card{L}$ one-to-all distance queries in $O(m(\card{U^{\preceq}} + 2\card{L}))$ time.
\end{theorem}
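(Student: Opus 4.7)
The plan is to establish, as the key intermediate claim, that every node $u$ returned by $\argminecc$ inside the \textbf{while} loop of Algorithm~\ref{alg:all-ecc} belongs to the optimum tight upper certificate $U^{\preceq}$ characterized in Proposition~\ref{prop:upperopt}. Once this lemma is in hand, correctness and the query count follow smoothly from Propositions~\ref{prop:upperopt} and~\ref{prop:selection}.

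To prove the key claim, I would argue by contradiction. Assume $u \notin U^{\preceq}$, so $u$ is non-maximal for $\preceq$, and fix any $x \in U^{\preceq}$ with $u \preceq x$ and $x \neq u$; then $e(x) = e(u) - d(u,x) < e(u)$. Since $x$ is $\preceq$-maximal, its only tight upper certificate is $x$ itself, so $e^U(x) = e(x)$ can hold only when $x \in U$. Two cases then arise. If $x \in U$ already, then $e^U(u) \le d(u,x) + e(x) = e(u)$, so $u$'s upper bound is tight and $\eccuntight(u,e(u)) = \infty$; but $u$ was returned by $\argminecc$ from inside the guarded loop body, so $\eccuntight(u,e(u))$ must be finite --- contradiction. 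If $x \notin U$, then $e^U(x) > e(x)$, giving $\eccuntight(x,e(x)) = e(x) < e(u) = \eccuntight(u,e(u))$, contradicting the minimality of $f(\cdot,e(\cdot))$ guaranteed by Proposition~\ref{prop:selection} applied with $f=\eccuntight$. Hence $u \in U^{\preceq}$.

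From the claim, $U \subseteq U^{\preceq}$ throughout the execution, and the nodes added are pairwise distinct (once $u$ joins $U$ we have $e^U(u) = e(u)$, so $u$ is no longer eligible). At termination, the loop guard $\minecc(\dots) = \infty$ means $e_L(v) \ge e^U(v)$ for every $v$, which combined with $e_L(v) \le e(v) \le e^U(v)$ forces equality throughout: $L$ is a tight lower certificate, $U$ is a tight upper certificate, and every eccentricity is recovered as $e^U(v)$. Moreover each $x \in U^{\preceq}$ must have its own upper bound tight at termination, which (by $\preceq$-maximality of $x$) forces $x \in U$; so $U \supseteq U^{\preceq}$ as well, giving $U = U^{\preceq}$.

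For the complexity, Algorithm~\ref{alg:all-ecc} performs exactly $\card{U^{\preceq}}$ successful $\argminecc$-selections (one per element added to $U$), the termination being detected by a final evaluation of $\minecc$ that can be folded into the same machinery without additional queries. By Proposition~\ref{prop:selection} these selections cost $\card{U^{\preceq}} + 2\card{L}$ one-to-all distance queries, with $L \subseteq \antipode_r(V)$ since only antipodes are ever inserted into $L$. The main obstacle is the key claim: it rests on a somewhat delicate interplay between the $\preceq$-maximality characterisation from Proposition~\ref{prop:upperopt}, the precise definition of $\eccuntight$, and the minimality guarantee supplied by Proposition~\ref{prop:selection}; lining up these three ingredients is the only non-routine step of the argument.
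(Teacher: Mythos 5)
Your proof is correct and follows essentially the same route as the paper's: the central claim that every selected node lies in $U^{\preceq}$, established via the minimality guarantee of Proposition~\ref{prop:selection} together with the maximality characterization of Proposition~\ref{prop:upperopt}, is exactly the content of the paper's Lemma~\ref{lem:cmax}, and the termination, tightness, and counting arguments match as well. The only (cosmetic) difference is that you jump directly to a $\preceq$-maximal tight upper certificate of $u$ and case-split on its membership in $U$, whereas the paper takes an arbitrary tight upper certificate and invokes transitivity.
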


In practical graphs, we observe that the size of $U^{\preceq}$ is much larger than that of the computed lower certificate $L$ and the algorithm roughly costs $\card{U^{\preceq}}$ BFS traversals (see Section~\ref{sec:exp}). This seems almost optimal for an algorithm performing a BFS traversal from each node of $U^{\preceq}$. In particular, we expect a speed-up factor of 2 at least compared to~\cite{TK13} which suggest to alternate its selection of BFS sources between nodes of minimum lower-bound and nodes of maximum upper-bound (in addition there is no guarantee that nodes with minimum lower-bound are in $U^{\preceq}$).




The correctness of Algorithm~\ref{alg:all-ecc} mainly rely on the following lemma.

\begin{lemma}
  \label{lem:cmax}
  Consider an upper certificate $U$ and the set $S_U$ of nodes that do not have a tight upper vertex-certificate in $U$. Any node $v\in S_U$ with minimum eccentricity (having $e(v)=\min_{u\in S_U} e(u)$) is its unique tight upper vertex-certificate (i.e., $v\in U^{\preceq}$).
\end{lemma}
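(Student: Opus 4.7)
The plan is to argue by contradiction, combined with a case analysis based on whether a hypothetical alternative tight upper certificate lies in $S_U$ or not. Fix $v \in S_U$ of minimum eccentricity in $S_U$, and suppose for contradiction that there exists $x \neq v$ with $v \preceq x$, i.e., $e(v) = d(v,x) + e(x)$. Since $x \neq v$ implies $d(v,x) \geq 1$, we immediately get $e(x) < e(v)$, and this strict inequality is the lever I will use throughout.

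Next I would split on the membership of $x$. If $x \in S_U$, then $x$ itself is a node of $S_U$ with eccentricity strictly smaller than $e(v)$, contradicting the minimality of $e(v)$ among elements of $S_U$. Otherwise $x \notin S_U$, which by definition of $S_U$ means there is some $y \in U$ with $x \preceq y$ (in particular $y$ may be $x$ itself if $x \in U$, since $\preceq$ is reflexive).

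Now I apply the transitivity of $\preceq$, already established in Proposition~\ref{prop:upperopt}: from $v \preceq x$ and $x \preceq y$ we conclude $v \preceq y$. But $y \in U$, so $v$ has a tight upper certificate in $U$, contradicting $v \in S_U$. Both cases being impossible, $v$ admits no tight upper certificate other than itself, which means $v$ is maximal for $\preceq$ and thus lies in $U^{\preceq}$.

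I do not anticipate a serious obstacle here: the proof is essentially a bookkeeping exercise relying on (i) the fact that $v \preceq x$ with $x \neq v$ forces $e(x) < e(v)$ (straight from the definition of $\preceq$), and (ii) the transitivity of $\preceq$ already proved. The only subtle point worth spelling out explicitly is the reflexivity of $\preceq$, which ensures that nodes of $U$ are automatically outside $S_U$, so the second case really does produce a witness $y \in U$ for which the transitivity argument closes the loop.
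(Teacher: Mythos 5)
Your proof is correct and follows essentially the same route as the paper's: derive $e(x)<e(v)$ from $v\preceq x$ with $x\neq v$, use minimality of $e(v)$ to force $x\notin S_U$, extract a certificate $y\in U$ for $x$, and close with transitivity of $\preceq$ from Proposition~\ref{prop:upperopt}. Your explicit two-case split is just a slightly more verbose phrasing of the paper's one-line observation that $x$ cannot lie in $S_U$, so there is nothing to add.
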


\begin{proof}
  For the sake of contradiction, suppose that $v\in S_U$ has a tight upper vertex-certificate $x\not=v$. As $e(v)=d(v,x)+ e(x)$, we have $e(x)< e(v)$ and $x$ cannot be in $S_U$ by the minimality of $e(v)$. It thus has a tight upper vertex-certificate $y\in U$. But the transitivity of being a tight upper vertex-certificate (Proposition~\ref{prop:upperopt}) implies that $y$ is also a tight upper certificate for $v$ which is in contradiction with $v\in S_U$.
\end{proof}

\begin{proof}[of Theorem~\ref{th:all-ecc}]
  We first prove $U\subseteq U^{\preceq}$.
  As in Lemma~\ref{lem:cmax}, let $S_U$ denote the set of nodes that do not have a tight upper vertex-certificate in $U$ ($S_U=\set{v\in V \mid e(v)<e^U(v)}$).
  Now consider the node $u$ selected at some iteration of the while loop. We prove $u\in U^{\preceq}\setminus U$. The correctness of our minimum eccentricity selection (Proposition~\ref{prop:selection}) implies that $u$ has minimum eccentricity in $S_U$ and is thus in $U^{\preceq}$ by Lemma~\ref{lem:cmax}. (Note that $\eccuntight(v,.)$ is non-decreasing for all $v\in V$ as $\eccuntight(v,\ell)=\ell$ for $\ell < e(v)$ and $\eccuntight(v,\ell)=\infty$ for $\ell\ge e(v)$.) Additionally, $u\in S_U$ implies that $u$ has untight upper bound and is not in $U$ until we add it at that iteration.
  
  The termination of the algorithm is guaranteed by the fact that $U$ grows at each iteration. The algorithm ends when minimum eccentricity selection returns a node $u$ such that $\eccuntight(u,e(u))=\infty$. Proposition~\ref{prop:selection} then ensures $\eccuntight(v,e_L(v))=\infty$ for all $v$. That is $e_L(v)=e^U(v)$ for all $v$ and both bounds must equal $e(v)$. This implies that $L$ (resp. $U$) is then a tight lower (resp. upper) certificate of $G$. Moreover, $U\subseteq U^{\preceq}$ then implies $U=U^{\preceq}$ by Proposition~\ref{prop:upperopt}.
\end{proof}

\section{Experiments}
\label{sec:exp}

We test social networks (Epinions, Hollywood, Slashdot, Twitter, dblp), computer networks (Gnutella, Skitter), web graphs (BerkStan, IndoChina, NotreDame), road networks (CAL-t, CAL-d, CAL-u, FLA-t, europe-t), a 3D triangular mesh (buddha), and grid like graphs from VLSI applications (alue7065) and from computer games (FrozenSea). The data is available from \url{snap.stanford.edu}, \url{webgraph.di.unimi.it}, \url{www.dis.uniroma1.it/challenge9}, \url{graphics.stanford.edu}, \url{steinlib.zib.de} and \url{movingai.com}. We also test synthetic inputs: bowtie500 is the graph $BT_{500,500}$ represented in Figure~\ref{fig:bowtie}, grid500-10 is a $501\times 501$ square grid with random deletion of 10\% of the edges, grid1500-wd is a weighted directed graph obtained from a $1501\times 1501$ square grid where each edge is oriented randomly (with probability $1/2$ for each direction) and assigned a random weight uniformly in $\set{0,1,\ldots,9}$, pwlaw2.5 is a random graph generated according to the configuration model with a degree sequence following a power law with exponent 2.5, udg10 is a random unit disk graph where field size is parameterized to obtain average degree 10 roughly.
Each graph is restricted to its largest (strongly) connected component. Our code is available at \url{https://github.com/lviennot/weighted-diameter}.


\begin{table}[t]
\centering
\resizebox{\linewidth}{!}{%
  \setlength{\tabcolsep}{4pt}
\begin{tabular}{lrrrrrrrrrrrrrrr}
type & name & $n$ & $m/n$ & d & w & $\frac{\diam}{\rad}$ & $D$ & $\pi^c_{0.8}$ & $\pi^c_{1/3}$ & $n_c/n$ & $R$ & $A_{ID}$ & $F$ & $L_{all}$ & $U_{all}$\\[1mm]
\hline
comm & Gnutella & 14149 & 3.60 & $\bullet$ & $\circ$ & 1.58 & 19 & 47 & 1912 & 0.27 & 4 & 10 & 23 & 7 & 2457\\
comm & skitter & 1694616 & 13.09 & $\circ$ & $\circ$ & 1.94 & 3 & 5 & 7 & 0.99 & 3 & 6 & 6 & 3 & 33535\\
game & FrozenSea & 753343 & 7.70 & $\circ$ & $\bullet$ & 1.80 & 15 & 35 & 191 & 0.91 & 7 & 384 & 388 & 381 & 54568\\
geom & buddha & 543652 & 6.00 & $\circ$ & $\bullet$ & 1.87 & 27 & 63 & 385 & 0.95 & 14 & 897 & 897 & 897 & 71863\\
road & CAL-d & 1890815 & 2.45 & $\circ$ & $\bullet$ & 1.89 & 3 & 17 & 90 & 0.99 & 3 & 11 & 11 & 11 & 725\\
road & CAL-t & 1890815 & 2.45 & $\circ$ & $\bullet$ & 1.83 & 7 & 17 & 105 & 0.98 & 5 & 13 & 13 & 13 & 2810\\
road & CAL-u & 1890815 & 2.45 & $\circ$ & $\circ$ & 1.99 & 2 & 4 & 6 & 0.99 & 3 & 7 & 11 & 7 & 1075\\
road & FLA-t & 1070376 & 2.51 & $\circ$ & $\bullet$ & 1.99 & 2 & 4 & 4 & 0.99 & 2 & 2 & 2 & 2 & 174\\
road & europe-t & 18010173 & 2.34 & $\bullet$ & $\bullet$ & 1.99 & 2 & 4 & 5 & 1 & 2 & - & - & 2 & 711\\
soc & Epinions & 32223 & 13.76 & $\bullet$ & $\circ$ & 2 & 2 & 4 & 7 & 0.99 & 2 & 7 & 20 & 3 & 294\\
soc & Hollywood & 1069126 & 106.33 & $\circ$ & $\circ$ & 1.71 & 29 & 603 & 4183 & 0.99 & 3 & 34 & 335 & 10 & 120626\\
soc & Slashdot & 71307 & 12.80 & $\bullet$ & $\circ$ & 1.86 & 4 & 40 & 65 & 0.99 & 4 & 12 & 135 & 9 & 8204\\
soc & Twitter & 68413 & 24.63 & $\bullet$ & $\circ$ & 2.50 & 2 & 6 & 8 & 0.99 & 4 & 31 & 4753 & 9 & 3702\\
soc & dblp & 226413 & 6.33 & $\circ$ & $\circ$ & 2 & 1 & 11 & 20 & 1 & 5 & 6 & 43 & 6 & 13776\\
synth & bowtie500 & 505002 & 2.00 & $\circ$ & $\circ$ & 1.99 & 3 & 2001 & 4001 & 0.99 & 5 & 5 & 1507 & 5 & 9\\
synth & grid1500-wd & 296680 & 1.66 & $\bullet$ & $\bullet$ & 2.38 & 3 & 12 & 69 & 0.04 & 4 & 5 & 6 & 5 & 82\\
synth & grid500-10 & 250976 & 3.59 & $\circ$ & $\circ$ & 2 & 1 & 5 & 5 & 1 & 3 & 4 & 4 & 4 & 104\\
synth & pwlaw2.5 & 1000000 & 3.85 & $\circ$ & $\circ$ & 1.90 & 4 & 24 & 24 & 0.99 & 2 & 19 & 47 & 6 & 18458\\
synth & udg10 & 999888 & 9.99 & $\circ$ & $\circ$ & 1.99 & 5 & 17 & 86 & 0.99 & 4 & 5 & 10 & 4 & 1273\\
vlsi & alue7065 & 34046 & 3.22 & $\circ$ & $\circ$ & 2 & 1 & 5 & 5 & 1 & 3 & 4 & 4 & 4 & 54\\
web & BerkStan & 334857 & 13.51 & $\bullet$ & $\circ$ & 2.73 & 2 & 7 & 28 & 2e-03 & 3 & 16 & 17 & 16 & 20\\
web & Indochina & 3806327 & 25.96 & $\bullet$ & $\circ$ & 6.91 & 2 & 4 & 6 & 0.99 & 3 & - & - & 5 & 24571\\
web & NotreDame & 53968 & 5.65 & $\bullet$ & $\circ$ & 2.11 & 2 & 5 & 22 & 0.01 & 2 & 2 & 2 & 2 & 45\\
\end{tabular}

  }
  \caption{Diameter and radius certificate sizes ($D,R$) and  sizes of all eccentricity certificates $(L_{all},U_{all})$ for various graphs, and related parameters. The $U_{all}$ column in this table differs from previous versions of this paper, in which it was reported incorrectly.}
  \label{tab:all}
\end{table}

Table~\ref{tab:all} summarizes our main practical observations. For each instance $G$, we show its type, the number $n$ of nodes in the largest (strongly) connected component, the average out-degree $m/n$, whether it is directed (d) and weighted (w), and the diameter to radius ratio $\frac{\diam(G)}{\rad(G)}$. We then show the size $D$ of the diameter certificate computed by Algorithm~1+4', bounds on maximum packing sizes $\pi^c_{0.8}$ and $\pi^c_{1/3}$ (defined in Section~\ref{sec:diam}), the proportion $n_c/n$ of nodes in the (in-)ball of radius $\diam(G)-\rad(G)$ centered at a center $c$, the size $R$ of the radius certificate computed by Algorithm~\ref{alg:rad}, the number $A_{ID}$ of antipodes for ID ranking and the number $F$ of furthest nodes. The two latter numbers were obtained by performing a traversal per node of the graph (in quadratic time). A dash indicates a value that could not be obtained in less than few days of computation. 

\medskip

The first observation is that diameter and radius certificates are extremely small for all instances (less than 30 nodes for all of them). 
Several observations allow to explain this phenomenon. First, all graphs have high diameter to radius ratio (over 1.5 for all of them). Note that this ratio is at most 2 for an undirected graph (it is unbounded in general directed graphs). Undirected graphs with ratio 2 have a one node diameter certificate: a center. This concerns two practical graphs while several ones have ratio very close to 2. Coherently, the concentration of nodes around the center is also high with respect to the diameter minus radius difference. This is measured by the ratio $n_c/n$ where $c$ is a center computed by our radius algorithm ($e(c)=\rad(G)$) and $n_c$ denotes the number of nodes in $B[c,\diam(G)-\rad(G)]$ (in directed graphs we count the number of nodes $u$ such that $d(u,c)\le \diam(G)-\rad(G)$). It counts the proportion of nodes $u$ such that $e^{\set{c}}(u)\le \diam(G)$ which appears to be very close to 1 for most of the graphs. Notable exceptions are Gnutella, BerkStan and NotreDame. This may be explained by the low (compared to others) diameter to radius ratio (1.58) of the first one and probably to the highly asymmetric nature of the two others. Seeing diameter certification as a covering problem with balls $B[x,\diam(G)-e(x)]$, there are thus few nodes that are not covered by a center $c$. Additionally, other nodes can be covered using few balls with reduced radii: the columns $\pi^c_{0.8}$ and $\pi^c_{1/3}$ indicate the size of coverings we could find using balls with radii reduced by a factor .8 and $1/3$ respectively. These numbers upper bound maximum packing sizes $\pi^c_{0.8}$ and $\pi^c_{1/3}$ of the associated collections of balls with reduced radii. Our theoretical upper-bound of $\pi^c_{1/3}$ thus explains fast diameter computation for most of the graphs. A notable exception is $BT_{500,500}$ alias bowtie500 which was tailored for making former diameter algorithms slow (including Algorithm~1+4) and thus have large $\pi^c_{1/3}$ value. (Note that Algorithm~1+4' performs roughly $2D+2R$ distances queries and at most $2D+2A_{ID}$ distance queries, that is 16 for bowtie500.) Other exceptions are Hollywood and Gnutella for which the diameter to radius ratios are not so high either (compared to other graphs). However the parameter is still much smaller than the number of nodes.

Concerning radius computation, we observe that most graphs have very few antipodes as indicated by the $A_{ID}$ column although the number $F$ of furthest nodes can be significantly larger as observed among several social networks. A notable exception is the buddha graph which is a triangulated 3D surface and thus has more or less a sphere like topology (the arms form handles) that may explain why all furthest nodes are antipodes. However the number of antipodes remains much smaller than the number of nodes.
FrozenSea also has a relatively large number of furthest nodes that are almost all antipodes. This might come from the design of the graph as a map where players of a video game evolve and should find dead ends.


\part{Graph classes with specific certificates}
\label{part:graphclasses}

\section{Power law random graphs}
\label{sec:power-law}

A $\beta$-power-law random graph can be defined according to the configuration model~\cite{Bollobas1980} starting from a degree distribution following a power law with exponent $\beta$ (self-loops and multiple edges are allowed). We now translate some results of \cite{BCT17} in terms of certificates.

\begin{proposition}[\cite{BCT17}]
    Given $\eps>0$ and $\beta>1$, any $\beta$-power-law random graph $G$ has asymptotically almost surely:
    \begin{itemize}
        \item  a tight lower certificate (and thus a radius certificate) of size $n^{O(\eps)}$ when $\beta > 2$,
        \item  a tight lower certificate (and thus a radius certificate) of size $O\left(n^{1-\frac{2-\beta}{\beta-1}\left(\lfloor\frac{\beta-1}{2-\beta} - \frac{3}{2}\rfloor - \frac{1}{2}\right)}\right)$ when $1<\beta<2$;
        \item a diameter certificate of size
        $n^{O(\eps)}$ when $1<\beta<3$,
        \item a diameter certificate of size
        $n^{\frac{1}{1+\Omega(\frac{\beta-1}{\beta-3})}+O(\eps)}$ when $\beta > 3$.
    \end{itemize} 
\end{proposition}

We now detail the analyses of~\cite{BCT17} from which these bounds are derived.
The following variant of SumSweep heuristic~\cite{BCHKMT15} is proposed: given a graph $G=(V,E)$, select independently and uniformly at random $s_1,\ldots,s_k\in V$, then iteratively choose $t_1,\ldots, t_k$ such that each $t_j$ is selected in $V\setminus\set{t_1,\ldots, t_{j-1}}$ so as to maximize $\sum_{i\le j} d(s_i,t_j)$. The algorithm performs a BFS from each of these nodes and maintains corresponding lower bounds for all vertices. The analysis of this variant (see Section~9 of~\cite{BCT17}) shows that for any $\eps>0$ and for sufficiently large graph $G$, all lower bounds are tight for $k=n^{3\eps}+n^{O(\eps)}$ when $\beta>2$, and for $k=n^{3\eps}+n^{f(\beta)}$ for $1<\beta<2$ where $f(\beta)=1-\frac{2-\beta}{\beta-1}\left(\lfloor\frac{\beta-1}{2-\beta} - \frac{3}{2}\rfloor - \frac{1}{2}\right)$. We note that we have $f(\beta)<1$ for $\beta>1.72$.
In our terminology, $L=\set{s_1,\ldots,s_k,t_1,\ldots, t_{j-1}}$ is a tight lower certificate and the analysis shows that for any $\eps>0$ a $\beta$-power-law random graph has asymptotically almost surely (a.a.s. for short) a tight lower certificate of size $n^{O(\eps)}$ when $\beta>2$. When $1.72<\beta<2$, it has a.a.s. a tight lower certificate of sublinear size $O(n^{f(\beta)})$.

The analysis of the Exact SumSweep algorithm~\cite{BCHKMT15} then shows that after finding a center $c$, the number of nodes with upper bound greater than the diameter is proven to be $n^{O(\eps)}$ a.a.s. for $1<\beta<3$ by bounding the number of nodes that are not in $B[c,diam(G)-e(c)]$ (see Section~11.1 of \cite{BCT17}). In our terminology, $U=\set{c}\cup (V\setminus B[c,diam(G)-e(c)])$ is a diameter certificate of size $n^{O(\eps)}$.
For $\beta>3$, the analysis is more involved. It proves the existence of a diameter certificate of size $n^{g(\beta)+O(\eps)}$ with $g(\beta)=\frac{1}{1+\Omega(\frac{\beta-1}{\beta-3})}$ (see Section~11.2 of \cite{BCT17}). Note that $n^{g(\beta)}$ is always sublinear.

\medskip

Concerning the lower certificate bound, the tools provided in~\cite{BCT17} allow to more generally bound the overall number of furthest nodes.
We show that the set $F=\set{t\in V: \exists s\in V, d(s,t)=e(s)}$ of furthest nodes has size $n^{O(\eps)}$. 
We rely on the following results of~\cite{BCT17} where $\tau_s(p)=\min\set{\ell\in \mathbb{N} : \card{B[s,\ell]\setminus B[s,\ell-1]} > p}$ denotes the minimum distance $\ell$ such that $s$ has more than $p$ nodes at distance exactly $\ell$ and $T(1\rightarrow p)$ is the average of $\tau_s(p)$ over vertices $s$ of degree one: more precisely, it satisfies $T(1\rightarrow p)=2$ for $1<\beta<2$, $T(1\rightarrow p)=(1+o(1))\log_{1/(\beta-2)}\log p$ for $1<\beta<2$, and $T(1\rightarrow p)=O(\log p)$ for $3<\beta$.

\begin{lemma}[\cite{BCT17}] \label{lem:bct17}
    Given $\eps>0$ and $2<\beta<3$, there exists a positive constant $c<1$ such that any $\beta$-power-law random graph $G$ satisfies a.a.s:
    \begin{itemize}
        \item[(i)] for any node $s$ and $t$ at furthest distance from $s$, $\tau_t(n^{1/2}) \geq \lceil (1-O(\eps))(T(1\rightarrow n^{1/2}) + \frac{\log n}{-\log c} - \frac{3}{2})\rceil$ (Lemma~7.1 in \cite{BCT17} with $x=1/2$);
        \item[(ii)] the number of vertices satisfying $\tau_s(n^{1/2}) \ge (1+\eps)(T(1\rightarrow n^{1/2}) + \alpha)$ is $O(nc^{\alpha-1/2})$   (Property~2.1 in \cite{BCT17} with $x=1/2$).
    \end{itemize}
\end{lemma}

We can now state the following.

\begin{proposition}
    Given $\eps>0$ and $\beta>1$, any $\beta$-power-law random graph $G$ has a.a.s. $n^{O(\eps)}$ furthest nodes.
\end{proposition}

Note that the set of furthest nodes is obviously a tight lower certificate.

\begin{proof}
    Consider a furthest node $t\in F$.
    According to Lemma~\ref{lem:bct17}(i),    we have $\tau_t(n^{1/2}) \geq (1-a\eps)(T(1\rightarrow n^{1/2}) + \frac{\log n}{-\log c} - \frac{3}{2})$ for some constant $a>0$. Set $\alpha$ to obtain $ (1-a\eps)(T(1\rightarrow n^{1/2}) + \frac{\log n}{-\log c} - \frac{3}{2}) = (1+\eps)(T(1\rightarrow n^{1/2}) + \alpha)$, that is    $\alpha = \frac{\log n}{-\log c} - \frac{3}{2} - O(\eps) (T(1\rightarrow n^{1/2}) + \frac{\log n}{-\log c} - \frac{3}{2})$ (we assume $\eps<1/2$ without loss of generality). In all three regimes for $\beta$, we have $\alpha-\frac{1}{2} = \frac{\log n}{-\log c} - 2 - O(\eps\log n)$.
    Now Lemma~\ref{lem:bct17}(ii) bounds the number of such $t$, yielding: $\card{F} = O(nc^{\alpha-1/2})=c^{-2-O(\eps\log n)} =n^{O(\eps)}$.
\end{proof}

\medskip

Note that the results of \cite{BCT17} also apply to Rank-1 Inhomogeneous Random Graph models (see Appendix~A in \cite{BCT17}).

\section{Graphs with low doubling dimension} 
\label{sec:doubling}


A graph is $\gamma$-doubling if every ball of positive radius is included in the
union of at most $\gamma$ balls with half radius.

\subsection{Exact diameter computation with few antipodes}

In the case of $\gamma$-doubling graphs, the following theorem shows that we can obtain a diameter certificate with almost linear size compared to the maximum packing size $\pi(\D_\alpha)$ for the collection $\D_\alpha$ (see Section~\ref{sec:diam}). This complements Theorem~\ref{th:diam} in the range $\frac{1}{3}\le\alpha<1$.

\begin{theorem}\label{th:doubling}
Given a $\gamma$-doubling graph $G$ and $\alpha < 1$, the diameter $\diam(G)$, a diametral node $p$ and a diameter certificate $U$ satisfying $\card{U}\le \pi_\alpha \gamma^{O(1)+\log\frac{1}{1-\alpha}}\log \diam(G)$ can be computed with $2\card{\antipode(V)}+\card{U}$ one-to-all distance queries, where $\pi_\alpha$ is the maximum size of a packing for the collection $\D_\alpha=\set{B(u,\alpha(\diam(G)- e(u))) : u\in V}$.
\end{theorem}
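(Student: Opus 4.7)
The plan is to combine Algorithm~\ref{alg:rad} for radius with a new scale-partitioned greedy construction of a diameter certificate, using the doubling property to bound the covering size against $\pi_\alpha$.

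First, I would run Algorithm~\ref{alg:rad} to compute $\rad(G)$, a center $c$, and a lower certificate; by Theorem~\ref{th:rad} this uses $2\card{\antipode_r(V)}+O(1)$ one-to-all distance queries. It simultaneously provides an eccentricity for every node visited (giving a first lower bound $\underline{D}$ on $\diam(G)$) and an initial upper bound $e^U(v)\le d(v,c)+\rad(G)$ for every $v\in V$ coming from the center.

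Second, I would build $U$ scale by scale. Partition potential certificate centers into geometric scales $S_i = \set{x\in V : 2^i \le \diam(G) - e(x) < 2^{i+1}}$ for $i=0,\ldots,\lfloor\log_2 \diam(G)\rfloor$. Processing scales from largest to smallest, while some node $u$ satisfies $e^U(u) > \underline{D}$ I would query from $u$ to learn $e(u)$, then add to $U$ a tight upper certificate $x$ for $u$ (with $x=u$ as a valid default) whose scale $i(x)$ is as large as possible, query from $x$ to update $e^U$, and update $\underline{D}$. Each iteration adds one center to $U$ and uses at most one new distance query (amortizing the query from $u$ against the previous step), for a total of $\card{U}$ queries in this phase.

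Third, I would bound $\card{U}$ by a charging argument. Fix an optimal packing $P^*$ for $\D_\alpha$ with $\card{P^*}=\pi_\alpha$. For a center $x\in U$ added at scale $i$ with witness $u_x$, the tightness identity $e(u_x) = d(u_x,x) + e(x)$ together with the greedy maximality of the scale $i(x)$ imply that some packing point $v_x \in P^*$ at a matching scale $\Theta(2^i)$ has $u_x$ in its associated $\alpha$-ball $B(v_x,\alpha(\diam(G)-e(v_x)))$; otherwise either $u_x$ would already be covered or $x$ would not have maximal scale. A triangle-inequality computation then shows that two centers $x,x'\in U\cap S_i$ both charged to the same $v_x = v_{x'}$ must satisfy $d(x,x') \ge (1-\alpha)\cdot 2^i$, the $(1-\alpha)$ gap arising precisely as the slack between the $\D_{[1]}$ cover ball and the $\D_\alpha$ packing ball. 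By the $\gamma$-doubling property, any ball of radius $O(2^i)$ contains at most $\gamma^{\lceil \log_2(O(1)/(1-\alpha))\rceil}=\gamma^{O(1)+\log\frac{1}{1-\alpha}}$ points at pairwise distance $(1-\alpha)\,2^i$. Hence $\card{U\cap S_i} \le \pi_\alpha\cdot \gamma^{O(1)+\log\frac{1}{1-\alpha}}$ at each scale, and summing over the $O(\log\diam(G))$ scales gives the claimed bound on $\card{U}$.

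The main obstacle will be making the charging injection precise, in particular verifying that for each center $x$ added at scale $i$ there exists a packing point at the matching scale whose $\alpha$-ball contains the witness $u_x$, and extracting the minimum pairwise separation $(1-\alpha)\cdot 2^i$ among centers sharing a charge. This combines the greedy maximality of scale, the triangle inequality, and the tight-upper-certificate identity in a subtle way, and is the technical heart of the proof.
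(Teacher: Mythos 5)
Your overall architecture differs from the paper's: you decompose by geometric scales of $\diam(G)-e(x)$ and try to charge the chosen certificates against a fixed maximum packing $P^*$ for $\D_\alpha$, whereas the paper (Algorithm~\ref{alg:doubling}) runs a primal-dual construction in which the \emph{witnesses} $u$ themselves are collected into a set $K$, an inner loop (driven by minimum eccentricity selection with the slack condition $e^U(v)-e(v)\ge\frac{1-\alpha}{2\alpha}d(u,v)$, organized by distance scales from the current witness $u$) adds at most $\gamma^{O(1)+\log\frac{1}{1-\alpha}}\log\diam(G)$ certificates per witness, and then $K$ is proved to be a packing for $\D_\alpha$, so $\card{K}\le\pi_\alpha$ for free.

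The genuine gap is in your third step. A maximum packing for $\D_\alpha$ is not a covering: the most that maximality gives you is that for each witness $u_x$ there is \emph{some} ball $B(w,\alpha(\diam(G)-e(w)))$ of $\D_\alpha$ containing both $u_x$ and \emph{some} element of $P^*$, with no control over the center $w$, over whether that ball is centered at the packing element, or over its scale. Your claim that ``some packing point $v_x\in P^*$ at a matching scale $\Theta(2^i)$ has $u_x$ in its associated $\alpha$-ball'' therefore does not follow, and the subsequent separation bound $d(x,x')\ge(1-\alpha)2^i$ for two certificates charged to the same packing point is asserted without a derivation and inherits this problem. The clean way to get the $\pi_\alpha$ factor is the paper's: show that no ball of $\D_\alpha$ can contain two of your witnesses, because after the certificates for the first witness $u$ are installed every node $x$ satisfies $e^U(x)-e(x)\le\frac{1-\alpha}{2\alpha}d(u,x)$, and a triangle-inequality computation then forces $e^U(u')<\diam(G)$ for any $u'$ sharing a $\D_\alpha$-ball with $u$, contradicting the selection rule. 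A secondary issue: selecting a tight upper certificate $x$ for $u$ ``whose scale is as large as possible'' requires knowing $e(x)$ for candidate nodes, which costs additional queries you do not account for; this is exactly where the paper's minimum eccentricity selection (and the $2\card{\antipode(V)}$ term) does its work, so your query count of $2\card{\antipode(V)}+\card{U}$ is not justified as written.
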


Note that this implies that minimum diameter certificate can be approximated within a factor $\gamma^{O(1)+\log(\diam(G)-\rad(G))}\log \diam(G)\frac{\pi_1}{\pi_{[1]}}$ when $G$ is $\gamma$-doubling as $\D_\alpha=\D_1$ for $\alpha > 1 - \frac{1}{r+1}$ where $r=\diam(G)-rad(G)$ is the maximum radius of a ball in $\D_1$ (recall that $\pi_{[1]}$ lower bounds the size of a minimum diameter certificate).

\smallskip

The above theorem is a consequence of Algorithm~\ref{alg:doubling} which follows a primal-dual approach by constructing both a packing $K$ for $\D_\alpha$ together with a covering $U$ with $\D_{[1]}$ such that $\card{U}\le \card{K} \gamma^{O(1)+\log\frac{1}{1-\alpha}}\log \diam(G)$. Given a node $u$, let $S_u=\set{v\in V \mid \exists B\in \D_\alpha \mbox{ s.t. } u,v\in B}$ denote the set of nodes $v$ that cannot be in a packing for $\D_\alpha$ containing $u$. The idea is to iteratively add a node $u$ to $K$ with highest eccentricity according to $e^U$ and then to add sufficiently many nodes to $U$ so that any node $v$ in $S_u$ gets an eccentricity upper bound $e^U(v)$ equal to $\diam(G)$ or less. This will guarantee that no such node is added later to $K$ and that $K$ is a packing for $\D_\alpha$.

\begin{algorithm2e}[t]
  \Input{A graph $G$ and a parameter $\alpha$ with $0<\alpha<1$.}
  \Output{The diameter $\diam(G)$ of $G$ and a diameter certificate $p,U$.}
  $K := \emptyset$\quad\Comment{Packing for $\set{B(u,\alpha(\diam(G)- e(u))) : u\in V}$.}
  $U := \emptyset$\quad\Comment{Upper certificate.}
  Maintain $e^U(u) = \min_{x\in U} d(u,x) + e(x)$ (initially $\infty$) for all $v\in V$.\\
  $L := \emptyset$\quad\Comment{Lower certificate.}
  Maintain $e_L(v)=\max_{x\in L}d(v,x)$ (initially 0) for all $v\in V$.\\
  \While{$\max_{p\in K} e(p) < \max_{u\in V}e^U(u)$}{
    Select $u$ such that $e^U(u)$ is maximal.\\
    $D_u:=\distfrom(u)$\\
    $e(u):=\max_{w\in V}D_u(w)$ \quad\Comment{Eccentricity of $u$.}
    $K:=K\cup\set{u}$\\
    $U:=U\cup\set{u}$\\
    \lFor{$w\in V$}{$e^U(w):=\min(e^U(w), D_u(w)+e(u))$}
    
    \Function{$\eccslack(v,\ell)$}{
      \lIf{$e^U(v)-\ell > \frac{1-\alpha}{2\alpha} D_u(v)$}{
        \return{$-D_u(v)$}
      }
      \lElse{\return{$\infty$}}
    }
    \While{$\minecc(G,r,L,e_L,\eccslack) < \infty$}{
      $v := \argminecc(G,r,L,e_L,\eccslack)$\\
      $D_v := \distfrom(v)$\\
      $e(v):=\max_{w\in V}D_v(w)$ \quad\Comment{Eccentricity of $v$.}
      $U:=U\cup\set{v}$\\
      \lFor{$w\in V$}{$e^U(w):=\min(e^U(w), D_v(w)+e(v))$}
    }
  }

  $p := \argmax_{p\in K} e(p)$\\
  \Return{$e(p)$ and $p,U$.}
    \medskip
  \caption{Computing diameter and a diameter certificate assuming doubling property.}
  \label{alg:doubling}
\end{algorithm2e}

 \medskip
 
Our selection rule for adding nodes to $U$ is based on comparing their eccentricity to their distance to $u$. The rough idea is to add a node $v$ as certificate in $U$ when $e^U(v) > \diam(G)$ and $\diam(G)- e(v)=\Omega(d(u,v))$. Note that any node $w\in B[v,\diam(G)- e(v)]$ then satisfies $e^U(w)\le \diam(G)$ and the doubling property will allow us to bound the number of nodes added to $U$.
As the eccentricity of $v$ is not known precisely until we perform a one-to-all distance query from $v$, we use our minimum eccentricity selection technique using a lower certificate $L$. As $\diam(G)$ is not known either, we select $v$ such that $e^U(v)-e(v)=\Omega(d(u,v))$. This ensures that a ball of radius $\Omega(d(u,v))$ will then be covered. We thus prefer $v$ such that $d(u,v)$ is additionally maximal.

\begin{proof}[of Theorem~\ref{th:doubling}]
The main arguments of the proof are the following.

The function $\ell\mapsto\eccslack(v,\ell)$ is non-decreasing as it returns $-d(u,v)$ for $\ell \le e^U(v) - \frac{1-\alpha}{2\alpha} d(u,v)$ and $\infty$ otherwise. Note that the minimum eccentricity selection for $\eccslack$ thus returns a node $v$ such that $e^U(v)-e(v)\ge \frac{1-\alpha}{2\alpha} d(u,v)$ and $d(u,v)$ is maximal.

We first prove that  the inner loop performs at most $\gamma^{O(1)+\log\frac{1}{1-\alpha}}\log \diam(G)$ iterations. This bounds the number of nodes added to $U$ when one node is added to $K$ and will thus ensure $\card{U}\le \card{K} \gamma^{O(1)+\log\frac{1}{1-\alpha}}\log \diam(G)$.
Consider an iteration of the main loop where $u$ is added to $K$. Just after adding $v$ to $U$ in the inner loop, consider $w$ s.t. $d(v,w)\le \frac{\rho}{2+\rho} d(u,v)$ where $\rho=\frac{1-\alpha}{2\alpha}$. We then have $e^U(w)\le d(v,w)+ e(v)$ as $v\in U$ and $e(w)\ge  e(v)-d(v,w)$ by triangle inequality. This gives $e^U(w) -  e(w)\le \frac{2\rho}{2+\rho} d(u,v)$.
As $d(u,w)\ge d(u,v)-d(v,w)\ge \frac{2}{2+\rho} d(u,v)$, we get $e^U(w) -  e(w)\le \rho d(u,w)$ and nodes in $B[v,\frac{\rho}{2+\rho} d(u,v)]$ do not satisfy the condition of the inner loop.
The doubling property implies that the number of iterations where we select $v$ such that $d(u,v)>\frac{ e(u)}{2}$ is at most $\gamma^{O(1)+\log\frac{1}{1-\alpha}}$.
Then we may select $v$ such that $d(u,v) > \frac{ e(u)}{4}$ during the same number of iterations at most, and so on until we eventually select $u$ itself (the only node $v$ such $d(u,v)=0$). The overall number of iterations of the inner loop is thus bounded by $\gamma^{O(1)+\log\frac{1}{1-\alpha}}\log  e(u)$.

For the sake of contradiction, suppose that $K$ is not a packing and consider $u,u'\in K$ and $x\in V$ such that both $u$ and $u'$ are in $B(x,\alpha(\diam(G)- e(x)))$. Assume without loss of generality that $u$ was added to $K$ before $u'$. After the inner loop for $u$, we have $e^U(x)- e(x) \le \frac{1-\alpha}{2\alpha} d(u,x)$. Therefore there exists $y\in U$ such that $d(x,y)+ e(y) \le  e(x) + \frac{1-\alpha}{2\alpha} d(u,x)$. We thus have $e^U(u')\le d(u',y)+ e(y) \le d(u',x)+ d(x,y)+ e(y)\le d(u',x) +  e(x) + \frac{1-\alpha}{2\alpha} d(u,x)$. As $u,u'\in B(x,\alpha(\diam(G)- e(x)))$, we get $e^U(u') < \frac{1+\alpha}{2} \diam(G) + \frac{1-\alpha}{2}  e(x)$. As $e(x)\le \diam(G)$, we obtain $e^U(u') < \diam(G)$. This is a contradiction since the choice of $u'$ implies $e^U(u')=\max_{v\in V} e^U(v)\ge \diam(G)$.
\end{proof}

\subsection{Approximating radius and diameter}

Interestingly, the following lemma links the gap between the bound provided by a lower/upper certificate for a node $u$ and the distance from the certificate to a tight lower/upper certificate for $u$. Recall that a tight upper certificate for $u$ is a node $x$ such that $e(u)=d(u,x)+e(x)$. We similarly define a tight lower certificate for $u$ as a node $x$ such that $e(u)=d(u,x)$ (equivalently, $x$ is a furthest node from $u$).

\begin{lemma}
  \label{lem:error}
  Given a lower certificate $L$ (resp. an upper certificate $U$) and a node $u$, we have $e(u)-e_L(u)\le d(x,L)$ (resp. $e^U(u) - e(u) \le 2 d(x, U)$) for any tight lower (resp. upper) certificate $x$ for $u$.
\end{lemma}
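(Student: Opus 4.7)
The plan is to handle the two inequalities separately, with each one amounting to one application of the triangle inequality (for the lower case) or two applications (for the upper case). In both cases I will pick the node $y$ in the certificate that is closest to the given tight certificate $x$, so that $d(x,y) = d(x,L)$ (resp.\ $d(x,y) = d(x,U)$), and then chain inequalities that compare the certificate-based bound on $e(u)$ with $e(u)$ itself.

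For the lower certificate bound, first I would recall that $x$ being a tight lower certificate for $u$ means $e(u) = d(u,x)$. Taking $y \in L$ realizing $d(x,L)$, I would observe $e_L(u) \ge d(u,y)$ by definition of $e_L$, and then apply the triangle inequality $d(u,y) \ge d(u,x) - d(x,y) = e(u) - d(x,L)$. Rearranging gives the claimed bound $e(u) - e_L(u) \le d(x,L)$.

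For the upper certificate bound, $x$ being a tight upper certificate for $u$ means $e(u) = d(u,x) + e(x)$. Taking $y \in U$ realizing $d(x,U)$, I would expand $e^U(u) \le d(u,y) + e(y)$ and bound each term: $d(u,y) \le d(u,x) + d(x,y)$ by triangle inequality, and $e(y) \le e(x) + d(x,y)$ because for any $v$ one has $d(y,v) \le d(y,x) + d(x,v) \le d(x,y) + e(x)$. Adding yields $e^U(u) \le d(u,x) + e(x) + 2 d(x,y) = e(u) + 2 d(x,U)$, which is the desired inequality.

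There is no real obstacle here; the only subtlety worth flagging is that $e(y) \le e(x) + d(x,y)$ requires a separate triangle-inequality argument (eccentricity is $1$-Lipschitz in the graph metric), which explains the factor $2$ in the upper-certificate case and has no analogue in the lower-certificate case (where $e_L$ is defined as a maximum of plain distances, not of distances shifted by eccentricities), accounting for the asymmetry between the two bounds.
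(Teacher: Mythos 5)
Your proof is correct and follows essentially the same route as the paper's: pick $y$ in the certificate realizing the distance to the tight certificate $x$, then apply the triangle inequality (once for the lower bound, twice — on $d(u,\cdot)$ and on the $1$-Lipschitz eccentricity — for the upper bound). The paper writes the chains in the opposite direction but the inequalities used are identical.
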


\begin{proof}
  Consider a tight lower certificate $x$ for a node $u$ ($e(u)=d(u,x)$). Let $y\in L$ be a closest node to $x$ in $L$ ($d(x,y)=d(x,L)$). By triangle inequality, we have $e(u)=d(u,x)\le d(u,y)+d(y,x)\le e_L(u)+d(x,L)$.

  Similarly, consider a tight upper certificate $x$ for a node $u$ ($e(u)=d(u,x)+e(x)$). Let $y\in U$ be a closest node to $x$ in $U$ ($d(x,y)=d(x,U)$). By triangle inequality, we have $e(u)=d(u,x)+e(x)\ge d(u,y)-d(x,y)+e(y)-d(x,y)\ge e^U(u)-2d(x,U)$.
\end{proof}

Now consider the choice of a node $u$ with minimal eccentricity lower bound in Algorithm~\ref{alg:rad}. This choice implies $e_L(u)\le \rad(G)$ and Lemma~\ref{lem:error} then implies $e(u) \le \rad(G) + d(a,L)$ where $a$ is the antipode of $u$ which is added to $L$ (if $u$ is not a center). As long as the selected node has eccentricity greater than $(1+\eps)\rad(G)$, the nodes in $L$ are $\eps\rad(G)$ far apart and form a packing for the collection of balls of radius $\eps\rad(G)/2$.
Similarly, the choice of a node $u$ with maximal eccentricity upper bound in Algorithm~\ref{alg:diam} implies $e^U(u)\ge\diam(G)$ and Lemma~\ref{lem:error} then implies $e(u)\ge \diam(G) - 2d(x,U)$ where $x$ is the tight upper certificate chosen for $u$ that is added to $U$. As long as the selected node has eccentricity less than $(1-\eps)\diam(G)$, the nodes in $U$ form a packing for the collection of balls of radius $\frac{\eps}{2}\diam(G)/2$.
As the doubling property implies that such packings have size bounded by $\gamma^{\ceil{\log \frac{2}{\eps}}+1}$, we obtain the following approximation results for radius and diameter.

\begin{proposition}
  \label{prop:doubling}
  Given a $\gamma$-doubling graph $G$ and $\eps > 0$, Algorithm~\ref{alg:rad} (resp. Algorithm~\ref{alg:diam}) provides a node $u$ with eccentricity $(1+\eps)\rad(G)$ at most (resp. $(1-\eps)\diam(G)$ at least) with $O(\gamma^{\ceil{\log \frac{2}{\eps}}+1})$ traversals.
\end{proposition}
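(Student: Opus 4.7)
The plan is to formalize the sketch given just before the proposition, relying on Lemma~\ref{lem:error} together with the standard packing bound coming from the doubling property. For the radius part, I will analyse Algorithm~\ref{alg:rad} and bound the number of iterations before some selected node $u$ satisfies $e(u)\le (1+\eps)\rad(G)$; such a $u$ directly provides the claimed approximation.

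First, I would show that if the node $u$ selected at some iteration of Algorithm~\ref{alg:rad} violates the approximation, then the antipode $a$ added to $L$ lies $\eps\rad(G)$ away from the previous $L$. By construction $d(u,a)=e(u)$, so $a$ is a tight lower certificate for $u$. Lemma~\ref{lem:error} gives $e(u)-e_L(u)\le d(a,L)$, where $L$ denotes the lower certificate before $a$ is inserted. Since $u$ minimises the lower bound, $e_L(u)\le e_L(c)\le e(c)=\rad(G)$ for any center $c$, so $d(a,L)\ge e(u)-\rad(G) > \eps\rad(G)$. Consequently, as long as the approximation condition fails, $L$ remains an $\eps\rad(G)$-separated set.

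Second, I would bound the size of such a separated set via the doubling property. Because $G$ is undirected, $\diam(G)\le 2\rad(G)$, so the whole node set (in particular every antipode) lies in a ball of radius $2\rad(G)$. Iterating the doubling property $\lceil\log(2/\eps)\rceil+1$ times covers this ball with at most $\gamma^{\lceil\log(2/\eps)\rceil+1}$ balls of radius at most $\eps\rad(G)/2$. Any two elements of $L$ are at distance $>\eps\rad(G)=2\cdot(\eps\rad(G)/2)$, so each small ball contains at most one node of $L$. After $O(\gamma^{\lceil\log(2/\eps)\rceil+1})$ iterations (each costing a constant number of traversals), a node $u$ with $e(u)\le (1+\eps)\rad(G)$ must therefore be selected. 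The diameter case is symmetric: if the node $u$ selected in Algorithm~\ref{alg:diam} satisfies $e(u)<(1-\eps)\diam(G)$ and $x$ is the tight upper certificate added to $U$, Lemma~\ref{lem:error} together with the selection rule $e^U(u)\ge\diam(G)$ yields $d(x,U)\ge \tfrac{1}{2}(e^U(u)-e(u))>\eps\diam(G)/2$, so $U$ is an $(\eps\diam(G)/2)$-separated set inside a ball of radius $\diam(G)$, and the same doubling estimate bounds its size by $\gamma^{\lceil\log(2/\eps)\rceil+1}$.

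The main obstacle is to keep the constants tight enough to match the exponent $\lceil\log(2/\eps)\rceil+1$: one has to be careful about the radius of the enclosing ball ($2\rad(G)$ in the radius case versus $\diam(G)$ in the diameter case) and about the fact that an $r$-separated set translates into disjoint balls of radius $r/2$, which together account for the extra $+1$ in the exponent. Everything else is a direct rewriting of Lemma~\ref{lem:error} in the two selection rules.
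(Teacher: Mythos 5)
Your proof is correct and follows essentially the same route as the paper: you derive the $\eps\rad(G)$-separation of $L$ (resp.\ the $\eps\diam(G)/2$-separation of $U$) from Lemma~\ref{lem:error} combined with the selection rule, and then bound the size of the separated set by covering an enclosing ball via the doubling property. The paper's own proof is just a terser statement of this same packing-versus-covering argument, relying on the discussion that precedes the proposition.
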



\begin{proof}
  As discussed above, the set $L$ is a packing for balls of radius $\eps\rad(G)$ until a node whose eccentricity approximates the radius is found. We use the fact that packing size is bounded by covering size. By the $\gamma$-doubling property, the whole graph can be covered by $\gamma^i$ balls of radius $\frac{2\rad(G)}{2^i}$ as any ball of radius $2\rad(G)$ contains all nodes. For $i\ge\log \frac{2}{\eps}+1$ these balls have radius $\eps\rad(G)/2$ at most. Using that packing size is bounded by covering size, we can bound the number of iterations of Algorithm~\ref{alg:rad} where chosen nodes $u$ have eccentricity greater than $(1+\eps)\rad(G)$. If we stop the algorithm after $\gamma^{\ceil{\log \frac{2}{\eps}}+1}$ iterations, the node $u\in K$ with smallest eccentricity is guaranteed to have eccentricity $(1+\eps)\rad(G)$ at most.  The argument for diameter approximation is similar.
\end{proof}

In particular, for any $\eta \in (0,1)$, if $G$ is $\gamma$-doubling, has at most $n^\eta$ antipodes and $diam(G) < n^{\frac \eta {\log \gamma}}/8$, 
  then Proposition~\ref{prop:doubling} (applied for $\varepsilon = 8n^{-\frac \eta {\log \gamma}}$) 
  implies that we can compute both the diameter and the radius in $O(n^\eta m)$ time, which is truly subquadratic.

\section{Negatively curved graphs} 
\label{sec:hyperbolic}

In this section we analyze the behavior of our algorithms in the class of negatively curved graphs, alias, $\delta$-hyperbolic graphs.

Let $(X,d)$ be a metric space and $w\in X$. The {\it Gromov product} of $y,z\in X$ with respect to $w$ is defined to be
$$(y|z)_w=\frac{1}{2}(d(y,w)+d(z,w)-d(y,z)).$$
Let $\delta\ge 0$. A metric space $(X,d)$ is said to be $\delta$-{\it hyperbolic} \cite{Gr} if
$$(x|y)_w\ge \min \{ (x|z)_w, (y|z)_w\}-\delta$$
for all $w,x,y,z\in X$. Equivalently, $(X,d)$ is $\delta$-hyperbolic
if  for any four points $u,v,x,y$ of $X$, the two larger of the three distance sums
$d(u,v)+d(x,y)$, $d(u,x)+d(v,y)$, $d(u,y)+d(v,x)$ differ by at most
$2\delta \geq 0$. A graph $G=(V,E)$ is $\delta$-hyperbolic if the associated shortest-path metric space $(V,d)$ is $\delta$-hyperbolic.

The hyperbolicity can be viewed as a local measure of how close a graph is
metrically to a tree: the smaller the hyperbolicity is, the closer its metric is to
a tree metric (trees are 0-hyperbolic). Recent empirical studies showed that many 
real-world graphs (including Internet application networks, web networks, 
collaboration networks, social networks, biological networks, and others) have small 
hyperbolicity \cite{AADr,KSN16,Vien}. It is known \cite{ChDrEsHaVa,ChDrHaVaAlR} that 
if $G$ is a $\delta$-hyperbolic graph and $\{y,z\}$ is a pair returned after two BFS 
scans, then $d(y,z)\ge \diam(G) - 2\delta$, $\diam(G)\ge 2\rad(G) - 4\delta - 1$, 
$\diam(C(G)) \le 4\delta + 1$, and $C(G)$ is contained in a small ball centered at a 
middle vertex of any shortest $(y, z)$-path.
Recall that $C(G)$ denotes the set of centers of $G$ (see Section~\ref{sec:prelim}).
Consequently, there exist linear-time 
algorithms for the diameter
and radius problems with additive errors linearly depending on the input graph’s
hyperbolicity.  

In what follows,  we will analyze the behavior of our algorithms in the class of  $\delta$-hyperbolic graphs.  From the definition of a $\delta$-hyperbolic graph, we immediately get the following simple but very useful auxiliary lemma.

\begin{lemma}\label{lm:aux1}  Let $G=(V,E)$ be a $\delta$-hyperbolic graph. For every vertices $c,v,x,y\in V$, $d(x,v)-d(x,y)\geq d(c,v)-d(y,c) -2\delta$ or $d(y,v)-d(x,y)\geq d(c,v)-d(x,c) -2\delta$ holds.
\end{lemma}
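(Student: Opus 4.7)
The plan is to apply the four-point condition directly to the quadruple $c,v,x,y$. Recall the characterization of $\delta$-hyperbolicity stated just before the lemma: among the three distance sums
\[
A = d(c,v)+d(x,y),\quad B = d(c,x)+d(v,y),\quad C = d(c,y)+d(v,x),
\]
the two largest differ by at most $2\delta$.

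First I would rewrite the two candidate conclusions by moving terms so that they become comparisons between these sums. The inequality $d(x,v)-d(x,y)\ge d(c,v)-d(y,c)-2\delta$ is equivalent (adding $d(x,y)+d(y,c)$ to both sides) to $C\ge A-2\delta$. Similarly, $d(y,v)-d(x,y)\ge d(c,v)-d(x,c)-2\delta$ is equivalent to $B\ge A-2\delta$. So the lemma reduces to showing $\max(B,C)\ge A-2\delta$.

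Then I would finish by a short case split on the rank of $A$. If $A$ is the smallest of the three sums, then trivially $\max(B,C)\ge A\ge A-2\delta$. Otherwise $A$ is one of the two largest, and by the four-point condition it differs from the other of the two largest — which lies in $\{B,C\}$ — by at most $2\delta$, yielding $\max(B,C)\ge A-2\delta$. In both cases the desired conclusion holds, which proves the lemma.

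There is essentially no obstacle here; the whole content is the algebraic recognition that the two inequalities claimed in the lemma are precisely the statements $B\ge A-2\delta$ and $C\ge A-2\delta$, so that a direct invocation of the four-point condition suffices.
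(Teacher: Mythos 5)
Your proof is correct and takes essentially the same route as the paper: both reduce the two claimed inequalities to $C\ge A-2\delta$ and $B\ge A-2\delta$ for the three distance sums of the quadruple $c,v,x,y$ and then invoke the four-point condition, differing only in how the case analysis is organized (the paper fixes WLOG which of $B,C$ is larger and compares it to $A$; you case-split on the rank of $A$). No gap.
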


\begin{proof} Assume, without loss of generality, that $d(x,c)+d(v,y)\leq d(y,c)+d(x,v)$. If also $d(c,v)+d(x,y)\geq d(y,c)+d(x,v)$ then, by $\delta$-hyperbolicity of $G$, $d(c,v)+d(x,y)- d(y,c)-d(x,v)\leq 2\delta$, i.e., $d(x,v)-d(x,y)\geq d(c,v)-d(y,c) -2\delta$. If $d(c,v)+d(x,y)\leq d(y,c)+d(x,v)$, then $d(x,v)-d(x,y)\geq d(c,v)-d(y,c)\geq d(c,v)-d(y,c) -2\delta$.
\end{proof}

Using this simple lemma, we get as easy corollaries two useful results from \cite{ChDrEsHaVa}. Denote by $F(s):=\{v\in V: d(s,v)=e(s)\}$ the set of vertices furthest from $s$. 

\begin{corollary} \label{cor:diam-rad}  For every $\delta$-hyperbolic graph $G$, $\diam(G)\geq 2\rad(G)-4\delta-1$.
\end{corollary}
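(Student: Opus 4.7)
The plan is to mimic the classical tree argument (where $\rad(G)\le\lceil\diam(G)/2\rceil$ by taking a middle vertex of a diametral path) and use Lemma~\ref{lm:aux1} to absorb a $+2\delta$ slack coming from hyperbolicity.

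First I would fix a diametral pair $x,y\in V$ with $d(x,y)=\diam(G)$, and pick a vertex $m$ lying on a shortest $x$–$y$ path such that $d(x,m)+d(m,y)=d(x,y)$ and $\max\{d(x,m),d(m,y)\}\le\lceil\diam(G)/2\rceil$. Such an $m$ exists because the two distances $d(x,m),d(m,y)$ vary by $\pm 1$ as $m$ moves along the path, summing to $\diam(G)$. The goal then is to show $e(m)\le\lceil\diam(G)/2\rceil+2\delta$; this immediately yields $\rad(G)\le e(m)\le\tfrac{\diam(G)+1}{2}+2\delta$, which rearranges to the claimed bound $\diam(G)\ge 2\rad(G)-4\delta-1$.

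To bound $e(m)$, I would take any $v\in V$ and apply Lemma~\ref{lm:aux1} with $c:=m$ and the lemma's $x,y$ set to the diametral pair $x,y$. The lemma yields
\[
d(x,v)-d(x,y)\ge d(m,v)-d(m,y)-2\delta
\qquad\text{or}\qquad
d(y,v)-d(x,y)\ge d(m,v)-d(m,x)-2\delta.
\]
Because $d(x,y)=\diam(G)$ is maximal, we have $d(x,v)-d(x,y)\le 0$ and $d(y,v)-d(x,y)\le 0$. Hence in the first case $d(m,v)\le d(m,y)+2\delta$, and in the second $d(m,v)\le d(m,x)+2\delta$. Either way
\[
d(m,v)\le \max\{d(m,x),d(m,y)\}+2\delta\le\lceil\diam(G)/2\rceil+2\delta,
\]
so $e(m)\le\lceil\diam(G)/2\rceil+2\delta$ as desired.

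I do not anticipate a real obstacle: the only subtlety is the parity issue forcing the ceiling (and hence the $-1$ in the final bound), and making sure that the application of Lemma~\ref{lm:aux1} uses the \emph{diametral} pair as the lemma's $x,y$ (so that $d(x,y)$ dominates $d(x,v)$ and $d(y,v)$) rather than the center and a furthest vertex. Once the roles are chosen this way, the proof is a one-line consequence of the lemma.
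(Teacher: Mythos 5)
Your proof is correct and follows essentially the same route as the paper: both take a middle vertex $m$ of a diametral shortest path, apply Lemma~\ref{lm:aux1} to $m$, a vertex $v$, and the diametral pair $x,y$, and use maximality of $d(x,y)$ to conclude $e(m)\le\lceil\diam(G)/2\rceil+2\delta$ (the paper phrases this as $d(y,c)\ge e(c)-2\delta$ for $v$ furthest from $c$, which is the same inequality rearranged). The only cosmetic difference is that you quantify over all $v$ while the paper instantiates $v\in F(c)$; the parity bookkeeping giving the $-1$ is identical.
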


\begin{proof} Let $x,y$ be vertices of $G$ such that $d(x,y)=\diam(G)$. Let $c$ be a middle vertex of any shortest path connecting $x$ with $y$. Applying  Lemma \ref{lm:aux1} to $c,v,x,y$, where $v\in F(c)$, assume without loss of generality that
$d(x,v)-d(x,y)\geq d(c,v)-d(y,c) -2\delta$ holds. Then, since $d(x,y)\geq d(x,v)$, $d(y,c)\geq d(c,v) -2\delta$. Hence,
$d(x,y)=d(x,c)+d(c,y)\geq 2d(c,y)-1\geq 2 d(c,v)-4\delta-1= 2e(c)-4\delta-1\geq 2\rad(G)-4\delta-1$.
\end{proof}

\begin{corollary} \label{cor:diam-2delta}  Let $G=(V,E)$ be a $\delta$-hyperbolic graph. For every vertices $c,v\in V$ such that $v\in F(c)$,
  $e(v)\geq \diam(G)-2\delta\geq 2\rad(G)-6\delta-1$.
\end{corollary}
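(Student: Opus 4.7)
The plan is to combine Lemma~\ref{lm:aux1} with the defining property of $v \in F(c)$, namely that $d(c,v) = e(c) \geq d(c,w)$ for every $w \in V$. The second inequality $\diam(G) - 2\delta \geq 2\rad(G) - 6\delta - 1$ is immediate from Corollary~\ref{cor:diam-rad}, so the work is entirely in proving $e(v) \geq \diam(G) - 2\delta$.

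First, I would pick a diametral pair $x,y$, i.e., $d(x,y) = \diam(G)$, and apply Lemma~\ref{lm:aux1} to the four vertices $c, v, x, y$. The lemma yields one of two symmetric inequalities; without loss of generality assume
\[
d(x,v) - d(x,y) \geq d(c,v) - d(y,c) - 2\delta.
\]
Rearranging and substituting $d(x,y) = \diam(G)$ gives $d(x,v) \geq \diam(G) + d(c,v) - d(y,c) - 2\delta$. Now the hypothesis $v \in F(c)$ enters: since $d(c,v) = e(c) = \max_{w} d(c,w) \geq d(c,y)$, the quantity $d(c,v) - d(y,c)$ is nonnegative, so $d(x,v) \geq \diam(G) - 2\delta$. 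Since $e(v) \geq d(x,v)$, this is the desired bound. The symmetric case is handled identically by swapping the roles of $x$ and $y$.

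For the chained inequality, I would just invoke Corollary~\ref{cor:diam-rad} to get $\diam(G) \geq 2\rad(G) - 4\delta - 1$, and subtract $2\delta$ from both sides.

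I do not anticipate a real obstacle here: Lemma~\ref{lm:aux1} is tailor-made for this type of ``four-point'' argument, and the only subtle point is recognizing that the term $d(c,v) - d(y,c)$ is nonnegative precisely because $v$ attains the eccentricity of $c$. The rest is bookkeeping.
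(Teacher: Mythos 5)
Your proof is correct and follows essentially the same route as the paper's: apply Lemma~\ref{lm:aux1} to $c,v$ and a diametral pair $x,y$, then use $d(c,v)=e(c)\geq d(c,y)$ to drop the nonnegative term and conclude $e(v)\geq d(x,v)\geq \diam(G)-2\delta$, with the second inequality following from Corollary~\ref{cor:diam-rad}. No differences worth noting.
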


\begin{proof} Apply Lemma \ref{lm:aux1} to $c,v$ and vertices $x,y$ such that $d(x,y)=\diam(G)$. Without loss of generality, assume that
$d(x,v)-d(x,y)\geq d(c,v)-d(y,c) -2\delta$ holds. Then, since $d(c,v)\geq d(c,y)$, $e(v)\geq d(x,v)\geq d(x,y)+d(c,v)-d(y,c) -2\delta\geq d(x,y)-2\delta=\diam(G)-2\delta$.
\end{proof}

We are ready to analyze Algorithm~\ref{alg:rad}. Let $u_i$ and $a_i\in F(u_i)$ be the vertices picked in iteration $i$ of the do-while loop. Let $K_i:=\{u_1,u_2,\dots, u_i\}$ and $L_i:=\{a_1,a_2,\dots, a_i\}$. 
According to the algorithm, $u_1$ is picked arbitrarily (as initially $L=\emptyset$), $a_1$ is a vertex furthest from $u_1$, $u_2=a_1$ (as $L_1=\{a_1\}$ is a singleton), $a_2$ is a vertex most distant from $u_2=a_1$, $u_3$ is a middle vertex of a shortest $(a_1,a_2)$-path. By Corollary \ref{cor:diam-2delta}, we already have $d(a_1,a_2)\geq \diam(G)-2\delta$. We can also show that $e(u_3)\leq \rad(G)+3\delta$.

\begin{proposition}\label{prop:middlevertex-ecc} If $G$ is a $\delta$-hyperbolic graph, then $d(a_1,a_2)\geq \diam(G)-2\delta$ and $e(u_3)\leq \rad(G)+3\delta$.
\end{proposition}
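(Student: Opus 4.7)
The first claim is essentially a restatement of Corollary~\ref{cor:diam-2delta}. I would simply observe that Algorithm~\ref{alg:rad} chooses $a_1 \in F(u_1)$, so applying Corollary~\ref{cor:diam-2delta} to $c=u_1$ and $v=a_1$ gives $e(a_1)\ge \diam(G)-2\delta$. Since $u_2=a_1$ and $a_2\in F(u_2)=F(a_1)$, we have $d(a_1,a_2)=e(a_1)\ge \diam(G)-2\delta$.

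For the second claim, fix an arbitrary vertex $v\in V$; the goal is to prove $d(u_3,v)\le \rad(G)+3\delta$. Write $\Delta=d(a_1,a_2)$ and $\mu_i=d(u_3,a_i)$ for $i=1,2$. Since $u_3$ is a middle vertex of a shortest $(a_1,a_2)$-path, $\mu_1+\mu_2=\Delta$ and $\mu_1,\mu_2\ge \lfloor \Delta/2\rfloor\ge \Delta/2$. The main step is to apply the 4-point condition of $\delta$-hyperbolicity to the quadruple $\{u_3,v,a_1,a_2\}$: among the three sums
\[
  d(u_3,v)+\Delta,\qquad \mu_1+d(v,a_2),\qquad \mu_2+d(v,a_1),
\]
the two largest differ by at most $2\delta$. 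In particular, the first sum is bounded by the maximum of the other two plus $2\delta$, which rearranges into
\[
  d(u_3,v)\;\le\;\max\bigl(d(v,a_2)-\mu_2,\; d(v,a_1)-\mu_1\bigr)+2\delta.
\]

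The concluding step combines this inequality with the first claim and the elementary bound $\diam(G)\le 2\rad(G)$. Without loss of generality the maximum is attained at $a_2$, so $d(u_3,v)\le d(v,a_2)-\mu_2+2\delta$. Using $d(v,a_2)\le e(a_2)\le \diam(G)$ and $\mu_2\ge \Delta/2 \ge (\diam(G)-2\delta)/2=\diam(G)/2-\delta$, we obtain
\[
  d(u_3,v)\;\le\; \diam(G)-\bigl(\diam(G)/2-\delta\bigr)+2\delta \;=\;\diam(G)/2+3\delta\;\le\;\rad(G)+3\delta,
\]
as desired. Since $v$ was arbitrary, this gives $e(u_3)\le \rad(G)+3\delta$.

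The only subtlety I expect is the bookkeeping of parities: if $\Delta$ is odd, then $\mu_i$ may be $\lfloor \Delta/2\rfloor=\Delta/2-1/2$ rather than exactly $\Delta/2$, which introduces a harmless $+1/2$ that is absorbed once one remembers that all distances (and hence both sides of the displayed inequality) are integers. The main substantive step is the single application of the 4-point condition; everything else is triangle inequality and the two standard inequalities $\Delta\ge \diam(G)-2\delta$ (just proved) and $\diam(G)\le 2\rad(G)$.
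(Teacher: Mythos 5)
Your proof is correct and follows essentially the same route as the paper: the first claim is Corollary~\ref{cor:diam-2delta} applied to $u_1$ and $a_1$, and the second is a single application of the four-point condition to the quadruple $\{u_3,v,a_1,a_2\}$ combined with $\mu_1+\mu_2=d(a_1,a_2)\ge\diam(G)-2\delta$ and $\diam(G)\le 2\rad(G)$; the paper phrases the four-point step via Lemma~\ref{lm:aux1} and specializes $v$ to a furthest vertex $a_3\in F(u_3)$, which is the only cosmetic difference. The half-integer rounding you flag is handled no more carefully in the paper's own proof (which writes $\min\{d(u_3,a_1),d(u_3,a_2)\}\ge\lfloor d(a_1,a_2)/2\rfloor\ge\lfloor\diam(G)/2\rfloor-\delta$ and concludes with $\lceil\diam(G)/2\rceil+3\delta\le\rad(G)+3\delta$), so your bookkeeping is at the same level of precision.
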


\begin{proof} We only need to estimate the eccentricity of the vertex $u_3$. As $u_3$ is a middle vertex of a shortest $(a_1,a_2)$-path, $\min\{d(u_3,a_1),d(u_3,a_2)\}\geq \lfloor\frac{d(a_1,a_2)}{2}\rfloor \geq \lfloor\frac{\diam(G)}{2}\rfloor -\delta$. Now, without loss of generality, assume (see Lemma \ref{lm:aux1}) that for vertices $u_3,a_3,a_2,a_1$, $d(a_2,a_3)-d(a_2,a_1)\geq d(u_3,a_3)-d(a_1,u_3) -2\delta$ holds. Then, $e(u_3)= d(u_3,a_3)\leq  d(a_2,a_3)-d(a_2,a_1)+d(a_1,u_3) +2\delta=d(a_2,a_3)- d(a_2,u_3) +2\delta\leq \diam(G)-\lfloor\frac{\diam(G)}{2}\rfloor +\delta +2\delta= \lceil\frac{\diam(G)}{2}\rceil +3\delta\leq \lceil\frac{2\rad(G)}{2}\rceil +3\delta=\rad(G)+3\delta$.
\end{proof}

Thus, in $\delta$-hyperbolic graphs, a vertex with eccentricity at most $\rad(G)+3\delta$ and a pair of vertices that are at least $\diam(G)-2\delta$ apart from each other are found by Algorithm \ref{alg:rad} in at most 3 iterations, i.e., in linear time. Note that a similar linear-time algorithm was already reported in  \cite{ChDrEsHaVa}: in $\delta$-hyperbolic graphs, a vertex with eccentricity at most $\rad(G)+5\delta$ and a pair of vertices that are at least $\diam(G)-2\delta$ apart from each other can be found in linear time.

Next, we show that a vertex with eccentricity at most $\rad(G)+2\delta$ is found by Algorithm \ref{alg:rad} in at most $2\delta+2$ iterations. Consider iteration $i\geq 3$ and let $p',p''$ be vertices of $L_{i-1}$ with the largest distance, i.e., $d(p',p'')=\max\{d(x,y): x,y\in L_{i-1}\}=diam(L_{i-1})$. If $e(u_{i})>\rad(G)+2\delta$, then applying Lemma \ref{lm:aux1} to $u_i,a_i,p',p''\in V$, we get $d(p',a_i)-d(p',p'')\geq d(u_i,a_i)-d(p'',u_i) -2\delta$ or $d(p'',a_i)-d(p',p'')\geq d(u_i,a_i)-d(p',u_i) -2\delta$. Hence, $\max\{d(p',a_i)-d(p',p''), d(p'',a_i)-d(p',p'') \}\geq e(u_i)- \rad(G)-2\delta>0$ (as $\max\{d(p'',u_i),d(p',u_i)\}\leq 
\min_{u\in V} e_{L_{i-1}}(u)\leq \rad(G)$). That is, if $e(u_{i})>\rad(G)+2\delta$ then $\diam(L_i)>\diam(L_{i-1})$. As $\diam(L_2)=d(a_1,a_2)\geq \diam(G)-2\delta$, in at most $2\delta+2$ iterations of the while-loop of Algorithm \ref{alg:rad} we will get $diam(L_i)=diam(L_{i-1})$ (with $i\leq 2\delta+2$) and hence  $e(u_{i})\leq \rad(G)+2\delta$ must hold. Thus, we proved the following proposition.

\begin{proposition}\label{prop:rad-2delta} If $G$ is a $\delta$-hyperbolic graph, then there is an index $i\leq 2\delta+2$ such that $e(u_i)\leq \rad(G)+2\delta$. Furthermore, $e(u_j)\leq \rad(G)+2\delta$ for all $j\geq 2\delta+2$.
\end{proposition}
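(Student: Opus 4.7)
The plan is to track the quantity $\diam(L_i) := \max\{d(x,y) : x,y \in L_i\}$ as the algorithm proceeds. Proposition~\ref{prop:middlevertex-ecc} already gives the starting value $\diam(L_2) = d(a_1,a_2) \ge \diam(G) - 2\delta$, leaving room of at most $2\delta$ before the obvious ceiling $\diam(L_i) \le \diam(G)$ is reached. My goal is then to prove the invariant: whenever iteration $i$ produces a bad iterate, i.e.\ $e(u_i) > \rad(G) + 2\delta$, the value $\diam(L_i)$ strictly increases from $\diam(L_{i-1})$. Combined with the ceiling, this immediately caps how long the algorithm can produce bad iterates.

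To establish the invariant, I would pick $p', p'' \in L_{i-1}$ with $d(p',p'') = \diam(L_{i-1})$ and apply the four-point condition of Lemma~\ref{lm:aux1} to the quadruple $(u_i, a_i, p', p'')$. After possibly swapping the roles of $p'$ and $p''$, it yields $d(p', a_i) - d(p',p'') \ge d(u_i, a_i) - d(p'', u_i) - 2\delta = e(u_i) - d(p'', u_i) - 2\delta$. The key step is to bound $\max(d(p', u_i), d(p'', u_i))$: since Algorithm~\ref{alg:rad} selects $u_i$ minimizing $e_{L_{i-1}}$, and since $e_{L_{i-1}}(v) \le e(v)$ universally, one has $\max(d(p', u_i), d(p'', u_i)) \le e_{L_{i-1}}(u_i) = \min_v e_{L_{i-1}}(v) \le \min_v e(v) = \rad(G)$. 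Substituting gives $\max(d(p',a_i), d(p'',a_i)) \ge \diam(L_{i-1}) + e(u_i) - \rad(G) - 2\delta$, which is strictly greater than $\diam(L_{i-1})$ exactly when $e(u_i) > \rad(G) + 2\delta$. Because $a_i \in L_i$, this forces $\diam(L_i) > \diam(L_{i-1})$.

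A counting argument then closes both claims. Since $\diam(L_2) \ge \diam(G) - 2\delta$ and each bad iteration $i \ge 3$ contributes a strictly positive integer jump to $\diam(L_i)$, at most $2\delta$ bad iterations can occur before $\diam(L_i)$ saturates at $\diam(G)$. In particular, some index $i \le 2\delta + 2$ must be good, proving the first assertion. Moreover, once those $2\delta$ permissible violations have been spent, every subsequent iteration must satisfy $e(u_j) \le \rad(G) + 2\delta$, since any further violation would push $\diam(L_j)$ above $\diam(G)$; this yields the ``furthermore'' clause.

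The main technical obstacle is the step bounding $\max(d(p', u_i), d(p'', u_i)) \le \rad(G)$. Without this, the four-point inequality of Lemma~\ref{lm:aux1} only produces an algebraic relation but not a strict gain in $\diam(L_i)$; it is precisely the interaction between the algorithm's greedy minimum-$e_L$ selection rule and the universal inequality $e_L \le e$ that converts the hyperbolicity hypothesis into quantitative progress.
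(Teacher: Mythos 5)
Your proof is correct and follows essentially the same route as the paper's: you track $\diam(L_i)$, apply Lemma~\ref{lm:aux1} to the quadruple $(u_i,a_i,p',p'')$ with $p',p''$ realizing $\diam(L_{i-1})$, use the greedy selection rule to bound $\max\{d(p',u_i),d(p'',u_i)\}\le e_{L_{i-1}}(u_i)\le\rad(G)$, and close with the same counting argument starting from $\diam(L_2)\ge\diam(G)-2\delta$. No substantive differences to report.
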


The second part of Proposition \ref{prop:rad-2delta} says that all $u_i$ vertices generated by Algorithm \ref{alg:rad} after $2\delta+1$ iterations have eccentricity at most $\rad(G)+2\delta$. Hence, in $\delta$-hyperbolic graphs where the set $C^{2\delta}(G):=\{c\in V: e(c)\leq \rad(G)+2\delta\}$ has  cardinality  bounded by some function $g(\delta)$, depending only on $\delta$, our algorithm will produce a vertex with eccentricity $\rad(G)$ (i.e., a central vertex)   in at most $g(\delta)+2\delta+1$ iterations.

Next we show that the set $C^{2\delta}(G)$ of a $\delta$-hyperbolic graph has bounded diameter. Before our work, it was known that $diam(C(G))\leq 4\delta+1$ and there exists a vertex $c\in V$ such that $d(v,c)\leq 5\delta+1$ for every $v\in C(G)$ \cite{ChDrEsHaVa}.

\begin{proposition}\label{prop:bonded} If $G$ is a $\delta$-hyperbolic graph, then for every $x,y\in C^{2\delta}(G)$, $d(x,y)\leq 8\delta+1$. Furthermore, there is a vertex $c\in V$ such that $d(v,c)\leq 6\delta+1$ for every $v\in C^{2\delta}(G)$. 
\end{proposition}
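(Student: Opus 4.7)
The plan is to reduce both statements to applications of the $\delta$-hyperbolicity 4-point condition (or its repackaging in Lemma~\ref{lm:aux1}) against a judiciously chosen diametral pair, combined with the lower bound $\diam(G) \geq 2\rad(G) - 4\delta - 1$ from Corollary~\ref{cor:diam-rad}.

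For the first claim, fix any diametral pair $a, b$ with $d(a,b) = \diam(G)$. For any $x, y \in C^{2\delta}(G)$, the four distances $d(x,a), d(x,b), d(y,a), d(y,b)$ are each at most $e(x), e(y) \le \rad(G) + 2\delta$, so the two sums $S_2 = d(x,a)+d(y,b)$ and $S_3 = d(x,b)+d(y,a)$ both satisfy $S_2, S_3 \le 2\rad(G) + 4\delta$. Writing $S_1 = d(x,y) + \diam(G)$, I invoke the 4-point condition: the two largest among $S_1, S_2, S_3$ differ by at most $2\delta$. A case split on whether $S_1$ is the largest sum or not, together with $\diam(G) \ge 2\rad(G) - 4\delta - 1$, yields the bound $d(x,y) \le 8\delta + 1$.

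For the second claim, pick a diametral pair $a, b$ and let $c$ be a vertex on a shortest $(a,b)$-path satisfying $d(a,c) = \lceil \diam(G)/2 \rceil$ and $d(b,c) = \lfloor \diam(G)/2 \rfloor$. For any $v \in C^{2\delta}(G)$, I apply Lemma~\ref{lm:aux1} to the quadruple $(c, v, a, b)$, which provides two alternative inequalities. In either case I isolate $d(c,v)$, plug in $d(a,v), d(b,v) \le \rad(G) + 2\delta$, and use the bounds $\lceil \diam(G)/2 \rceil \ge \rad(G) - 2\delta$ and $\lfloor \diam(G)/2 \rfloor \ge \rad(G) - 2\delta - 1$ (the latter following from Corollary~\ref{cor:diam-rad} and parity). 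The alternatives collapse to $d(c,v) \le 6\delta$.

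The main obstacle is the first statement: the naive 4-point analysis succeeds cleanly only when $S_1$ is not the largest sum (yielding $d(x,y) \le 2\rad(G) + 4\delta - \diam(G) \le 8\delta + 1$), while the complementary case threatens to lose an extra $2\delta$ of slack. To eliminate this slack one has to exploit tighter information about the chosen quadruple, for instance by using the triangle inequalities $d(x,a)+d(x,b) \ge \diam(G)$ and $d(y,a)+d(y,b) \ge \diam(G)$ (which force $\max(S_2,S_3) \ge \diam(G)$), or by combining the 4-point inequality with Lemma~\ref{lm:aux1}; this is where the careful arithmetic happens. The second claim is technically easier once one commits to the middle-vertex construction, since Lemma~\ref{lm:aux1} directly delivers the needed two-case bound on $d(c,v)$.
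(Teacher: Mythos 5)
Your proof of the second statement is essentially the paper's: a diametral pair, its midpoint $c$, Lemma~\ref{lm:aux1} applied to $(c,v,a,b)$, and the bound $\lfloor\diam(G)/2\rfloor\gtrsim\rad(G)-2\delta$ from Corollary~\ref{cor:diam-rad}. That part is fine (up to the same $\pm 1$ rounding in $\lfloor\diam(G)/2\rfloor$ that the paper itself glosses over).

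The first statement is where the gap is, and you have correctly located it but not closed it. Applying the four-point condition to the quadruple $\{x,y,a,b\}$ with $a,b$ diametral, the bad case is $S_1=d(x,y)+\diam(G)$ strictly largest: there you only get $d(x,y)\le \max(S_2,S_3)+2\delta-\diam(G)\le (2\rad(G)+4\delta)+2\delta-(2\rad(G)-4\delta-1)=10\delta+1$. Your proposed repair does not work: $d(x,a)+d(x,b)\ge\diam(G)$ and $d(y,a)+d(y,b)\ge\diam(G)$ give a \emph{lower} bound $\max(S_2,S_3)\ge\diam(G)$, whereas what you need is an \emph{upper} bound on $\max(S_2,S_3)$ sharper than $2\rad(G)+4\delta$ (or a lower bound on $\diam(G)$ sharper than $2\rad(G)-4\delta-1$); neither is available from the ingredients you list, so the extra $2\delta$ does not go away. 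The paper avoids this by choosing a different quadruple: take $c$ to be a \emph{middle vertex of a shortest $x$--$y$ path} and $v\in F(c)$ a furthest vertex from $c$, and apply Lemma~\ref{lm:aux1} to $(c,v,x,y)$. The point of this choice is that the "large" distance in the inequality is $d(c,v)=e(c)\ge\rad(G)$, which sits exactly where it needs to be subtracted: one branch of the lemma gives $d(x,c)=d(x,y)-d(y,c)\le d(x,v)-d(c,v)+2\delta\le e(x)-e(c)+2\delta\le 4\delta$, and since $c$ is a midpoint, $d(x,y)\le 2d(x,c)+1\le 8\delta+1$ (the other branch is symmetric in $x$ and $y$). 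So the missing idea is not extra arithmetic on the diametral-pair quadruple, but replacing the diametral pair by the pair (midpoint of the $x$--$y$ geodesic, furthest vertex from that midpoint).
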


\begin{proof} Let $c$ be a middle vertex of any shortest path connecting $x$ with $y$. Apply Lemma \ref{lm:aux1} to $c,v,x,y$, where $v\in F(c)$. Without loss of generality, assume that $d(x,v)-d(x,y)\geq d(c,v)-d(y,c) -2\delta$ holds. Then, $d(x,c)=d(x,y)-d(y,c)\leq d(x,v)-d(c,v) +2\delta\leq e(x)-e(c)+2\delta\leq \rad(G)+ 2\delta-\rad(G)+2\delta=4\delta$. Hence, $d(x,y)=d(x,c)+d(y,c)\leq 2d(x,c)+1\leq 8\delta+1$.

To prove the second assertion, consider a pair of vertices $x,y\in V$ with $d(x,y)=\diam(G)$ and a middle vertex $c$ of any shortest $(x,y)$-path.
Apply Lemma \ref{lm:aux1} to $c,v,x,y$, where $v$ is an arbitrary vertex from $C^{2\delta}(G)$. Without loss of generality, assume that $d(x,v)-d(x,y)\geq d(c,v)-d(y,c) -2\delta$ holds. We know also that $d(x,c)\geq \lfloor\frac{d(x,y)}{2}\rfloor = \lfloor\frac{\diam(G)}{2}\rfloor \geq \lfloor\frac{2\rad(G)-4\delta-1}{2}\rfloor\geq\rad(G)-2\delta-1$ (see Corollary \ref{cor:diam-rad}). Hence, $d(c,v)\leq d(x,v)-d(x,y)+d(y,c) +2\delta =d(x,v)-d(x,c)+2\delta \leq e(v)-\rad(G)+2\delta +2\delta\leq \rad(G)+2\delta+1-\rad(G)+4\delta=6\delta+1$.
\end{proof}

If the vertex degrees of a $\delta$-hyperbolic graph are bounded by a constant $\Delta$ then $C^{2\delta}(G)$ has at most $\Delta^{O(\delta)}$ vertices.
Summarizing, we have the following result.


\begin{theorem} \label{th:rad-diam-appr}
Let $G=(V,E)$ be a $\delta$-hyperbolic graph with $m$ edges. Algorithm \ref{alg:rad} finds \vspace*{-2mm}
\begin{enumerate}
   \item[1.] a vertex with eccentricity at most $\rad(G)+3\delta$ in at most $O(m)$ time,
   \item[2.] a vertex with eccentricity at most $\rad(G)+2\delta$ in at most $O(\delta m)$ time,
   \item[3.] a central vertex and a $O(1)$-size radius certificate in at most $O(m)$ time, if the vertex degrees and $\delta$ are bounded by constants.
\end{enumerate}
\end{theorem}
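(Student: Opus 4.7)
The plan is to derive each of the three claims from the propositions already established (Propositions~\ref{prop:middlevertex-ecc},~\ref{prop:rad-2delta}, and~\ref{prop:bonded}), viewing them as statements about what Algorithm~\ref{alg:rad} achieves after a given number of iterations. Throughout, each iteration of the do-while loop performs at most two BFS traversals, so its cost is $O(m)$.

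For part~1, Proposition~\ref{prop:middlevertex-ecc} directly yields $e(u_3)\le \rad(G)+3\delta$ for the vertex $u_3$ selected in the third iteration. Thus, running Algorithm~\ref{alg:rad} for (at most) three iterations and returning $\argmin_{u\in K}e(u)$ produces such a vertex in $O(m)$ time. Part~2 is analogous: Proposition~\ref{prop:rad-2delta} guarantees that some $u_i$ with $i\le 2\delta+2$ satisfies $e(u_i)\le\rad(G)+2\delta$, so stopping after $2\delta+2$ iterations and returning $\argmin_{u\in K}e(u)$ takes $O(\delta m)$ time.

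Part~3 is the main point. I will show that when both $\delta$ and the maximum degree $\Delta$ are bounded by constants, Algorithm~\ref{alg:rad} reaches its termination condition within $O(1)$ iterations, after which, by Theorem~\ref{th:rad}, it returns a central vertex. Two ingredients are needed. First, vertices selected in distinct (non-terminating) iterations are distinct: once $u$ is selected and the algorithm does not terminate, its antipode $a$ is appended to $L$, forcing $e_L(u)=d(u,a)=e(u)$; should $u$ be selected again, its minimality would yield $e(u)=e_L(u)\le e_L(v)\le e(v)$ for every $v$, certifying $u$ as a center and ending the algorithm in that iteration. Second, the second assertion of Proposition~\ref{prop:rad-2delta} implies that every vertex selected from iteration $2\delta+2$ onward lies in $C^{2\delta}(G)$.

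It remains to bound $|C^{2\delta}(G)|$. By Proposition~\ref{prop:bonded}, $C^{2\delta}(G)\subseteq B[c,6\delta]$ for some vertex $c$, hence $|C^{2\delta}(G)|\le \Delta^{6\delta+1}$ under maximum degree $\Delta$. Combining the two ingredients, the algorithm must terminate within $(2\delta+1)+|C^{2\delta}(G)|+1=O(1)$ iterations: any later selection would again belong to $C^{2\delta}(G)$, but all vertices of $C^{2\delta}(G)$ would already satisfy $e_L=e$, triggering the termination condition $e_L(u)=e(u)$. The total time is therefore $O(m)$. The only real obstacle is this bookkeeping step, ensuring that the distinctness of successively selected vertices, together with the inclusion into $C^{2\delta}(G)$, really forces termination rather than merely bounding a packing size.
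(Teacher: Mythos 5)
Your proof is correct and follows the paper's own route: part~1 from Proposition~\ref{prop:middlevertex-ecc}, part~2 from Proposition~\ref{prop:rad-2delta}, and part~3 by combining the second assertion of Proposition~\ref{prop:rad-2delta} with Proposition~\ref{prop:bonded} and the degree bound to get $|C^{2\delta}(G)|\le\Delta^{O(\delta)}$. Your only addition is to make explicit the termination bookkeeping (distinctness of selected vertices, and that a repeated selection triggers the case $e_L(u)=e(u)$), which the paper leaves implicit in its claim of at most $g(\delta)+2\delta+1$ iterations; this is a welcome clarification but not a different argument.
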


Another linear-time algorithm for finding a central vertex of a $\delta$-hyperbolic graph with $\delta$ and vertex degrees bounded by constants was proposed in \cite{ChDrEsHaVa}.

\section{Chordal graphs} 
\label{sec:chordal}

Recall that $F(s)=\{v\in V: d(v,s)=e(s)\}$ denotes the set of all vertices of $G$ that are furthest from $s$ and $C(G)=\{c\in V : e(c)=\rad(G)\}$ denotes the set of all central vertices of $G$. The {\em metric interval} $I(u,v)$ between vertices $u$ and $v$ is defined by $I(u,v)=\{x\in V: d(u,x)+ d(x,v)=d(u,v)\}$, i.e., it consists of all vertices of $G$ that lie on shortest paths between $u$ and $v$.

\bigskip

In this section we analyze the behavior of our algorithms in the class of chordal graphs. Recall that a graph $G$ is {\em chordal} if every induced cycle of length at least 4 has a chord.  Chordal graphs are interesting because a central vertex in them can be found in linear time \cite{ChDr94} but finding the diameter in truly subquadratic time would refute the Orthogonal Vectors Conjecture \cite{CDHP01,RV13}.


First we give an example of an $n$-vertex chordal graph $G$ on which Algorithm~\ref{alg:rad} will need $n/2$  iterations although  $G$ has a certificate for the radius consisting of only two vertices in $L$. Set $n=2k$ and consider two sets of vertices $X=\{x_1,\dots, x_k\}$ and $Y=\{y_1,\dots, y_k\}$. The vertex set of $G$ is $X\cup Y$. Make  $X$ a clique and $Y$ an independent set in $G$. Make every vertex $x_i$ adjacent to all vertices $y_j$ with $j\leq i$. Algorithm \ref{alg:rad} may place vertices $x_1,y_2,x_2,x_3, \dots, x_k$ (in this order) into $K$ and vertices $y_2,y_1,y_3,y_4,\dots, y_k$ (in this order) into $L$. The central vertex $x_k$ will be determined only when all $Y$-vertices are in $L$. On the other hand, $\{y_1,y_k\}$ is a radius certificate of $G$.

Note that the graph $G$ constructed has vertices of large degrees (up-to $n-1$). As every chordal graph $G$ has hyperbolicity at most 1,  it follows from Theorem \ref{th:rad-diam-appr} that our algorithm finds a central vertex  in linear time in every chordal graph  with vertex degrees bounded by a constant. It should be noted that there is a linear-time algorithm that finds a central vertex of an arbitrary chordal graph \cite{ChDr94}; it uses additional metric properties of chordal graphs.

To analyze possible radius and diameter certificates  in the class of chordal graphs, we will need the following important lemma.

\begin{lemma}[\cite{Ch86}]\label{lm:victor}
  Let $G$ be a chordal graph.  Let $x,y,v,u$ be
  vertices of $G$ such that $v \in I(x,y)$, $x\in I(v,u)$, and $x$ and
  $v$ are adjacent.  Then $d(u,y)\ge d(u,x) + d(v,y)$. Furthermore, $d(u,y)=d(u,x) + d(v,y)$ if and only
  if there exist a neighbor $x'$ of $x$ in $I(x,u)$, a neighbor
  $v'$ of $v$ in $I(v,y)$ and a vertex $w$ with $N(w)\supseteq\{x',x,v,v'\}$; in particular, $x'$,
  $v'$ and $w$ lie on a common shortest path of $G$ between $u$ and $y$.
\end{lemma}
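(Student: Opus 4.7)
My plan is to prove the inequality by induction on $d(u,x)+d(v,y)$ and then read off the equality characterization from the same cycle-and-chord analysis, with chordality doing the heavy lifting.

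First I would extract the immediate metric consequences of the hypotheses. Since $xv$ is an edge, $d(x,v)=1$, so the conditions $v\in I(x,y)$ and $x\in I(v,u)$ give $d(x,y)=1+d(v,y)$ and $d(u,v)=1+d(u,x)$. Concatenating a shortest $u,x$-path, the edge $xv$, and a shortest $v,y$-path already shows $d(u,y)\le d(u,x)+d(v,y)+1$, so establishing $d(u,y)\ge d(u,x)+d(v,y)$ reduces to ruling out $d(u,y)\le d(u,x)+d(v,y)-1$. The base cases $d(u,x)=0$ (i.e.\ $u=x$) and $d(v,y)=0$ (i.e.\ $v=y$) collapse into the interval hypotheses and are immediate.

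For the inductive step, suppose the strict deficit and fix shortest paths $P_{ux}$, $P_{vy}$ and $P_{uy}$. Concatenating $P_{ux}$, the edge $xv$, $P_{vy}$ and the reverse of $P_{uy}$ yields a closed walk containing an induced cycle $C$ of length at least $4$ passing through the four corners $u,x,v,y$ (distinct under the hypotheses, since otherwise the metric relations collapse). Chordality of $G$ forces $C$ to contain a chord. A case analysis on the chord's endpoints shows that either (a)~it shortens one of $P_{ux}$, $P_{vy}$ or $P_{uy}$, contradicting their being geodesic, or (b)~it produces neighbors $x''\in N(x)\cap I(x,u)$ on $P_{ux}$ and $v''\in N(v)\cap I(v,y)$ on $P_{vy}$ with a common neighbor $w$, so that the quadruple $(u,y,x'',v'')$ satisfies the hypotheses with strictly smaller $d(u,x'')+d(v'',y)=d(u,x)+d(v,y)-2$. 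The induction hypothesis applied to this smaller quadruple then gives $d(u,y)\ge d(u,x'')+d(v'',y)=d(u,x)+d(v,y)-2$, and re-threading $P_{uy}$ through $x''\to w\to v''$ contradicts the assumed deficit.

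The equality characterization is read off from the same analysis. For the forward direction, assume $d(u,y)=d(u,x)+d(v,y)$ and run the cycle-and-chord argument on a shortest $u,y$-path; the chord case analysis pins down a 4-cycle $x,x',w,v'$ with $x'\in N(x)\cap I(x,u)$, $v'\in N(v)\cap I(v,y)$ and $w\in N(x')\cap N(x)\cap N(v)\cap N(v')$, and the threaded path $u\leadsto x'\to w\to v'\leadsto y$ is a shortest $u,y$-path of length $d(u,x')+2+d(v',y)=d(u,x)+d(v,y)$. For the backward direction, given such $x',v',w$, the same threaded path certifies $d(u,y)\le d(u,x)+d(v,y)$, which combined with the inequality already proven forces equality. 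The main obstacle is the chord analysis: a chord not obviously shortening any of $P_{ux}$, $P_{vy}$, $P_{uy}$ may jump across $C$ in varied ways, and one must show that chordality always forces it to yield the local structure around the $xv$-edge described in case~(b). This is the delicate combinatorial heart of the argument and is exactly where the full chordal (rather than merely $k$-chordal for some bounded $k$) hypothesis on $G$ is used.
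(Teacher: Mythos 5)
This lemma is imported from \cite{Ch86}; the paper itself gives no proof, so your attempt can only be judged on its own terms, and as it stands it is a strategy outline with the decisive steps missing. The first genuine gap is the claim that concatenating $P_{ux}$, the edge $xv$, $P_{vy}$ and the reverse of $P_{uy}$ ``yields a closed walk containing an induced cycle $C$ of length at least $4$ passing through the four corners $u,x,v,y$.'' A closed walk does not, in general, contain an induced cycle through four prescribed vertices: the three geodesics may intersect one another in interior vertices (the deficit assumption only rules out $x,v\in I(u,y)$, not other intersections), and once you pass to an induced cycle inside the union of these paths you lose all control over which of $u,x,v,y$ it visits. Without that cycle the entire chord case analysis has no object to act on. The second gap is in the inductive step: you apply the induction hypothesis to the quadruple $(u,y,x'',v'')$, but the lemma's hypotheses require the two middle vertices to be \emph{adjacent} with the two interval conditions $v''\in I(x'',y)$ and $x''\in I(v'',u)$; you only produce a common neighbor $w$ of $x''$ and $v''$, not an edge $x''v''$, nor do you verify the interval conditions, so the induction does not close.

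Finally, you yourself flag that ``the delicate combinatorial heart of the argument'' --- showing that chordality always forces a chord configuration yielding the local structure $x',v',w$ around the edge $xv$ --- is not carried out. That is precisely the content of the lemma, so deferring it leaves essentially nothing proved beyond the trivial upper bound $d(u,y)\le d(u,x)+d(v,y)+1$ and the easy backward implication of the equality characterization (which is correct: the path $u\leadsto x'\to w\to v'\leadsto y$ does certify $d(u,y)\le d(u,x)+d(v,y)$ \emph{given} the inequality). A workable route, closer to Chepoi's original, is to induct on $d(u,y)$ using local metric properties of chordal graphs (e.g., the triangle condition: for adjacent $a,b$ equidistant from $z$ there is a common neighbor of $a,b$ one step closer to $z$), analysing the neighbor of $x$ on a geodesic toward $u$ and the neighbor of $v$ on a geodesic toward $y$, rather than attempting a global cycle extraction.
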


Our analysis is based on the following propositions which are also of independent interest. Recall that $C^1(G):=\{v\in V: e(v)\le \rad(G)+1\}$.

\begin{proposition}\label{prop:chordal} Let $G=(V,E)$ be a chordal graph.
\begin{itemize}
  \item[(i)] If $\diam(G)<2\rad(G)$ then, for every vertices $s\in V$ and $t\in F(s)$, there is a vertex $w\in I(s,t)\cap C(G)$ such that $t\in F(w)$.
  \item[(ii)] If $\diam(G)=2\rad(G)$ then, for every vertices $s\in V$ and $t\in F(s)$, there is a vertex $w\in I(s,t)\cap C^1(G)$ such that $t\in F(w)$.
\end{itemize}
\end{proposition}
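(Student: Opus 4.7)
The plan is to choose $w\in I(s,t)$ with $t\in F(w)$ and $d(w,t)=e(w)$ minimum over such vertices ($w=s$ is always a candidate), and then prove the single inequality $\diam(G)\ge 2e(w)-2$. Since $e(w)$ is an integer, this gives $e(w)\le\lfloor\diam(G)/2\rfloor+1$, which yields $e(w)\le\rad(G)$ (so $w\in C(G)$) in case (i), where $\diam(G)\le 2\rad(G)-1$, and $e(w)\le\rad(G)+1$ (so $w\in C^1(G)$) in case (ii).

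To establish $\diam(G)\ge 2e(w)-2$, fix a shortest $(s,t)$-path through $w$ and let $v$ be the next vertex on it. The minimality of $w$ and $d(v,t)=e(w)-1<d(w,t)$ force $t\notin F(v)$, so some $u\ne t$ satisfies $d(v,u)=e(v)\ge e(w)$. Triangle inequality then gives $d(w,u)\in\{e(w)-1,e(w)\}$ and $d(v,u)\in\{e(w),e(w)+1\}$. In the subcases $d(w,u)=e(w)-1$ (forcing $d(v,u)=e(w)$) or $d(w,u)=e(w)$ with $d(v,u)=e(w)+1$, the identity $d(v,u)=1+d(w,u)$ places $w$ on a shortest $(v,u)$-path; since also $v\in I(w,t)$, Lemma~\ref{lm:victor} applied with its variables $(x,v,y,u)$ set to $(w,v,t,u)$ directly yields $d(u,t)\ge d(u,w)+d(v,t)\ge 2e(w)-2$.

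The remaining subcase $d(w,u)=d(v,u)=e(w)$ is the main obstacle. I would then take the first vertex $u_1\ne v$ of a shortest $(w,u)$-path (so $u_1$ is a neighbor of $w$ with $d(u_1,u)=e(w)-1$) and branch on $d(u_1,t)$: when $d(u_1,t)=e(w)+1$, one has $w\in I(u_1,t)$ and Lemma~\ref{lm:victor} with $(x,v,y,u)=(u_1,w,t,u)$ gives $d(u,t)\ge 2e(w)-1$; when $d(u_1,t)=e(w)-1$, then $u_1\in I(w,t)$ lies on a modified shortest $(s,t)$-path, so the minimality of $w$ produces a new witness $u'\ne t$ with $d(u_1,u')\ge e(w)$ that replaces $(v,u)$ by $(u_1,u')$ in the case analysis, a recursion which terminates because each step introduces a new witness in the finite set of vertices at distance at least $e(w)-1$ from $w$; the intermediate value $d(u_1,t)=e(w)$ is resolved by invoking chordality on the 4-cycle obtained from the edge $wu_1$ together with the first two steps of a shortest $(u_1,t)$-path, whose required chord produces either a new neighbor of $w$ lying on $I(w,t)$ (feeding the recursion) or a configuration to which Lemma~\ref{lm:victor} applies. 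Keeping this case analysis finite, by repeatedly exploiting the rigid concatenation property of shortest paths in chordal graphs encoded in Lemma~\ref{lm:victor} and the minimality of $w$, is the main technical challenge.
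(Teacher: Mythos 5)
Your overall plan is sound: reducing both cases to the single inequality $\diam(G)\ge 2e(w)-2$ for an extremal $w$ is a legitimate reformulation (the paper instead descends inductively from $s$ along $I(s,t)$, decreasing the eccentricity by one at each step, but the arithmetic at the end is the same), and your subcases where $w\in I(v,u)$ holds are handled correctly by Lemma~\ref{lm:victor}. The problem is that the remaining subcase $d(w,u)=d(v,u)=e(w)$ is exactly where the entire content of the proposition lives, and your treatment of it is not a proof. Two concrete gaps: first, in the branch $d(u_1,t)=e(w)-1$ you recurse on a new pair $(u_1,u')$, but you never show the recursion terminates --- nothing prevents the new witness $u'$ from coinciding with an earlier one, and ``each step introduces a new witness in the finite set of vertices at distance at least $e(w)-1$ from $w$'' is asserted, not proved. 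Second, in the branch $d(u_1,t)=e(w)$ there is no identifiable $4$-cycle: the edge $wu_1$ together with two initial edges of a shortest $(u_1,t)$-path does not close up into a cycle of length $4$ in general, so ``invoking chordality'' on it is not a well-defined step, and the claimed dichotomy (a new neighbor of $w$ on $I(w,t)$, or a configuration for Lemma~\ref{lm:victor}) does not follow from anything you have written.

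For comparison, the paper gets past this obstruction with two ingredients your sketch never uses. It first proves a weaker descent statement (every vertex of eccentricity $k>\rad(G)$ has a vertex of eccentricity $k-1$ within distance~$2$), and then, assuming no \emph{neighbor} of $x$ has eccentricity $k-1$, it makes an extremal choice of the common neighbor $z$ of $x$ and $y$ maximizing $\lvert\{v\in F(x): z\in I(x,v)\}\rvert$ and invokes the \emph{equality characterization} in the second part of Lemma~\ref{lm:victor}. That equality case is what manufactures a common neighbor $w$ of $x$, $y$, $z$, and the maximality of $z$ then yields two furthest vertices $u,u'$ with $d(u,u')\ge 2k-2$, giving the contradiction. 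Your argument only ever uses the inequality half of Lemma~\ref{lm:victor} and the single minimality of $d(w,t)$; without the equality case (or some equivalent structural input from chordality) the stuck configuration $d(w,u)=d(v,u)=e(w)$ cannot be broken, so as written the proof is incomplete.
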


\begin{proof}
First we show that for every vertex $x$ of $G$ with $e(x)=k>\rad(G)$ there is a vertex $y$ such that $e(y)=k-1$ and $d(x,y)\leq 2$. This is true even in the case when $\diam(G)=2\rad(G)$. Consider a vertex $y$ in $G$ with $e(y)=k-1$ that is closest to $x$. Let $z$ be any neighbor of $y$ in $I(y,x)$. Necessarily, $e(z)=k$. Consider a vertex $u\in F(z)$. Since $d(y,u)\le e(y)=k-1=e(z)-1=d(z,u)-1\le d(y,u)$, we have $y\in I(z,u)$. Applying Lemma \ref{lm:victor} to $y\in I(z,u)$ and $z\in I(y,x)$, we get $d(x,u)\ge d(x,y)-1+d(y,u)=d(x,y)+k-2$. As $d(x,u)\leq e(x)=k$, we conclude $d(x,y)\le 2$.

Next we claim that if $\diam(G)<2\rad(G)$ then for every vertex $x$ of $G$ with $e(x)=k>\rad(G)$ there is in fact a vertex $z\in N(x)$ such that $e(z)=k-1$. Furthermore, if $\diam(G)=2\rad(G)$, such a neighbor $z$  with $e(z)=k-1$ exists for every vertex $x$ of $G$ with $e(x)=k>\rad(G)+1$.
Assume, by way of contradiction, that no neighbor of $x$ has eccentricity $k-1$ and let $y$ be an arbitrary vertex of $G$ with $d(x,y)=2$ and $e(y)=k-1$. Let also $z$ be a vertex from $N(x)\cap N(y)$ for which the set $S_x(z)=\{v\in F(x): z\in I(x,v)\}$ is largest. Necessarily, $e(z)=k$. As before, consider a vertex $u\in F(z)$. By Lemma \ref{lm:victor},  applied to $y\in I(z,u)$ and $z\in I(y,x)$, we get $d(x,u)\ge d(y,u)+d(x,z)=k-1+1=k$. As $d(x,u)\leq e(x)=k$, we conclude $d(x,u)=k$. Hence, by the second part of Lemma \ref{lm:victor}, there must exist a vertex $w$ adjacent to $y,z,x$ and at distance $k-1$ from $u$. As $u\in F(x)$, $w\in I(x,u), z\notin I(x,u)$, by the maximality of $|S_x(z)|$, there must exist a vertex $u'\in F(x)$ with $w\notin I(x,u'), z\in I(x,u')$. We have $d(z,u')=k-1$,  $d(w,u')=k$, and hence $z\in I(w,u')$ and $w\in I(z,u)$. By Lemma \ref{lm:victor},  $d(u,u')\ge d(u,w)+d(z,u')=k-1+k-1=2k-2$. Hence, $\diam(G)\ge d(u,u')> 2\rad(G)$, if $k> \rad(G)+1$, and $\diam(G)\ge d(u,u')\ge 2\rad(G)$, if $k= \rad(G)+1$.
These contradictions prove the claim.

Now we can conclude our proof. Consider arbitrary vertices  $s\in V$ and $t\in F(s)$ and proceed by induction on $k=e(s)$. If $k= \rad(G)$ then $w=s$ and we are done. If $k=\rad(G)+1$ and $\diam(G)=2\rad(G)$ then again $w=s$ and we are done. If $k>\rad(G)+1$ or $k=\rad(G)+1$ and $\diam(G)<2\rad(G)$ then a neighbor $z$ of $s$ with $e(z)=k-1$ satisfies $t\in F(z)$, and we can apply the induction hypothesis.
\end{proof}

A pair $x,y$ of vertices is called a {\em diametral pair} of a graph $G$ if $d(x,y)=\diam(G)$. 

\begin{proposition}\label{prop:chordal-cert-diam}  The center $C(G)$ of a chordal graph $G$ is a diameter certificate of $G$ (not necessarily a smallest one).
\end{proposition}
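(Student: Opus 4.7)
The plan is to show that for every vertex $v \in V$, there exists a central vertex $c \in C(G)$ with $d(v,c) + e(c) \le \diam(G)$, which by the definition of Section~\ref{sec:cert} is exactly the statement that $C(G)$ is a diameter certificate (i.e., a covering with the balls $B[c, \diam(G) - e(c)]$ for $c \in C(G)$).

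I would split on whether $\diam(G) = 2\rad(G)$ or $\diam(G) < 2\rad(G)$ (recall $\diam(G) \le 2\rad(G)$ always). In the first case, the claim is essentially trivial: for any center $c \in C(G)$ and any $v \in V$ we have $d(v,c) \le e(c) = \rad(G)$, hence $d(v,c) + e(c) \le 2\rad(G) = \diam(G)$, so a single center already suffices.

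The substantive case is $\diam(G) < 2\rad(G)$, and this is where Proposition~\ref{prop:chordal}(i) does the heavy lifting. Given $v \in V$, pick any furthest vertex $t \in F(v)$, so $d(v,t) = e(v)$. Applying Proposition~\ref{prop:chordal}(i) to $s := v$ and $t$ yields a vertex $w \in I(v,t) \cap C(G)$ with $t \in F(w)$, i.e., $d(w,t) = e(w) = \rad(G)$. Since $w$ lies on a shortest $v$–$t$ path, $d(v,w) = d(v,t) - d(w,t) = e(v) - \rad(G)$, and therefore
\[
  d(v,w) + e(w) = (e(v) - \rad(G)) + \rad(G) = e(v) \le \diam(G).
\]
Thus $w \in C(G)$ covers $v$ in the diameter-certificate sense. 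Since $v$ was arbitrary, $C(G)$ is a diameter certificate.

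No real obstacle remains: the only nontrivial ingredient is Proposition~\ref{prop:chordal}(i), which guarantees that in a chordal graph with $\diam(G)<2\rad(G)$ every vertex admits a furthest vertex reachable through a central vertex that is itself furthest from it. That structural property of chordal graphs is exactly what makes each $v$ a tight upper-bound neighbor of some center, which is precisely the diameter-certificate condition. Note that this covering need not be of minimum size, matching the parenthetical caveat in the statement.
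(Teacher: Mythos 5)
Your proof is correct and follows essentially the same route as the paper: the case $\diam(G)<2\rad(G)$ is handled by Proposition~\ref{prop:chordal}(i) exactly as you do, and the case $\diam(G)=2\rad(G)$ is disposed of separately by the observation that any single center already covers everything. Your treatment of the latter case is in fact slightly more direct than the paper's remark about diametral pairs, but the substance is identical.
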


\begin{proof} It follows from Proposition \ref{prop:d=2r-any} and Proposition \ref{prop:chordal}{\it (i)}. 
\end{proof}

\begin{proposition}\label{prop:chordal-cert-ecc}   For every chordal graph $G$, the set $C^1(G)$ is a tight upper certificate of $G$ (not necessarily a smallest one).
\end{proposition}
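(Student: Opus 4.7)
The plan is to derive the statement as an almost immediate corollary of Proposition~\ref{prop:chordal}, which already provides the key structural lemma linking an arbitrary vertex and one of its furthest vertices through an intermediate vertex of low eccentricity that lies on a shortest path between them.

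First, I would unfold the definition of a tight upper certificate: $U\subseteq V$ is tight if and only if, for every $u\in V$, there exists $w\in U$ with $e(u)=d(u,w)+e(w)$. So the goal reduces to showing that for each $u\in V$ one can find some $w\in C^1(G)$ which is a tight upper certificate for $u$ (in the terminology of Proposition~\ref{prop:upperopt}, i.e., $u\preceq w$).

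Next, given $u\in V$, I would pick any $t\in F(u)$, so that $d(u,t)=e(u)$. I then apply Proposition~\ref{prop:chordal} with $s=u$: in the case $\diam(G)<2\rad(G)$ part (i) yields a vertex $w\in I(u,t)\cap C(G)\subseteq I(u,t)\cap C^1(G)$ with $t\in F(w)$, and in the case $\diam(G)=2\rad(G)$ part (ii) directly yields $w\in I(u,t)\cap C^1(G)$ with $t\in F(w)$. In either case I obtain $w\in C^1(G)$ lying on a shortest $(u,t)$-path with $t$ furthest from $w$.

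The conclusion follows by combining the two consequences of $w\in I(u,t)$ and $t\in F(w)$: the first gives $d(u,t)=d(u,w)+d(w,t)$, and the second gives $d(w,t)=e(w)$. Since $d(u,t)=e(u)$, we obtain $e(u)=d(u,w)+e(w)$, which exactly witnesses that $w$ is a tight upper certificate for $u$. Hence $C^1(G)$ is a tight upper certificate of $G$. There is essentially no obstacle left once Proposition~\ref{prop:chordal} is in hand; the whole argument is a bookkeeping exercise that translates the structural statement ``a central or near-central vertex lies on a shortest path to a furthest vertex and shares that furthest vertex'' into the equality defining tight upper certificates. The parenthetical remark ``not necessarily a smallest one'' requires no extra work, since the statement only asserts that $C^1(G)$ is some tight upper certificate.
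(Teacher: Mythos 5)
Your proof is correct and follows exactly the route the paper takes: the paper's own proof is a one-line remark that the statement "follows from Proposition~\ref{prop:chordal} and the definition of a tight upper certificate," and your write-up simply spells out that deduction (choosing $t\in F(u)$, invoking part (i) or (ii) according to whether $\diam(G)<2\rad(G)$ or $\diam(G)=2\rad(G)$, and combining $w\in I(u,t)$ with $t\in F(w)$ to get $e(u)=d(u,w)+e(w)$).
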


\begin{proof} The statement follows from Proposition \ref{prop:chordal} and the definition of a tight upper certificate.
\end{proof}

So, it is interesting that if the center 
$C(G)$ is known for a chordal graph $G$ then its diameter can be computed in $O(|C(G)|m)$ time. 
However, there is no way to bound the cardinality of the set $C(G)$ in an arbitrary chordal graph $G$.  In fact $C(G)$ may contain $n-2$ vertices in some chordal graphs. To construct such a graph $G$, take a complete graph $K_{n-2}$ on $n-2$ vertices. Add two new vertices $u$ and $v$ adjacent to all vertices of $K_{n-2}$ but not to each other. It is easy to see that $C(G)$ contains exactly the vertices of $K_{n-2}$.

Nevertheless,  it is known that for every chordal graph $G$ and any two vertices $x,y$ from $C(G)$, $d(x,y)\leq 3$ holds \cite{Ch88}. This suggest the following approach for computing the diameter of a chordal graph $G=(V,E)$. \\
\indent\indent\indent - Use the linear-time algorithm from \cite{ChDr94} to find a central vertex $c$ of $G$.  \\
\indent\indent\indent - Set $C_3:=\{x\in V: d(x,c)\le 3\}$. /* $C(G)\subseteq C_3$ */ \\
\indent\indent\indent - Find a vertex $p$ such that $e^{C_3}(p)$ is maximum. \\
\indent\indent\indent - Report $e(p)$ as the diameter value. 

The complexity of this approach is $O(|C_3|m)$. As a consequence, we have that when the vertex degrees are bounded in a chordal graph by a constant $\Delta$ then its diameter can be computed in linear time (as $|C_3|$ is bounded by a constant  $\Delta^3$). We are not aware if such a result was known before. Note also that in general chordal graphs the cardinality of $C_3$ cannot be bounded by a constant since otherwise the diameter of an arbitrary chordal graph could be computer in linear time, refuting the Orthogonal Vectors Conjecture \cite{CDHP01,RV13}.


  From the proof of Proposition \ref{prop:chordal} it follows also that, for every vertex $v$ with $e(v)=\rad(G)+1$, $d(v,C(G))\leq 2$ holds.  This suggests the following approach for computing the eccentricities of all vertices of a chordal graph $G=(V,E)$. \\
\indent\indent\indent - Use the linear-time algorithm from \cite{ChDr94} to find a central vertex $c$ of $G$.  \\
\indent\indent\indent - Set $C_5:=\{x\in V: d(x,c)\le 5\}$. /* $C^1(G)\subseteq C_5$ */ \\
\indent\indent\indent - For every vertex $v\in V$ report $e(v)=\min_{c\in C_5} d(v,c)+e(c)$.

The complexity of this approach is $O(|C_5|m)$. As a consequence, we have that when the vertex degrees are bounded in a chordal graph by a constant $\Delta$ then the eccentricities of all its vertices can be computed in linear time (as $|C_5|$ is bounded by a constant  $\Delta^5$). We are not aware if such a result was known before.

Summarizing, we have the following result.


\begin{theorem} \label{th:all-ecc-chord}
Let $G=(V,E)$ be a chordal graph with $m$ edges and whose vertex degrees are bounded by a constant. Then, eccentricities of all vertices of $G$ can be computed in total $O(m)$ time. 
\end{theorem}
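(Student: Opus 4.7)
The plan is to follow verbatim the three-step algorithm sketched just before the theorem statement, and to verify the two facts on which its correctness hinges.

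Step 1: invoke the linear-time algorithm of \cite{ChDr94} to find a central vertex $c$ of $G$, then perform a BFS from $c$ to identify the ball $C_5 = B[c,5]$. Step 2: for every $x \in C_5$, run a BFS from $x$; this produces the distance vector $(d(v,x))_{v \in V}$ and, as a byproduct, $e(x) = \max_{v \in V} d(v,x)$. Step 3: for every $v \in V$, output $e(v) := \min_{x \in C_5}\bigl(d(v,x) + e(x)\bigr)$.

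The correctness rests on the inclusion $C^1(G) \subseteq C_5$ combined with the fact that $C^1(G)$ is a tight upper certificate, i.e., for every $v \in V$ there is some $x \in C^1(G)$ with $e(v) = d(v,x) + e(x)$ (Proposition~\ref{prop:chordal-cert-ecc}). Given these two ingredients, the formula of Step 3 yields exactly $e(v)$ for every $v$: it is an upper bound by the triangle inequality, and it is achieved by the tight certificate $x \in C^1(G) \subseteq C_5$. To establish $C^1(G) \subseteq C_5$, I would combine two facts already available in the paper. First, the proof of Proposition~\ref{prop:chordal} shows that every vertex $v$ of eccentricity $\rad(G)+1$ satisfies $d(v, C(G)) \le 2$. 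Second, the classical result of \cite{Ch88} cited earlier gives $\diam(C(G)) \le 3$. Thus for any $v \in C^1(G)$, either $v \in C(G)$ and $d(v,c) \le 3$, or $v$ has eccentricity $\rad(G)+1$ and there is $c' \in C(G)$ with $d(v,c') \le 2$, whence $d(v,c) \le d(v,c') + d(c',c) \le 2 + 3 = 5$.

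For the complexity, the assumption that vertex degrees are bounded by a constant $\Delta$ is used exactly once: the ball $C_5 = B[c,5]$ has cardinality at most $1 + \Delta + \Delta^2 + \dots + \Delta^5 = O(1)$. Step 1 is linear time by \cite{ChDr94} and the subsequent BFS from $c$ is $O(m)$. Step 2 performs $|C_5| = O(1)$ BFS traversals, for a total of $O(|C_5| \cdot m) = O(m)$ time. Step 3 evaluates, for each of the $n \le m+1$ vertices, a minimum over $|C_5| = O(1)$ terms, hence $O(m)$ time overall.

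I do not anticipate a genuine obstacle, since the heavy lifting has already been carried out in Propositions~\ref{prop:chordal} and~\ref{prop:chordal-cert-ecc} and in the cited bound on $\diam(C(G))$. The only delicate point is to make explicit how the $\le 2$ bound from the proof of Proposition~\ref{prop:chordal} and the $\le 3$ bound from \cite{Ch88} combine to give $C^1(G) \subseteq B[c,5]$, so that the algorithm provably inspects a superset of a tight upper certificate while remaining of constant size under the bounded-degree hypothesis.
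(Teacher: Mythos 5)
Your proposal is correct and follows essentially the same route as the paper: find a center $c$ via \cite{ChDr94}, take $C_5=B[c,5]$, run one BFS per vertex of $C_5$, and output $e(v)=\min_{x\in C_5}\bigl(d(v,x)+e(x)\bigr)$, with correctness resting on Proposition~\ref{prop:chordal-cert-ecc} and the inclusion $C^1(G)\subseteq C_5$. The way you justify that inclusion --- combining the bound $d(v,C(G))\le 2$ for vertices $v$ with $e(v)=\rad(G)+1$ (from the proof of Proposition~\ref{prop:chordal}) with $\diam(C(G))\le 3$ from \cite{Ch88} --- is exactly the combination the paper intends, just spelled out more explicitly.
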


By Proposition \ref{prop:chordal-cert-diam}, the center $C(G)$ of a chordal graph $G$ is a diameter certificate of $G$. Next we will show that the set of all diametral vertices of a chordal graph $G$ forms a radius certificate of $G$. 

Let $D^k(G):=\{v\in V: e(v)\ge diam(G)-k\}$.
It is known that for every vertex $v$ of a chordal graph $G$ there is  a vertex $u\in F(v)$ with $e(u)\ge diam(G)-1$ \cite{DrNiBr97}. Hence, the set $D^1(G)$  contains the output set $L$ of Algorithm~\ref{alg:rad} and, therefore, it gives already a radius certificate for a chordal graph $G$. In fact, we can prove a stronger result.   

\begin{proposition}\label{prop:chordal--d=2r-2} For every chordal graph $G$, the set $D(G):=D^0(G)$ is a radius certificate of $G$ (not necessarily a smallest one).
\end{proposition}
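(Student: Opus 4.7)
The plan is a case analysis comparing $\diam(G)$ with $2\rad(G)-1$. If $\diam(G)\ge 2\rad(G)-1$, Proposition~\ref{prop:d>2r-2} (valid for arbitrary graphs) directly states that any diametral pair is already a minimum radius certificate, and since $D(G)$ contains such a pair, it is a radius certificate.

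The remaining case $\diam(G)\le 2\rad(G)-2$ genuinely occurs for chordal graphs (for instance the 3-sun has $\rad=\diam=2$, and attaching a pendant to each outer vertex of the 3-sun yields a chordal graph with $\rad=3$ and $\diam=4$), so chordality is essential. The key idea I would try is to establish the structural lemma that in a chordal graph, $F(u)\cap D(G)\neq\emptyset$ for every $u\in V$: some furthest vertex from $u$ is itself diametral. This strengthens the classical bound of~\cite{DrNiBr97} (which only guarantees a furthest vertex of eccentricity $\ge\diam(G)-1$) to $=\diam(G)$, and once established it immediately yields the proposition, since choosing $y\in F(u)\cap D(G)$ gives $d(u,y)=e(u)\ge\rad(G)$.

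To prove the lemma, take $v\in F(u)$ with $e(v)$ maximum; by~\cite{DrNiBr97}, $e(v)\ge\diam(G)-1$, so the only difficulty is when $e(v)=\diam(G)-1$. In that situation I would pick $z\in F(v)$ and apply Proposition~\ref{prop:chordal}(i) (valid since $\diam<2\rad$) to obtain a central vertex $w\in I(v,z)\cap C(G)$ with $z\in F(w)$, so $d(w,z)=\rad(G)$. Lemma~\ref{lm:victor}, applied to a neighbor $v'$ of $v$ on a shortest $v$-$z$ path together with the configuration involving $u$ and $w$, should then force a distance inequality that, combined with the maximality of $e(v)$ in $F(u)$, either exhibits a vertex in $F(u)$ of eccentricity $\diam(G)$ or contradicts $v\in F(u)$. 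The main obstacle is the bookkeeping around Lemma~\ref{lm:victor}: it requires a very specific adjacency configuration (a neighbor on a shortest path with the right interval placements), which must be extracted from the BFS structure around $u$ while preserving the property that the newly constructed vertex remains at distance $e(u)$ from $u$. A weaker fallback still sufficient for the proposition is to show that some diametral vertex lies on a shortest $u$-$v$ path for some $v\in F(u)$, which by the triangle inequality again gives $d(u,y)\ge\rad(G)$.
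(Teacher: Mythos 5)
Your reduction of the easy case is the same as the paper's: when $\diam(G)\ge 2\rad(G)-1$, Proposition~\ref{prop:d>2r-2} applies to any diametral pair, and since chordal graphs satisfy $\diam(G)\ge 2\rad(G)-2$, only the case $\diam(G)=2\rad(G)-2$ remains. But in that remaining case --- which is where all the content of the proposition lives --- your argument has a genuine gap. The key lemma you propose, that $F(u)\cap D(G)\neq\emptyset$ for every $u$, is strictly stronger than what the paper proves (the paper only shows that \emph{some} vertex of $D(G)$ is at distance $\ge\rad(G)$ from $u$, not that a \emph{furthest} vertex of $u$ is diametral), and you do not actually prove it: the decisive step is deferred with ``should then force a distance inequality,'' and you yourself flag the bookkeeping around Lemma~\ref{lm:victor} as the unresolved obstacle. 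Nothing in your sketch rules out a vertex $u$ all of whose furthest vertices have eccentricity exactly $\diam(G)-1$, which is precisely the situation the tight bound of~\cite{DrNiBr97} leaves open; so the lemma may well be false, and in any case it is unestablished. The fallback is also a non sequitur: if a diametral vertex $y$ lies on a shortest $u$--$v$ path, it can lie arbitrarily close to $u$, and the triangle inequality only yields $d(u,y)+d(u,y')\ge\diam(G)=2\rad(G)-2$ for a diametral partner $y'$ of $y$, hence $\max(d(u,y),d(u,y'))\ge\rad(G)-1$, one short of what is needed.

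For comparison, the paper's proof of the case $\diam(G)=2\rad(G)-2$ is a contradiction argument built on a layered extremal choice: among the vertices $u$ with $d(u,t)\le\rad(G)-1$ for all $t\in D(G)$, it takes those of minimum eccentricity, and among those, one minimizing $\card{F(u)}$. It then passes to a neighbor $w$ of $u$ on a shortest path to some $v\in F(u)$, uses the observation that $u\in I(x,y)$ for every diametral pair $x,y$ to show via Lemma~\ref{lm:victor} that $w$ inherits the bad property, and finally derives contradictions in the two subcases $e(w)=e(u)$ and $e(w)>e(u)$ by two further applications of Lemma~\ref{lm:victor}. None of this machinery, or a workable substitute for it, appears in your proposal, so the proof is incomplete at its central point.
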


\begin{proof} By Proposition \ref{prop:d>2r-2}  and the fact that in any chordal graph $G$, $diam(G)\ge 2rad(G)-2$ \cite{Ch88}, we need to consider only the case when $diam(G)= 2rad(G)-2$. 

Assume that there is a vertex $u$ in $G$ such that $d(u,t)\le rad(G)-1$ for every vertex $t\in D(G)$. Denote by $S$ the set of all such vertices $u$. Denote by $S'$ those vertices from $S$ that have the minimum eccentricity. Finally, denote by $S''$ those vertices $u$ from $S'$ that have 
the smallest number of vertices in $F(u)$. 

Consider a vertex $u\in S''$, a vertex $v\in F(u)$ and a neighbor $w$ of $u$ on a shortest path from $u$ to $v$.
Consider also an arbitrary vertex $x\in D(G)$ and an arbitrary vertex $y\in F(x)\subset D(G)$. Since $d(x,y)=2rad(G)-2$, we have $d(x,u)=d(y,u)=rad(G)-1$ and hence $u\in I(x,y)$. We claim that $d(w,x)\le rad(G)-1$ as well. Indeed, suppose that $d(w,x)>rad(G)-1$. Then, $u\in I(x,w)$ and $w\in I(u,v)$. By Lemma  \ref{lm:victor},  $d(x,v)\ge d(x,u)+d(w,v)\ge rad(G)-1+e(u)-1\ge 2rad(G)-2$, i.e., $d(x,v)=2rad(G)-2$ (hence $v$ must belong to $D(G)$) and $d(u,v)=e(u)=rad(G)$. The latter contradicts with the choice of $u$ (as $u\in S$). Thus, $d(w,x)\le rad(G)-1$ for every vertex $x\in D(G)$, i.e., $w\in S$. 

As $u\in S'$, $e(u)\leq e(w)$. First assume that $e(u)=e(w)$, i.e., $w\in S'$. 
Since $v\in F(u)$ and $v\notin F(w)$ (note that $d(w,v)=d(u,v)-1=e(u)-1\leq e(w)-1$), by the choice of $u$ (as $u\in S''$), there must exist a vertex $t\in V$ such that $t\in F(w)$ and $t\notin F(u)$.
We necessarily have $d(t,u)= e(w)-1\ge rad(G)-1$ and $d(v,w)= e(u)-1\ge rad(G)-1$. 
One can apply now Lemma \ref{lm:victor} to $u\in I(w,t)$ and $w\in I(u,v)$ and get $d(v,t)\ge d(v,w)+d(t,u)\ge 2rad(G)-2=diam(G)$. That is, both $v$ and $t$ must be in $D(G)$, contradicting again with the choice of $u$ (as $u\in S$ and $d(u,v)=e(u)\ge rad(G)$).

Assume now that $e(u)<e(w)$ and consider an arbitrary vertex $t\in F(w)$. Necessarily, $u\in I(w,t)$. Hence again one can apply Lemma \ref{lm:victor} to $u\in I(w,t)$ and $w\in I(u,v)$ and get $d(v,t)\ge d(v,w)+d(t,u)=e(u)-1+e(w)-1\ge 2rad(G)-1>diam(G)$, and a contradiction arises. 

Contradictions obtained prove that for every vertex $u\in V$ there is a vertex $t\in D(G)$ such that $d(u,t)\ge rad(G)$, i.e., $D(G)$ is a radius certificate of $G$.
\end{proof}

\section{Graph analogues of hyperconvex metric spaces}\label{Helly}
In this section, we analyze radius and diameter certificates in Helly graphs and in bipartite Helly graphs. A graph $G=(V,E)$ is {\em Helly} if its family of balls $\mathcal{B}(G):=\{B[v,r]: v\in V, r\in \mathbb{N}_{\ge 0}\}$ satisfies the Helly property (i.e., every collection  of pairwise intersecting balls has a nonempty common intersection). A bipartite graph $G=(X\cup Y, E)$  is {\em bipartite Helly} if its family of half-balls 
$\mathcal{B}_X(G)\bigcup \mathcal{B}_Y(G)$ satisfies the Helly property where $\mathcal{B}_X(G):=\{B[v,r]\cap X: v\in X\cup Y, r\in \mathbb{N}_{\ge 0}\}$ and $\mathcal{B}_Y(G):=\{B[v,r]\cap Y: v\in X\cup Y, r\in \mathbb{N}_{\ge 0}\}$.   
Helly graphs (and bipartite Helly graphs)  are discrete analogues  of {\em hyperconvex metric spaces} (sometimes called also {\em injective metric spaces}). It is well known that every (bipartite) graph  $G$ isometrically embeds into an unique smallest (bipartite) Helly graph $H(G)$ called the {\em injective hull} of $G$~\cite{Dress84,Isbell64}. The diameter of the injective hull $H(G)$ of a graph $G$ is equal to the diameter of $G$ (see, e.g., \cite{DrGu}). The latter may suggest that finding an efficient algorithm for the diameter of a (bipartite) Helly graph $H(G)$ is all what one needs to compute efficiently the diameter of an arbitrary graph $G$ (given $H(G)$). Unfortunately,  the injective hull $H(G)$ of a graph $G$ may have exponentially more vertices than $G$ has (see \cite{GuDrLe22} for several restricted graph classes where injective hulls are exponentially large).  

It was open until recently whether there are truly subquadratic-time algorithms for the radius and diameter problems on Helly graphs.  Such  algorithms were recently presented in~\cite{DrDuGu_Helly_hyp,DuDr21-netw}. One can compute in ${\mathcal O}(m\sqrt{n})$ time for a given $n$-vertex $m$-edge Helly graph  its radius,  diameter and all vertex eccentricities. The algorithms make use of the Helly property and of the unimodality of the eccentricity function in Helly graphs~\cite{Dr_thesis,Dr_Helly}: every vertex of locally minimum eccentricity is a central vertex. 

We will need the following result establishing relationships between the diameter  and the radius. Recall that, for an arbitrary graph $G$, $\rad(G)\le \diam(G)\le 2\rad(G)$ holds. 

\begin{proposition}\label{prop:rad-vs-diam}   For every Helly graph $G$, $\diam(G)\ge 2\rad(G)-1$~\cite{Dr_thesis,Dr_Helly}.  For every bipartite Helly graph $G$, $\diam(G)\ge 2\rad(G)-2$. 
\end{proposition}

\begin{proof} We give here a simple proof for both statements. The cases where $\rad(G)=1$ being obvious, we assume $\diam(G)\ge \rad(G)\ge 2$. 

Let  $G=(V,E)$ be a Helly graph and assume, by way of contradiction, 
$\diam(G)\le 2\rad(G)-2$. Consider in $G$ a system of balls $\mathcal{F}=\{B[v,\rad(G)-1]: v\in V\}$. For every two vertices $x,y$ of $G$, we have  $d(x,y)\le \diam(G)\le 2\rad(G)-2$. Hence, all balls of $\mathcal{F}$ pairwise intersect. By the Helly property, there must exist a vertex $c$ in $G$ which is at distance at most $\rad(G)-1$ from every $v\in V$.  Necessarily, $e(c)\leq \rad(G)-1$, contradicting the definition of the radius of $G$. 

Now consider a bipartite Helly graph $G=(X\cup Y,E)$ and assume, by way of contradiction, $\diam(G)\le 2\rad(G)-3$. 
First assume that $\rad(G)$ is even. Consider in $G$ the system of half-balls $\mathcal{F}_X(G):=\{B[v,\rad(G)-2]\cap X: v\in X\}\bigcup \{B[v,\rad(G)-1]\cap X: v\in Y\}$. 
We claim that all half-balls of $\mathcal{F}_X(G)$ pairwise intersect. 
Indeed, let $u,v \in X \cup Y$ be arbitrary. 
Let $\ell = \left\lfloor \frac{d(u,v)}2 \right\rfloor$, and let $a,b$ be consecutive vertices on a shortest $(u,v)$-path such that $d(u,a) = d(u,b)-1 = \ell$.
Since $G$ is bipartite, and vertices $a$ and $b$ are adjacent, one of the vertices $a$ or $b$ is in $X$. 
Therefore, $B[u,rad(G)-1] \cap B[v,rad(G)-1] \cap X \supseteq B[u,\ell+1] \cap B[v,\ell+1] \cap X \ne \emptyset$.
If furthermore $u \in X$, then $B[u,rad(G)-2] \cap X = B[u,rad(G)-1] \cap X$ because $rad(G)$ is even.
We proceed similarly if $v \in X$, thus proving the claim.
By the Helly property, there must exist a vertex $c$ in $X$ which is at distance at most $\rad(G)-1$ from every $v\in X\cup Y$.  Necessarily, $e(c)\leq \rad(G)-1$, contradicting the definition of the radius of $G$. 
If $\rad(G)$ is odd, then consider in $G$ the system of half-balls $\mathcal{F}_X(G):=\{B[v,\rad(G)-1]\cap X: v\in X\}\bigcup \{B[v,\rad(G)-2]\cap X: v\in Y\}$. Arguing the same way as above,
and noting that for $v \in Y$, we have $B[v,rad(G)-2] \cap X = B[v,rad(G)-1] \cap X$ when $rad(G)$ is odd,
we will get again a vertex $c$ in $X$ which is at distance at most $\rad(G)-1$ from every $v\in X\cup Y$ (i.e., with $e(c)\leq \rad(G)-1$), giving again a contradiction.  
\end{proof}

\begin{proposition}\label{prop:helly-cert}   Every Helly graph $G=(V,E)$ has: 
\begin{itemize}
  \item[$(1)$]   a radius certificate of size at most 2; 
  \item[$(2)$]  a diameter certificate of size at most $\omega(C(G))$ $($the size of a largest clique in the subgraph induced by $C(G))$. 
 \end{itemize}
 \end{proposition}

\begin{proof} By Proposition \ref{prop:d>2r-2}  and Proposition \ref{prop:rad-vs-diam}, we are done with $(1)$.  For  $(2)$, 
by Proposition \ref{prop:d=2r-any} and Proposition \ref{prop:rad-vs-diam}, it remains only to consider the case when $\diam(G)=2\rad(G)-1$. We will show that, in this case, there is a clique $K\subseteq C(G)$ such that, for every $v\in V$, $d(v,K)\le \rad(G)-1$ holds. Consider in $C(G)$ a maximal by inclusion clique $K\subseteq C(G)$ and assume that, for some vertex $u$, $d(u,K)> \rad(G)-1$. As $K\subseteq C(G)$, necessarily, $d(u,c)=\rad(G)$ holds for every $c\in K$.  
Consider in $G$ the system of balls $\{B[v,\rad(G)]: v\in V\}\bigcup\{B[c,1]: c\in K\}\bigcup \{B[u,\rad(G)-1]\}$. Since, for every $u,v\in V$ and $c\in K$, $d(u,v)\le \diam(G)\le 2\rad(G)-1$ and $d(v,c)\le \rad(G)$,  all these balls pairwise intersect. By the Helly property, there is a vertex $c'$ in $G$ which is adjacent to all $c\in K$, at distance $\rad(G)-1$ from $u$, and at distance at most $\rad(G)$ from each $v\in V$. Necessarily, $c'$ belongs to $C(G)$ and $K\cup\{c'\}$ is a clique from $C(G)$ with one more vertex than $K$ has. This contradicts the maximality of $K$, proving that  for every $v\in V$, $d(v,K)\le \rad(G)-1$ holds. Hence, $K$ is a diameter certificate of $G$ since, for every $v\in V$, there is a vertex $c\in K$ such that $d(v,c)+e(c)\le \rad(G)-1+\rad(G)=2\rad(G)-1=\diam(G)$. 
\end{proof}

The upper bound given in Proposition~\ref{prop:helly-cert}(2) is sharp.
For example, in a complete graph, all vertices must be included in the diameter certificate.

\begin{proposition}\label{prop:biphelly-cert}   Every bipartite Helly graph $G=(X\cup Y,E)$ has: 
\begin{itemize}
  \item[$(1)$]  a radius certificate of size at most 4; 
  \item[$(2)$] a diameter certificate of size at most $\omega_{bip}(C(G))$ $($the size of a largest biclique in the subgraph induced by $C(G))$. 
 \end{itemize}
 \end{proposition}

\begin{proof}
$(1)$ The case $diam(G) \ge 2rad(G)-1$ is covered by Proposition~\ref{prop:d>2r-2}.
Therefore, by Proposition~\ref{prop:rad-vs-diam}, we only need to consider the case when $diam(G) = 2rad(G)-2$.
Let $x,x' \in X$ be vertices of $X$ maximizing $d(x,x')$. In the same way, let $y,y' \in Y$ be vertices of $Y$ maximizing $d(y,y')$.
We prove, in what follows, that $L = \{x,x',y,y'\}$ forms a radius certificate.
For that, without loss of generality, assume $d(x,x') \ge d(y,y')$.
Since $G$ is bipartite and $diam(G)$ is even, we get $d(x,x') = diam(G) = 2rad(G)-2$.
In particular, for every vertex $v \in X \cup Y$, we have $\max\{d(v,x),d(v,x')\} \ge rad(G)-1$.
If furthermore $rad(G)$ is even and $v \in X$ ($rad(G)$ is odd and $v \in Y$, resp.), then we get $\max\{d(v,x),d(v,x')\} \ge rad(G)$.
By symmetry, we thus deduce that if $d(y,y') = d(x,x')$, then $L$ indeed forms a radius certificate.
Thus, from now on, suppose $d(y,y') < d(x,x')$, which implies $d(y,y') = diam(G)-2 = 2rad(G)-4$.
In order to complete the proof, it suffices to show that this case can never happen.
Let $H_Y$ be the graph with vertex set $Y$ such that there is an edge between every two vertices of $Y$ that are at distance two in $G$.
We denote by $d_Y$ the distance function in $H_Y$, while we keep denoting by $d$ (without a subscript) the distance function in $G$.
Since $G$ is bipartite, we have $d_Y(u,v) = d(u,v)/2$ for every $u,v \in Y$.
In particular, by the choice of $y,y'$ we get $diam(H_Y) = d_Y(y,y') = d(y,y')/2 = rad(G)-2$.
It was observed in~\cite[Lemma 1]{Duc21} that $H_Y$ is a Helly graph.
If $rad(G)$ is even, then $diam(H_Y)$ is even, and so by Proposition~\ref{prop:rad-vs-diam} we get $rad(H_Y) = diam(H_Y)/2 = rad(G)/2 - 1$.
But then, for any $c_Y \in C(H_Y)$, we would obtain $e_G(c_Y) \le 2rad(H_Y)+1 = rad(G) - 2 + 1 = rad(G) -1 < rad(G)$, which is a contradiction.
As a result, $rad(G)$ must be odd.
By Proposition~\ref{prop:rad-vs-diam}, it implies that $rad(H_Y) = (diam(H_Y)+1)/2 = (rad(G)-1)/2$.
Let $K \subseteq C(H_Y)$ be a clique of $H_Y$ such that $d_Y(v,K) \le rad(H_Y)-1$ for every $v \in Y$, whose existence follows from Proposition~\ref{prop:helly-cert}(2).
Since $K$ is a clique, the half-balls in $\{ B[z,1] \cap X: z \in K \}$ pairwise intersect in $G$.
By the Helly property, there exists a vertex $c_K \in X$ 
such that $K \subseteq N(c_K)$.
But then, $e_G(c_K) \le 1 + \max\{d(v,K) : v \in X \cup Y\} \le 2 + \max\{d(v,K) : v \in Y\} \le 2 + 2(rad(H_Y)-1) = 2rad(H_Y) = rad(G)-1 < rad(G)$, which is again a contradiction.


$(2)$ The case when $\diam(G)=2\rad(G)$ is covered by Proposition \ref{prop:d=2r-any}; any central vertex of $G$ is a diameter certificate.  
Assume $\diam(G)=2\rad(G)-1$.  We will show that, in this case, two adjacent central vertices form a diameter certificate. Note that since   $\diam(G)=2\rad(G)-1$, i.e., the diameter is odd, for any two vertices $u,w$ from $X$ (or from $Y$), $d(u,w)\le 2\rad(G)-2$ holds. Assume $\rad(G)$ is odd. Consider in $G$ the system of half-balls $\mathcal{F}_X(G):=\{B[v,\rad(G)-1]\cap X: v\in X\}$.  These half-balls  pairwise intersect. By the Helly property, there must exist a vertex $c_X$ in $X$  which is at distance at most $\rad(G)-1$ from every $v\in X$  and, hence, at distance at most $\rad(G)$ from every $v\in X\cup Y$. Hence, $c_X\in C(G).$ 
Consider now a system of half-balls $\mathcal{F}_Y(G):=\{B[v,\rad(G)-1]\cap Y: v\in Y\}$ and a half-ball $B[c_X,1]\cap Y$.  All these half-balls  pairwise intersect. By the Helly property, there must exist a vertex $c_Y$ in $Y$  which is adjacent to $c_X$, and which is at distance at most $\rad(G)-1$ from every $v\in Y$. Since  $c_Y$ is  at distance at most $\rad(G)$ from every $v\in X\cup Y$,  $c_Y\in C(G).$ So, we found two adjacent vertices $c_X, c_Y\in C(G)$ such that, for every vertex $v\in X$,  $d(v,c_X)\le \rad(G)-1$ and, therefore, $d(v,c_X)+e(c_X)\le \rad(G)-1+\rad(G)=\diam(G)$, and, for every vertex $v\in Y$,  $d(v,c_Y)\le \rad(G)-1$ and, therefore, $d(v,c_Y)+e(c_Y)\le \rad(G)-1+\rad(G)=\diam(G)$. Hence, $\{c_X, c_Y\}$ is a diameter certificate. The case when $\rad(G)$ is even is similar. We just need to consider first  the system of half-balls $\mathcal{F'}_X(G):=\{B[v,\rad(G)-1]\cap X: v\in Y\}$ which will produce, by the Helly property,  a central vertex $c_X\in X$ which is at distance at most $\rad(G)-1$ from every $v\in Y$. 
Then, we can  consider the system of half-balls $\mathcal{F'}_Y(G):=\{B[v,\rad(G)-1]\cap Y: v\in X\}$ plus a half-ball $B[c_X,1]\cap Y$ which together will produce a central vertex $c_Y\in Y$ which is at distance at most $\rad(G)-1$ from every $v\in X$. 
Again, $\{c_X, c_Y\}\subseteq C(G)$ forms a diameter certificate. 

Finally assume $\diam(G)=2\rad(G)-2$.  We will show that, in this case, a biclique consisting of central vertices of $G$ forms a diameter certificate. Assume  $\rad(G)$ is odd (the case, when $\rad(G)$ is even, is very similar as above). As in the case $\diam(G)=2\rad(G)-1$, we can get two adjacent vertices $c_X\in C(G)\cap X$ and $c_Y\in C(G)\cap Y$ such that, for every vertex $v\in X$,  $d(v,c_X)\le \rad(G)-1$ and, for every vertex $v\in Y$,  $d(v,c_Y)\le \rad(G)-1$.  Let $K_{p,q}$ be a maximal by inclusion biclique in $C(G)$ containing edge $c_Xc_Y$. We will show that $d(v,K_{p,q})\le \rad(G)-2$ holds for every $v\in X\cup Y$. Assume there is a vertex $v'$ such that $d(v',K_{p,q})> \rad(G)-2$. Without loss of generality, say $v'\in X$. Since $d(v',c_X)\le \rad(G)-1$, we have  $d(v',K_{p,q})= \rad(G)-1$. Consider a system of half-balls $\{B[v,\rad(G)-1]\cap Y: v\in Y\}\bigcup \{B[c,1]\cap Y: c\in K_{p,q}\cap X\}\bigcup \{B[v',\rad(G)-2]\cap Y\}$. Since the diameter $\diam(G)=2\rad(G)-2$ is even and, for every $u\in Y$, $d(v',u)$ is odd, necessarily, $d(v',u)\le 2\rad(G)-3$ and $B[u,\rad(G)-1]$ intersects $B[v',\rad(G)-2]$. Furthermore, since for every $c\in K_{p,q}\cap X$, we have $d(v',c)\le \rad(G)$, and $d(v',c)$ is  even while $\rad(G)$ is odd, we necessarily have $d(v',c)\le \rad(G)-1$, and $B[c,1]$ intersects $B[v',\rad(G)-2]$. Since we have $c_Y\in B[v,\rad(G)-1]\cap B[c,1]$ for all $v\in Y$ and $c\in K_{p,q}\cap X$,  we conclude that all half-balls pairwise intersect. By the Helly property, there must exist a vertex $c'$ in $Y$ which is adjacent to every $c\in K_{p,q}\cap X$, at distance at most $\rad(G)-2$ from $v'$, and at distance at most $\rad(G)-1$ from every $y\in Y$. That is, $c'\in C(G)\cap Y$.
Moreover, $K_{p,q}\cup\{c'\}$ is a biclique in $C(G)$ containing one more vertex than $K_{p,q}$. This contradicts the maximality of $K_{p,q}$, proving that $d(v,K_{p,q})\le \rad(G)-2$ holds for every $v\in X\cup Y$. Therefore, the biclique  $K_{p,q}$ is a diameter certificate as for every $v\in X\cup Y$ there is a vertex $c$ in  $K_{p,q}$ such that $d(v,c)+e(c)\le \rad(G)-2+\rad(G) =\diam(G).$  
\end{proof}

Again, we can observe that the upper bound stated in Proposition~\ref{prop:biphelly-cert}(2) is sharp.
For instance, in a complete bipartite graph, all vertices must be included in the diameter certificate.

\begin{corollary}
    For every (bipartite) Helly graph $G$, we can compute its radius in $O(m\log^3{n})$ time with high probability.
\end{corollary}
\begin{proof}
    It directly follows from Theorem~\ref{thm:rad-cert-random}, in combination with Propositions~\ref{prop:helly-cert} and~\ref{prop:biphelly-cert}.
\end{proof}

\guillaume{
We end up observing that our result for Helly graphs can be generalized to an upper bound on the size of a smallest radius certificate.
More precisely, the \emph{Helly number} of a graph $G$ is the least integer $k \ge 2$ such that for every family $\mathcal{F}$ of balls in $G$, if every $k$ balls in $\mathcal{F}$ have a nonempty common intersection, then some node must be contained in every ball in $\mathcal{F}$. Helly graphs are exactly the graphs of Helly number two.
\begin{proposition}
    For any graph $G$, there exists a radius certificate of size at most its Helly number.
\end{proposition}
\begin{proof}
    Let $r = \rad(G)$.
    We consider a smallest subset $L$ of nodes such that $\bigcap\set{B[x,r-1] : x\in L} = \emptyset$.
    If $G$ has Helly number $k$, then $L$ has size at most $k$.
    Furthermore, $e_L(v) \ge r$ for every node $v$, and so, $L$ is a radius certificate.
\end{proof}
\begin{corollary}
    For every graph $G$ with Helly number at most $k$, we can compute its radius in $O(km\log^3{n})$ time with high probability.
\end{corollary}
This improves on~\cite{ducoffe2023distance}, where an $O(m\sqrt{kn\log{n}})$-time algorithm is presented.
}

\section{Graphs of bounded asteroidal number}\label{Asteroid}

We finally give 
an application of this paper's framework to parameterized complexity.
Namely, an {\em asteroidal set} in a graph $G=(V,E)$ is an independent set $A \subseteq V$ such that, for every $a \in A$, all vertices of $A \setminus \{a\}$ must be in a same connected component of $G \setminus B[a,1]$.
The {\em asteroidal number} of $G$ is the largest cardinality of its asteroidal sets.
In particular, the graphs of asteroidal number one are exactly the complete graphs.
The graphs of asteroidal number at most two are called AT-free graphs, and they contain cocomparability graphs, interval graphs and permutation graphs amongst their interesting subclasses.
In~\cite{Duc21JGT}, a deterministic $O(m^{3/2})$-time algorithm for computing all eccentricities in an AT-free graph is presented.
In~\cite{Duc22}, a deterministic algorithm for computing the diameter of graphs with asteroidal number $k$ in $O(k^3m^{3/2})$ time is given.
However, the complexity of computing the radius in graphs of asteroidal number $k \ge 3$ was open until this work.
We prove the following result in this section.
\begin{theorem}\label{thm:rad-asteroidal-number}
    For every graph $G=(V,E)$ with asteroidal  number at most $k$, its radius and a central vertex can be computed in $O(km^{3/2})$ time. 
\end{theorem}

Roughly, the result follows from replacing in Algorithm~\ref{alg:rad} the notion of antipode with that of {\em extremity}.
A vertex $v$ in a graph $G$ is called an extremity if and only if $G \setminus B[v,1]$ is connected.
There are graphs with no extremities, such as complete bipartite graphs.
Therefore, in what follows, we need an additional assumption on the graphs considered.
Namely, a module in a graph $G=(V,E)$ is a vertex subset $M \subseteq V$ such that every vertex of $V \setminus M$ is either adjacent to all of $M$, or nonadjacent to all of it.
A graph $G=(V,E)$ is prime if and only if it has no other modules but $\emptyset, V$ and $\{v\}$, for every $v \in V$.

\begin{lemma}[see Theorem 14 in~\cite{CDP19}]\label{lem:rad-prime}
    Computing the radius of any graph $G$ can be reduced in linear time to computing the radius of one of its prime induced subgraphs $G'$.
\end{lemma}


Hence, we only need to consider prime graphs of bounded asteroidal number.
We need the following additional results:

\begin{lemma}[see Lemma $19$ in~\cite{Duc22}]\label{lem:furthest-extrem}
    If $G=(V,E)$ is prime and $n \ge 3$, then for every vertex $v$, there exists an extremity $u$ such that $d(u,v) = e(v)$, which can be computed in linear time. 
\end{lemma}

\begin{lemma}[see Lemma $7$ in~\cite{Duc22}]\label{lem:num-extrem}
    If $G=(V,E)$ has asteroidal number at most $k$, then it contains at most $O(k\sqrt{m})$ extremities.
\end{lemma}


We are now ready to prove the main result of this section, namely:

\begin{proof}[of Theorem~\ref{thm:rad-asteroidal-number}]
    We may assume that $G$ is prime (by Lemma~\ref{lem:rad-prime}) and that $n \ge 3$.
    We modify Algorithm~\ref{alg:rad} as follows: at every iteration, we add in $L$ an extremity $a$ such that, for the vertex $u$ considered, $d(u,a) = e(u)$. 
    By Lemma~\ref{lem:furthest-extrem}, such an extremity $a$ always exists, and it can be computed in linear time.
    Doing so, we still have the invariant $e_L(u) = e(u)$ for every $u \in K$.
    In particular, we can prove as before (see the proof of Theorem~\ref{th:rad}) that this algorithm correctly computes the radius and a central vertex.
    By Lemma~\ref{lem:num-extrem}, the number of iterations is in $O(k\sqrt{m})$ at most.
    Therefore, the runtime of the algorithm is in $O(km^{3/2})$.
\end{proof}

\section{Conclusion}


In this paper we extensively study this idea of small certificates for radius, diameter and all-eccentricities. It gives us another view point on existing algorithms, somehow explaining why they are practically so efficient. It also leads to new algorithmic ideas to overcome the quadratic barrier for radius, diameter and all eccentricities. We are convinced that the celebrated notion of certificate can still lead to fruitful developments in the study of other problems in $P$ and their complexity barrier.









\bibliographystyle{plainurl}
\bibliography{biblio_clean}


\end{document}